\documentclass[11pt,a4paper]{article}

\usepackage[T1]{fontenc}
\usepackage[utf8]{inputenc}
\usepackage{lmodern}
\usepackage[margin=25mm]{geometry}
\usepackage{amsmath,amssymb,amsthm}
\usepackage{ifpdf}
\usepackage[hidelinks]{hyperref}
\usepackage[capitalize,nameinlink]{cleveref}
\usepackage{microtype}

\numberwithin{equation}{section}
\newtheorem{theorem}{Theorem}[section]
\newtheorem{lemma}[theorem]{Lemma}
\newtheorem{corollary}[theorem]{Corollary}
\newtheorem{proposition}[theorem]{Proposition}
\newtheorem{definition}[theorem]{Definition}

\newenvironment{keywords}
  {\par\medskip\noindent\textbf{Key words.}\ \ignorespaces}
  {\par\medskip}
\newenvironment{MSCcodes}
  {\par\noindent\textbf{MSC codes.}\ \ignorespaces}
  {\par\medskip}
\newcommand{\email}[1]{\protect\href{mailto:#1}{#1}}
\newcommand{\funding}[1]{\textbf{Funding:}\ #1}
\date{}
\usepackage{lipsum}
\usepackage{amsfonts}
\usepackage{graphicx}
\usepackage{epstopdf}
\usepackage{algorithmic}
\usepackage[abbreviations]{glossaries-extra}
\ifpdf
  \DeclareGraphicsExtensions{.eps,.pdf,.png,.jpg}
\else
  \DeclareGraphicsExtensions{.eps}
\fi

\usepackage{enumitem}
\setlist[enumerate]{leftmargin=.5in}
\setlist[itemize]{leftmargin=.5in}

\newcommand{\Rinf}{\mathbb{R} \cup \{+\infty\}}

\DeclareMathOperator{\intdom}{int\,dom}
\DeclareMathOperator*{\argmin}{arg\,min}
\DeclareMathOperator{\rightprox}{\overrightarrow{\operatorname{prox}}}
\DeclareMathOperator{\leftprox}{\overleftarrow{\operatorname{prox}}}

\newabbreviation{nll}{NLL}{negative log-likelihood}
\newabbreviation{em}{EM}{expectation--maximization}
\newabbreviation{mm}{MM}{majorization--minimization}
\newabbreviation{bpg}{BPG}{Bregman proximal gradient}
\newabbreviation{mlem}{MLEM}{maximum-likelihood expectation--maximization}
\newabbreviation{kkt}{KKT}{Karush--Kuhn--Tucker}
\newabbreviation{kl}{KL}{Kullback--Leibler divergence}
\newabbreviation{klp}{K\L{}}{Kurdyka--\L{}ojasiewicz}
\newabbreviation{tv}{TV}{total variation}
\newabbreviation{pet}{PET}{positron emission tomography}
\newabbreviation{ct}{CT}{computed tomography}
\newabbreviation{psf}{PSF}{point spread function}
\newabbreviation{psnr}{PSNR}{peak signal-to-noise ratio}
\newabbreviation{osl}{OSL}{one-step-late}
\newabbreviation{mu}{MU}{multiplicative updates}
\newabbreviation{mapem}{MAP-EM}{maximum a posteriori expectation--maximization}
\newabbreviation{lsc}{lsc}{lower semi-continuous}
\newabbreviation{pdhg}{PDHG}{primal--dual hybrid gradient}
\newabbreviation{snr}{SNR}{signal-to-noise ratio}
\newabbreviation{nrmse}{NRMSE}{normalized root mean square error}
\newabbreviation{rc}{RC}{recovery coefficient}
\newabbreviation{rbpg}{R-BPG}{right-Bregman proximal gradient}

\glsdisablehyper

\makeatletter
\DeclareRobustCommand{\thmcite}[2]{%
  \lbrack
  \begingroup\def\@cite##1##2{##1}\cite{#1}\endgroup
  ,\ #2%
  \rbrack
}
\makeatother

\theoremstyle{remark}

\newtheorem{assumption}[theorem]{Assumption}
\crefname{hypothesis}{Hypothesis}{Hypotheses}
\theoremstyle{plain}

\theoremstyle{remark}

\crefname{fact}{Fact}{Facts}

\title{Right Bregman proximal gradient with application to Poisson inverse problems\thanks{\funding{This work was partially funded by the PRIMES LABEX (ANR-11-LABX-
0063) at the University of Lyon, within the "Investissements d'Avenir”
program (ANR-11-IDEX-0007).}}}

\author{Thibaut Modrzyk\thanks{INSA-Lyon, CREATIS UMR 5220 U1294, Villeurbanne, F-69621, France
  (\email{thibaut.modrzyk@creatis.insa-lyon.fr}, \email{voichita.maxim@creatis.insa-lyon.fr}).}
\and Elie Bretin\thanks{INSA-Lyon, CNRS UMR 5208, Institut Camille Jordan, Villeurbanne, F-69621, France
  (\email{elie.bretin@insa-lyon.fr}).}
\and Voichita Maxim\footnotemark[2]}

\usepackage{amsopn}

\hypersetup{
  pdftitle={Right Bregman proximal gradient with application to Poisson inverse problems},
  pdfauthor={Thibaut Modrzyk, Elie Bretin, Voichita Maxim}
}

\begin{document}

\maketitle

\begin{abstract}
  We introduce a novel Bregman proximal algorithm for convex composite optimization by applying the standard \glsxtrfull{bpg} method to a mirror-coordinate reparameterization of the objective.  
  In primal variables, the resulting algorithm alternates between a preconditioned gradient step followed by a right Bregman proximal update.
  Our analysis relies on relative smoothness holding in the mirror coordinates instead of primal coordinates.
  Under this condition, we prove monotonic decrease of the objective in the general convex setting.
  We then specialize the method to Poisson inverse problems using weighted negative entropy as the potential.
  The resulting scheme recovers the classical Richardson--Lucy multiplicative updates and extends them to general convex regularizers.
  Building on a recent convergence analysis of multiplicative updates, we establish a sublinear convergence rate in function values for the Poisson setting.
  Finally, we demonstrate the performance of the regularized algorithm on several imaging inverse problems with Poisson-distributed observations.
  The code is publicly available at \url{https://github.com/Tmodrzyk/MU-Bregman}.
\end{abstract}

\begin{keywords}
  Poisson inverse problems, multiplicative updates, Bregman proximal gradient, relative smoothness, Kullback--Leibler divergence
\end{keywords}

\begin{MSCcodes}
  65R32, 65K10, 68U10.
\end{MSCcodes}

\section{Introduction}
\label{sec:introduction}

Poisson inverse problems arise in a wide range of applications, including astronomy, fluorescence microscopy \cite{berteroInverseImagingPoisson2018}, photon-limited imaging \cite{salmonPoissonNoiseReduction2014a} and emission tomography \cite{sheppMaximumLikelihoodReconstruction1982b}.
The goal is to recover $x \in \mathbb{R}^n_+$ from measurements $y \in \mathbb{R}^m_+$ modeled as:
\begin{equation}
    y \sim \mathcal{P}(Ax)
\end{equation}
where $\mathcal{P}$ is the Poisson distribution.
Depending on the problem at hand, the nonnegative matrix $A \in \mathbb{R}^{m \times n}_+$ can represent a convolution operation, the Radon transform, or any other linear image degradation.
The forward operator $A$ is often ill-conditioned, making reconstruction sensitive to measurement noise.
A standard variational approach to solve Poisson inverse problems consists in minimizing the Poisson data-fidelity $f: \mathbb{R}^n_+ \to \mathbb{R} \cup \left\{+\infty\right\}$:
\begin{equation}
  f(x) = \sum_{j=1}^m \left[ (Ax)_j - y_j \log (Ax)_j \right]
  \label{eq:poisson-nll}
\end{equation}
which coincides, up to constants independent of $x$, with the \glsxtrfull{nll} of the Poisson model.
It is also equal, up to additive constants, to the generalized \glsxtrfull{kl} between $y$ and $Ax$.
For nonnegative vectors $u,v\in\mathbb{R}^m_+$, its expression is given by:
\begin{equation*}
    \text{\glsxtrshort{kl}}(u\parallel v)
    = \sum_{i=1}^m \left[u_i\log\left(\frac{u_i}{v_i}\right)- (u_i - v_i)\right],
\end{equation*}
with the conventions $0\log(0)=0$ and $u_i\log(u_i/0)=+\infty$ for $u_i>0$.

The gradient of the Poisson \glsxtrshort{nll} is locally Lipschitz, but its Lipschitz constant explodes near the boundary of $\mathbb{R}^n_+$.
Moreover, for non-diagonal $A$, the proximal operator of $f$ generally has no closed form, limiting the use of proximal splitting methods.
Still, the Poisson data-fidelity can be minimized by the classical multiplicative updates:
\begin{equation}
  x^{(k+1)} = \frac{x^{(k)}}{s} \odot A^\top \frac{y}{Ax^{(k)}}
  \label{eq:multiplicative-updates}
\end{equation}
where $s = A^\top \mathbf{1}_m$ is the vector of column sums of $A$, $\odot$ denotes the element-wise product, and the division is also taken element-wise.
This algorithm first appeared in the 1970s in astronomical image deconvolution \cite{richardsonBayesianBasedIterativeMethod1972a,lucyIterativeTechniqueRectification1974} and in channel capacity estimation \cite{arimotoAlgorithmComputingCapacity1972,blahutComputationChannelCapacity1972a}.
It was later rediscovered in emission tomography as an instance of the \glsxtrfull{em} algorithm \cite{sheppMaximumLikelihoodReconstruction1982b,dempsterMaximumLikelihoodIncomplete1977}.
In the context of nonnegative matrix factorization \cite{NIPS2000_f9d11525}, it is usually interpreted as a \glsxtrfull{mm} method \cite{langeMMOptimizationAlgorithms2016}.
The multiplicative updates \eqref{eq:multiplicative-updates} are known to converge to a minimizer of the Poisson objective \cite{iusemShortConvergenceProof1992} while preserving the non-negativity constraints.
Additionally, a sublinear $\mathcal{O}(1/k)$ convergence rate in function values was recently established \cite{zhaoConvergenceRateAnalysis2023}.

In practice, however, the maximum likelihood solution tends to produce reconstructions containing artifacts \cite{snyderNoiseEdgeArtifacts1987}.
Early stopping is often used as an implicit regularization strategy, but this generally comes at the cost of reduced reconstruction quality.
Another approach is to introduce a regularization term $g : \mathbb{R}^n_+ \to \mathbb{R} \cup \{+\infty\}$ and solve the composite problem
\begin{equation}
    \min_{x \in \mathbb{R}^n_+} \left\{ f(x) + g(x) \right\}.
    \label{eq:composite-minimization}
\end{equation}
Several methods have been proposed for \eqref{eq:composite-minimization}, including regularized variants of the multiplicative algorithm derived from \glsxtrshort{em} or \glsxtrshort{mm} arguments \cite{ahnGloballyConvergentImage2003,deyRichardsonLucyAlgorithm2006}.
However, these methods are often tailored to specific choices of regularizer and are difficult to generalize.

More recently, Bregman methods have become popular for problems involving functions that are not necessarily $L$-smooth \cite{bauschkeDescentLemmaLipschitz2017,luRelativelySmoothConvex2018}.
These methods extend Euclidean gradient and proximal algorithms by replacing the squared norm with a divergence induced by a convex potential $\varphi$ defined as:
\begin{equation}
  D_\varphi(x \parallel y) = \varphi(x) - \varphi(y) - \langle \nabla \varphi(y), x - y \rangle.
  \label{eq:bregman-divergence}
\end{equation}
This relaxes the usual $L$-smoothness requirement on $f$ to smoothness relative to the geometry induced by $\varphi$.
This framework has been applied to Poisson inverse problems \cite{bauschkeDescentLemmaLipschitz2017}, but the resulting algorithms require a small descent step size, which can lead to slower convergence in practice.
For convolution operators, the Richardson--Lucy algorithm has been analyzed using mirror descent and relative smoothness \cite{aubin-frankowskiMirrorDescentRelative2022a}. 
The present work develops this perspective for the multiplicative updates \eqref{eq:multiplicative-updates} with a general nonnegative linear operator $A$, and extends them to convex regularization.

We formulate a \glsxtrfull{bpg} method in the mirror space defined by $\nabla \varphi(x)$ instead of applying it to the primal variables $x$ as is usually done.
Rewriting the resulting updates in primal variables yields the general scheme, with step size $\tau>0$:
\begin{equation}
  x^{(k+1)} = \rightprox^{\varphi}_{\tau g} \left( x^{(k)} - \tau\left[ \nabla^2 \varphi(x^{(k)}) \right]^{-1} \nabla f(x^{(k)}) \right),
  \label{eq:generalized-updates}
\end{equation}
where $\rightprox^{\varphi}_g$ is the right Bregman proximal operator of $g$ defined as:
\begin{equation}
  \rightprox^{\varphi}_g(x) = \argmin_{u \in \intdom\varphi} \left\{ g(u) + D_\varphi(x \parallel u)\right\}.
  \label{eq:right-bregman-prox}
\end{equation}
The updates \eqref{eq:generalized-updates} are alternating between a preconditioned gradient step reminiscent of Riemannian gradient descent and natural gradient descent \cite{amariInformationGeometryIts2016}, and a right Bregman proximal step.
The appearance of the \emph{right} Bregman proximal operator is unusual as recent developments are mostly based on the \emph{left} Bregman proximal operator.
We name this new scheme \glsxtrfull{rbpg} and propose a first analysis of its convergence properties, proving monotonic decrease of the objective function along the iterates.
Notably, during our analysis, a new smoothness condition arises, distinct from the usual relative smoothness: 
\begin{equation*}
  \nabla^2 f(x)-\nabla^3\varphi(x)\left[\left[\nabla^2\varphi(x)\right]^{-1}\nabla f(x)\right] \preceq L \nabla^2\varphi(x).
\end{equation*} 

We then specialize \glsxtrshort{rbpg} to the Poisson \glsxtrshort{nll} using the weighted Boltzmann--Shannon entropy, rather than the Burg entropy used in the standard relative smoothness approach.
This choice allows a unit step size and recovers the classical multiplicative updates in the unregularized case.
With a convex regularization term $g$, the method becomes
\begin{equation}
  x^{(k+1)} = \rightprox^{\varphi}_g \left( \frac{x^{(k)}}{s} \odot A^\top \frac{y}{Ax^{(k)}} \right).
  \label{eq:generalized-multiplicative-updates}
\end{equation}
We then combine the above Bregman interpretation with the recent convergence analysis of non-regularized multiplicative updates \cite{zhaoConvergenceRateAnalysis2023}, and recover a standard $\mathcal{O}(1/k)$ convergence rate in function values.

\begin{table}[!h]
  \caption{Summary of our contributions in the right column, compared with the existing literature.}
  \label{tab:gradient-method-comparison}
  \centering
  \renewcommand{\arraystretch}{1.8}
  \resizebox{\linewidth}{!}{%
  \begin{tabular}{@{}lccc@{}}
    \hline
    & PGD & BPG \cite{bauschkeDescentLemmaLipschitz2017,luRelativelySmoothConvex2018} & R-BPG (ours) \\
    \hline
    General update
      & $\operatorname{prox}_{\tau g}\!\left[x^{(k)}-\tau\nabla f(x^{(k)})\right]$
      & $\leftprox^\varphi_{\tau g}\!\left[\nabla\varphi^*\!\left(\nabla\varphi(x^{(k)})-\tau\nabla f(x^{(k)})\right)\right]$
      & $\rightprox^\varphi_{\tau g}\!\left[x^{(k)}-\tau[\nabla^2\varphi(x^{(k)})]^{-1}\nabla f(x^{(k)})\right]$ \\
    Smoothness condition
      & $\nabla^2 f(x)\preceq LI$
      & $\nabla^2 f(x)\preceq L\nabla^2\varphi(x)$
      & $\nabla^2 f(x)-\nabla^3\varphi(x)\left[\left[\nabla^2\varphi(x)\right]^{-1}\nabla f(x)\right]\preceq L\nabla^2\varphi(x)$ \\
    Poisson update
      & --
      & $\leftprox^\varphi_{\tau g}\!\left(\dfrac{x^{(k)}}{1+\tau x^{(k)}\odot\nabla f(x^{(k)})}\right)$
      & $\rightprox^\varphi_g\!\left(\dfrac{x^{(k)}}{s}\odot A^\top\dfrac{y}{Ax^{(k)}}\right)$ \\
    Poisson step size
      & --
      & $0<\tau\leq 1/\|y\|_1$
      & $0<\tau\leq 1$ \\
    \hline
  \end{tabular}}
\end{table}

\subsection{Outline of the paper} In section \ref{sec:relative-smoothness} we introduce the relative smoothness framework and its application to Poisson inverse problems.
In section \ref{sec:regularized-scheme} we motivate the introduction of a descent in reparameterized coordinates, and show how it leads to the general scheme \eqref{eq:generalized-updates}.
We then apply it to the Poisson data-fidelity and where it recover the multiplicative updates in the unregularized case, and yields the modified updates \eqref{eq:generalized-multiplicative-updates} in the regularized case.
In section \ref{sec:convergence-analysis-convex} we propose a first analysis of the general scheme \eqref{eq:generalized-updates} under a modified relative smoothness assumption, and later specialize the analysis to the Poisson case to establish convergence rates in function values.
In section \ref{sec:experiments} we conduct numerical experiments to compare the regularized multiplicative updates against existing methods for Poisson inverse problems.
The code is available on Github: \url{https://github.com/Tmodrzyk/MU-Bregman}.
Finally in section \ref{sec:discussions} we discuss the perspectives of this work.

\section{\texorpdfstring{\Glsxtrlong{bpg}}{Bregman proximal gradient} methods}
\label{sec:relative-smoothness}

In this section we present an overview of \glsxtrlong{bpg} methods and their analysis within the relative smoothness framework.
For a thorough introduction to this topic, we refer the reader to the seminal work \cite{bauschkeDescentLemmaLipschitz2017,luRelativelySmoothConvex2018} and to the review \cite{teboulleSimplifiedViewFirst2018}.
We then review how these notions relate to Poisson inverse problems.

\subsection{Relative smoothness}

Let us recall the composite optimization problem:
\begin{equation}
    \min_{x \in \mathbb{R}^n} \left\{ f(x) + g(x) \right\}
    \label{eq:composite-optimization-convex}
\end{equation}
where $f: \mathbb{R}^n \to \mathbb{R} \cup \{+\infty\}$ is a proper, \glsxtrfull{lsc} and differentiable function, while $g: \mathbb{R}^n \to \mathbb{R} \cup \{+\infty\}$ is proper, \glsxtrshort{lsc}, convex, but possibly non-differentiable.
We denote $\Phi = f + g$ the composite objective function.
In order to use a gradient-based method to minimize \eqref{eq:composite-optimization-convex}, it is standard to assume that $f$ is $L$-smooth, meaning that its gradient is Lipschitz continuous with constant $L > 0$.
For convex functions, this property is equivalent to being able to upper bound $f$ by a quadratic function at any point, a property that is often called the descent lemma:
\begin{equation}
  \label{eq:L-smoothness-upper-bound}
  f(u) \leq f(x) + \langle \nabla f(x), u - x \rangle + \frac{L}{2} \|u - x\|^2, \quad \forall u, x \in \mathbb{R}^n.
\end{equation}

This Euclidean notion of smoothness is however too restrictive for many objectives of interest.
In particular, functions such as the Poisson \glsxtrshort{nll} in \eqref{eq:poisson-nll} do not satisfy such a quadratic upper bound, and therefore are not $L$-smooth in the classical sense.
The idea of relative smoothness is to use a different geometry described by the Bregman divergence $D_\varphi$ (see equation \eqref{eq:bregman-divergence}) associated to a potential $\varphi: \mathbb{R}^n \to \mathbb{R} \cup \{+\infty\}$.
This new notion leads to a different upper bound:
\begin{equation}
  \label{lem:descent}
  f(u) \leq f(x) + \langle \nabla f(x), u - x \rangle + L D_\varphi(u \parallel x), \quad \forall u, x \in \intdom \varphi,
\end{equation}
where $L > 0$ is a Lipschitz like constant.
To induce a well-behaved Bregman divergence, the potential function $\varphi$ is required to be of Legendre type :
\begin{definition}[Legendre-type function \thmcite{bauschkeDescentLemmaLipschitz2017}{definition~1}]
  \label{def:legendre}
  A proper, \glsxtrshort{lsc} convex function $\varphi: \mathbb{R}^n \to \mathbb{R} \cup \{+\infty\}$ is of Legendre type if it satisfies both conditions:
  \begin{enumerate}
    \item \emph{Essential strict convexity:} $\varphi$ is strictly convex on every convex subset of $\operatorname{dom}\partial\varphi$.
    \item \emph{Essential smoothness:} $\intdom\varphi\neq\emptyset$, $\varphi$ is differentiable on $\intdom\varphi$, and for every sequence $(x^{(k)})_{k\in\mathbb{N}}\subset\intdom\varphi$ converging to a point in $\partial\operatorname{dom}\varphi$,
    \[
      \|\nabla\varphi(x^{(k)})\|\longrightarrow+\infty.
    \]
  \end{enumerate}
\end{definition}
Let us recall that the Fenchel conjugate of a function $\varphi$ is defined as:
\begin{equation}
  \varphi^*(\eta) = \sup_{x \in \mathbb{R}^n} \left\{ \langle \eta, x \rangle - \varphi(x) \right\}.
\end{equation}
For Legendre-type functions there is a strong link between $\varphi$ and its Fenchel conjugate $\varphi^*$:
\begin{lemma}[{\cite[theorem 26.5]{rockafellarConvexAnalysis1997}}]
  \label{lem:legendre-conjugacy}
  A proper, \glsxtrshort{lsc} convex function $\varphi: \mathbb{R}^n \to \mathbb{R} \cup \{+\infty\}$ is of Legendre type if and only if its Fenchel conjugate $\varphi^*$ is of Legendre type.
\end{lemma}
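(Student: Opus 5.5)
By the symmetry of the statement and the fact that $\varphi^{**}=\varphi$ for proper \glsxtrshort{lsc} convex $\varphi$ (Fenchel--Moreau), it is enough to prove one direction: if $\varphi$ is of Legendre type, then so is $\varphi^*$. The converse follows by applying the same argument to $\varphi^*$, which is itself proper, \glsxtrshort{lsc} and convex. The strategy is the classical one of Rockafellar. The two halves of Definition~\ref{def:legendre} are exchanged under conjugation: essential smoothness of $\varphi$ yields essential strict convexity of $\varphi^*$, and essential strict convexity of $\varphi$ yields essential smoothness of $\varphi^*$. The bridge between them is the subdifferential inversion rule $\eta\in\partial\varphi(x)\iff x\in\partial\varphi^*(\eta)$, i.e.\ $\partial\varphi^*=(\partial\varphi)^{-1}$.

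The first step is to characterize essential smoothness. A proper \glsxtrshort{lsc} convex $\varphi$ is essentially smooth if and only if $\partial\varphi(x)$ is a singleton for $x\in\operatorname{dom}\partial\varphi$ and empty otherwise, so that $\operatorname{dom}\partial\varphi=\intdom\varphi$. One direction is easy: gradient blow-up at the boundary rules out subgradients at boundary points. To see this, take a limit of $\nabla\varphi(x^{(k)})$ along a segment from an interior point toward $x\in\partial\operatorname{dom}\varphi$ and use monotonicity of $\partial\varphi$ to get a contradiction with $\|\nabla\varphi(x^{(k)})\|\to\infty$. The converse is the standard fact that a convex function on an open set is differentiable exactly where its subdifferential is single-valued. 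For essential strict convexity, the key observation is this: $\varphi$ fails to be strictly convex on some convex subset of $\operatorname{dom}\partial\varphi$ exactly when some $\eta$ is a subgradient of $\varphi$ at two distinct points. Indeed, if $\varphi$ is affine on a segment $[x_1,x_2]\subset\operatorname{dom}\partial\varphi$, then a subgradient at the midpoint is a subgradient at both endpoints. Conversely, a common subgradient $\eta$ at $x_1\neq x_2$ makes $\varphi-\langle\eta,\cdot\rangle$ minimal on the whole segment, hence affine there.

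With these characterizations, the transfer is immediate through $\partial\varphi^*=(\partial\varphi)^{-1}$. Essential strict convexity of $\varphi$ means $(\partial\varphi)^{-1}$ is single-valued wherever nonempty, so $\partial\varphi^*$ is single-valued on its domain. Essential smoothness of $\varphi$ means $\partial\varphi$ is single-valued, so no $x^*$ is a common subgradient of $\varphi^*$ at two points, which gives essential strict convexity of $\varphi^*$.

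The remaining, and hardest, part is to upgrade ``$\partial\varphi^*$ is single-valued where nonempty'' to full essential smoothness of $\varphi^*$. This requires showing that $\intdom\varphi^*\neq\emptyset$, that $\operatorname{dom}\partial\varphi^*$ is open, and that the gradient blows up at the boundary. I would use the facts that $\operatorname{ri}\operatorname{dom}\varphi^*\subset\operatorname{dom}\partial\varphi^*\subset\operatorname{dom}\varphi^*$, and that the subdifferential at a relative boundary point (or a point of a nonfull-dimensional domain) contains a normal-cone direction. If $\partial\varphi^*(\eta)$ is nonempty and $\eta\notin\intdom\varphi^*$, then $\partial\varphi^*(\eta)+N_{\operatorname{dom}\varphi^*}(\eta)\subseteq\partial\varphi^*(\eta)$, and the normal cone there is nontrivial. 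Thus $\partial\varphi^*(\eta)$ contains a half-line and is not a singleton. This simultaneously gives nonempty interior and $\operatorname{dom}\partial\varphi^*=\intdom\varphi^*$, hence essential smoothness by the first step. Care is needed to justify this normal-cone inclusion for \glsxtrshort{lsc} proper convex functions, and that is where I expect most of the technical effort.
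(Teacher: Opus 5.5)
Your overall route is the one behind the result the paper invokes. The paper gives no argument of its own, only the citation of Rockafellar's Theorem~26.5, which rests on his Theorems~26.1 and~26.3 organized as you do. Most of your pieces are sound: the reduction through $\varphi^{**}=\varphi$, the characterization of essential strict convexity as ``no vector is a subgradient at two distinct points'', and the transfer through $\partial\varphi^*=(\partial\varphi)^{-1}$.

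The gap is in your final step. You correctly list three things to prove for $\varphi^*$. However, the normal-cone argument only yields $\intdom\varphi^*\neq\emptyset$ and $\operatorname{dom}\partial\varphi^*=\intdom\varphi^*$, and you then conclude ``hence essential smoothness by the first step''. The justification you gave for that direction of the first step (a convex function is differentiable exactly where its subdifferential is a singleton) only provides differentiability on $\intdom\varphi^*$. The blow-up of $\|\nabla\varphi^*\|$ at the boundary is never proved.

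This is not a formality. Take $f=0$ on $(0,1)$, $f(0)=f(1)=1$, $f=+\infty$ elsewhere. It is proper and convex, with $\partial f=\{0\}$ on $(0,1)$ and $\partial f=\emptyset$ at $0$ and $1$, yet its derivative stays bounded near the boundary. So single-valuedness together with $\operatorname{dom}\partial f=\intdom f$ does not give essential smoothness unless lower semicontinuity is used, and your sketch never uses it beyond biconjugation.

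The missing step is a closed-graph argument. Suppose $\eta^{(k)}\in\intdom\varphi^*$ tends to a boundary point $\bar\eta$ of $\operatorname{dom}\varphi^*$ while $\nabla\varphi^*(\eta^{(k)})$ stays bounded along a subsequence. Extract $\nabla\varphi^*(\eta^{(k)})\to\bar x$ and pass to the limit in $\varphi^*(\zeta)\geq\varphi^*(\eta^{(k)})+\langle\nabla\varphi^*(\eta^{(k)}),\zeta-\eta^{(k)}\rangle$, using $\liminf_k\varphi^*(\eta^{(k)})\geq\varphi^*(\bar\eta)$. This gives $\bar x\in\partial\varphi^*(\bar\eta)$, contradicting $\operatorname{dom}\partial\varphi^*=\intdom\varphi^*$.

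Two smaller points.
\begin{itemize}
\item The inclusion $\partial\varphi^*(\eta)+N_{\operatorname{dom}\varphi^*}(\eta)\subseteq\partial\varphi^*(\eta)$ follows in one line from the definitions, so the effort you anticipate there is misplaced. The substantive inputs of that step are $\operatorname{ri}\operatorname{dom}\varphi^*\subseteq\operatorname{dom}\partial\varphi^*$ and the existence of a nonzero normal at every non-interior point of $\operatorname{dom}\varphi^*$.
\item In the easy direction of your first step, you cannot take a limit of $\nabla\varphi(x^{(k)})$ itself, since it is the sequence assumed to blow up. Monotonicity against $\eta\in\partial\varphi(x)$ only bounds its component along $a-x$. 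Bound the full gradient instead: for $x_t=x+t(a-x)$ and small $v$, the subgradient inequality at $x$ gives $\langle\nabla\varphi(x_t),v\rangle\leq\varphi(a+v)-\varphi(x)-\langle\eta,a-x\rangle$. This is uniformly bounded because $\varphi$ is bounded near $a\in\intdom\varphi$.
\end{itemize}
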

Legendre functions and their associated Bregman divergences enjoy interesting properties that we will use throughout the rest of the paper and that we recall here.
For a direction $r$, we note the third order derivative as:
\begin{equation*}
  \left[\nabla^3 \varphi(x)[r]\right]_{ij}
  = \sum_{k=1}^n \frac{\partial^3 \varphi(x)}{\partial x_i\,\partial x_j\,\partial x_k} r_k.
\end{equation*}
\begin{proposition}[Properties of Legendre-type functions]
  \label{prop:legendre-properties}
  Consider $\varphi: \mathbb{R}^n \to \mathbb{R} \cup \{+\infty\}$ a Legendre-type function:
  \begin{enumerate}
    \item $\nabla \varphi: \intdom \varphi \to \intdom \varphi^*$ is a bijection, and its inverse is $\left( \nabla \varphi \right)^{-1} = \nabla \varphi^*$.
    \item For all $x, y \in \intdom \varphi$, we have:
    \begin{equation}
        D_\varphi(x \parallel y) = D_{\varphi^*}(\nabla \varphi(y) \parallel \nabla \varphi(x)),
    \end{equation}
    where the order of the arguments is reversed in the right-hand side.
    \item If $\varphi$ is $\mathcal C^2$ on $\intdom\varphi$ with $\nabla^2\varphi\succ0$, then for all $x \in \intdom \varphi$:
    \begin{equation}
      \left[ \nabla^2 \varphi(x) \right]^{-1} = \nabla^2 \varphi^*(\nabla \varphi(x)) 
    \end{equation}
    \item If $\varphi$ is $\mathcal C^3$ on $\intdom\varphi$ with $\nabla^2\varphi\succ0$, then for all $x\in\intdom\varphi$ and $r\in\mathbb R^n$,
    \begin{equation}
      \label{eq:legendre-third-order}
      \nabla^3 \varphi^*(\nabla \varphi(x)) \left[r\right]
      = - \left[ \nabla^2 \varphi(x) \right]^{-1} \nabla^3 \varphi(x) \left[ \left[ \nabla^2 \varphi(x) \right]^{-1} r \right] \left[ \nabla^2 \varphi(x) \right]^{-1}.
    \end{equation}
  \end{enumerate}
\end{proposition}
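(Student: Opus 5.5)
The four items build on one another, so I will prove them in order and reuse each in the next.

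For item 1, I will use the standard subdifferential characterization of conjugates. For a proper lsc convex $\varphi$, we have $\eta\in\partial\varphi(x)$ iff $x\in\partial\varphi^*(\eta)$. Essential smoothness gives $\operatorname{dom}\partial\varphi=\intdom\varphi$ and $\partial\varphi(x)=\{\nabla\varphi(x)\}$ there (Rockafellar Thm.~26.1). By \cref{lem:legendre-conjugacy}, $\varphi^*$ is also Legendre, so the same holds for $\varphi^*$ on $\intdom\varphi^*$. Hence $\nabla\varphi$ maps $\intdom\varphi$ into $\operatorname{dom}\partial\varphi^*=\intdom\varphi^*$. The relation $\eta=\nabla\varphi(x)\iff x=\nabla\varphi^*(\eta)$ shows that $\nabla\varphi^*$ is a two-sided inverse, which gives the bijection.

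For item 2, I will compute directly with the Fenchel--Young equality $\varphi(x)+\varphi^*(\nabla\varphi(x))=\langle x,\nabla\varphi(x)\rangle$, valid for $x\in\intdom\varphi$. Write $\xi=\nabla\varphi(x)$ and $\eta=\nabla\varphi(y)$, so that $\nabla\varphi^*(\xi)=x$ by item 1. Then expand
\[
D_{\varphi^*}(\eta\parallel\xi)=\varphi^*(\eta)-\varphi^*(\xi)-\langle x,\eta-\xi\rangle .
\]
Substitute $\varphi^*(\eta)=\langle y,\eta\rangle-\varphi(y)$ and $\varphi^*(\xi)=\langle x,\xi\rangle-\varphi(x)$. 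After cancellation this leaves $\varphi(x)-\varphi(y)-\langle\eta,x-y\rangle=D_\varphi(x\parallel y)$.

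For item 3, I will differentiate the identity $\nabla\varphi^*(\nabla\varphi(x))=x$ on $\intdom\varphi$. This requires $\nabla\varphi^*$ to be differentiable. Since $\nabla\varphi$ is $\mathcal C^1$ with invertible Jacobian $\nabla^2\varphi(x)\succ0$, the inverse function theorem makes its inverse $\nabla\varphi^*$ locally $\mathcal C^1$ near $\nabla\varphi(x)$, and $\mathcal C^{k-1}$ if $\varphi$ is $\mathcal C^k$. The chain rule then gives $\nabla^2\varphi^*(\nabla\varphi(x))\nabla^2\varphi(x)=I$.

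For item 4, I will differentiate $H(x):=\nabla^2\varphi^*(\nabla\varphi(x))=[\nabla^2\varphi(x)]^{-1}$ from item 3 in a direction $d$. This uses the $\mathcal C^2$ regularity of $\nabla\varphi^*$, which again comes from the inverse function theorem since $\varphi\in\mathcal C^3$. On the left, the chain rule gives $\nabla^3\varphi^*(\nabla\varphi(x))[\nabla^2\varphi(x)d]$. On the right, the derivative of a matrix inverse gives $-[\nabla^2\varphi(x)]^{-1}\nabla^3\varphi(x)[d][\nabla^2\varphi(x)]^{-1}$. Setting $d=[\nabla^2\varphi(x)]^{-1}r$ yields \eqref{eq:legendre-third-order}.

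The main point needing care is the regularity justification for items 3 and 4. Everything else is algebra, but the differentiability of $\nabla\varphi^*$ is not part of the Legendre definition and must be obtained from the inverse function theorem applied via item 1. I will also be careful with the notation $\nabla^3\varphi(x)[r]$, which is linear in $r$ and symmetric, so that the chain rule step is legitimate.
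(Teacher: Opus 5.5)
Your proposal is correct and follows essentially the same route as the paper: item 1 is the content of Rockafellar's Theorem 26.5, which the paper simply cites, and item 2 is the Fenchel--Young computation the paper leaves as ``direct''. Items 3--4 come from differentiating $\nabla\varphi^*(\nabla\varphi(x))=x$ and then the resulting Hessian identity, where your derivative-of-the-inverse formula is equivalent to the paper's product-rule step; your inverse-function-theorem argument, using injectivity from item 1 to identify the local inverse with $\nabla\varphi^*$, usefully supplies the differentiability of $\nabla\varphi^*$ that the paper assumes without comment.
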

\begin{proof}
  A proof of the first point can be found in \cite[theorem 26.5]{rockafellarConvexAnalysis1997}.
  The second point is straightforward to verify by direct computation. 
  For the third point, we start from the first identity:
  \begin{equation*}
    \nabla \varphi^*(\nabla \varphi(x)) = x,
  \end{equation*}
  and differentiate with respect to $x$, which gives by the chain rule
  \begin{equation*}
    \nabla^2\varphi^*(\nabla\varphi(x))\nabla^2\varphi(x)=I.
  \end{equation*}
  Since $\nabla^2\varphi(x)\succ0$, we can multiply by its inverse on the right to obtain the stated identity.
  For the fourth point, we differentiate the product identity $\nabla^2\varphi^*(\nabla\varphi(x))\nabla^2\varphi(x)=I$ with respect to $x$ in a direction $u\in\mathbb R^n$.
  The product rule and the chain rule give
  \begin{equation*}
    \nabla^3\varphi^*(\nabla\varphi(x))[\nabla^2\varphi(x)u]\nabla^2\varphi(x) + \nabla^2\varphi^*(\nabla\varphi(x))\nabla^3\varphi(x)[u]=0.
  \end{equation*}
  Multiplying on the right by $\left[\nabla^2\varphi(x)\right]^{-1}$ and using the third point gives
  \begin{equation*}
    \nabla^3\varphi^*(\nabla\varphi(x))[\nabla^2\varphi(x)u]
    =-\left[\nabla^2\varphi(x)\right]^{-1}\nabla^3\varphi(x)[u]\left[\nabla^2\varphi(x)\right]^{-1}.
  \end{equation*}
  Taking $u =\left[\nabla^2\varphi(x)\right]^{-1}r$ gives \eqref{eq:legendre-third-order}.
\end{proof}

A notable example of Legendre function is the energy potential $\varphi(x) = \frac{1}{2} \|x\|^2$, whose Bregman divergence is the squared Euclidean distance.
Another important example is the negative entropy, sometimes called Boltzmann or Shannon entropy, $\varphi(x) = \sum_{i=1}^n \left[ x_i \log x_i - x_i \right]$, which yields the Kullback--Leibler divergence.

With $\varphi$ of Legendre type, the divergence $D_\varphi$ is non-negative for any $x, y \in \intdom \varphi$, and $D_\varphi(x \parallel y) = 0$ if and only if $x = y$.
However, note that it is not a true distance, as it is not symmetric and does not satisfy the triangle inequality in general.
It is also convex with respect to its first argument, but not necessarily with respect to its second one.
It can be proven by simple computations that the $L$-relative smoothness upper bound \eqref{lem:descent} can equivalently be expressed by the convexity of the function $L \varphi - f$ on $\intdom \varphi$:
\begin{definition}[Relative smoothness \cite{bauschkeDescentLemmaLipschitz2017}]
  \label{def:relative-smoothness}
  A proper, \glsxtrshort{lsc} function $f: \mathbb{R}^n \to \mathbb{R} \cup \{+\infty\}$ is said to be $L$-smooth relative to a Legendre-type function $\varphi: \mathbb{R}^n \to \mathbb{R} \cup \{+\infty\}$ if there exists $L > 0$ such that $L \varphi - f$ is convex on $\intdom \varphi$.
  If $f$ is differentiable on $\intdom\varphi$ and is $L$-relatively smooth, it verifies \eqref{lem:descent}.
  Equivalently, when both $f$ and $\varphi$ are twice differentiable:
  \begin{equation}
    L \nabla^2 \varphi(x) \succeq \nabla^2 f(x), \quad \forall x \in \intdom \varphi.
  \end{equation}
\end{definition}

This definition does not require $f$ to be convex.
We note that several variations of the definition of relative smoothness exist in the literature (see for instance \cite[definition 2.2]{bolteFirstOrderMethods2018}), but the one above is sufficient for our purposes.

\subsection{Bregman proximal gradient algorithm}
\label{sec:relative-smoothness:subsec:bregman-proximal}

Assuming that we can find a Legendre type potential $\varphi$ such that $f$ is $L$-smooth relative to $\varphi$, an algorithm to minimize $f$ can be derived from inequality \eqref{lem:descent}.
The iterates are solution of the following minimization problem:
\begin{equation}
  x^{(k+1)} = \argmin_{x \in \operatorname{dom} \varphi} \left\{ f(x^{(k)}) + \langle \nabla f(x^{(k)}), x - x^{(k)} \rangle + \frac{1}{\tau} D_\varphi(x \parallel x^{(k)}) \right\}
  \label{eq:mirror-descent-minimization}
\end{equation}
for a descent step size $\tau > 0$.
Writing the first-order optimality condition of this minimization problem gives:
\begin{equation*}
  \nabla \varphi(x^{(k+1)}) = \nabla \varphi(x^{(k)}) - \tau \nabla f(x^{(k)})
\end{equation*}
and using the fact that $\nabla \varphi^* = (\nabla \varphi)^{-1}$ (see point 1 of proposition \ref{prop:legendre-properties}), we can rewrite the update as:
\begin{equation}
  x^{(k+1)} = \nabla \varphi^* \left(\nabla \varphi (x^{(k)}) - \tau \nabla f(x^{(k)})\right),
  \label{eq:mirror-descent}
\end{equation}
which recovers the well-known mirror descent algorithm \cite{nemirovskiiProblemComplexityMethod1983, beckMirrorDescentNonlinear2003a}.

A similar approach can be adopted for the composite minimization problem \eqref{eq:composite-minimization}.
Similar to the Euclidean case, we can define a notion of proximal operator, with respect to the potential $\varphi$.
Replacing the squared Euclidean distance with a Bregman divergence in the definition of the proximal operator, we can define the \emph{left} Bregman proximal operator of $g$ as:
\begin{equation}
  \leftprox^{\varphi}_g(x) = \argmin_{u \in \operatorname{dom} \varphi} \left\{ g(u) + D_\varphi(u \parallel x)\right\}
  \label{eq:left-bregman-prox}
\end{equation}
The asymmetry of the Bregman divergence enables to define a \emph{right} Bregman proximal operator $\rightprox^{\varphi}_g$ where the optimized variable $u$ appears in the second argument of the divergence, as defined in \eqref{eq:right-bregman-prox}.
This right Bregman proximal operator has been used in the context of alternating left and right Bregman proximal algorithms \cite{bauschkeJointMinimizationAlternating2006}, will naturally arise in our derivations in section \ref{sec:regularized-scheme}.

We can slightly modify the definition of the mirror descent updates in \eqref{eq:mirror-descent-minimization} to handle the composite optimization problem \eqref{eq:composite-optimization-convex} by adding the regularization term $g$ to the minimization problem \eqref{eq:mirror-descent-minimization}:
\begin{equation}
  x^{(k+1)} = \argmin_{x \in \operatorname{dom} \varphi} \left\{ f(x^{(k)}) + \langle \nabla f(x^{(k)}), x - x^{(k)} \rangle + \frac{1}{\tau} D_\varphi(x \parallel x^{(k)}) + g(x) \right\}.
  \label{eq:bregman-proximal-descent-minimization}
\end{equation}
Developing the expression of the Bregman divergence \eqref{eq:bregman-divergence}, the minimization problem above can equivalently be rewritten as:
\begin{equation*}
  \argmin_{x \in \operatorname{dom} \varphi} \left\{ \tau g(x) + \varphi(x) - \langle \nabla \varphi(x^{(k)}) - \tau \nabla f(x^{(k)}), x - x^{(k)} \rangle \right\},
\end{equation*}
defining the intermediate point $z^{(k)} = \nabla \varphi^* \left( \nabla \varphi(x^{(k)}) - \tau \nabla f(x^{(k)}) \right)$:
\begin{equation*}
  \argmin_{x \in \operatorname{dom} \varphi} \left\{ \tau g(x) + \varphi(x) - \langle \nabla \varphi( z^{(k)}), x \rangle \right\},
\end{equation*}
which is the left Bregman proximal operator of $g$ evaluated at $z^{(k)}$:
\begin{equation*}
  \argmin_{x \in \operatorname{dom} \varphi} \left\{ \tau g(x) + D_\varphi(x \parallel z^{(k)}) \right\} = \leftprox^{\varphi}_{\tau g}(z^{(k)}).
\end{equation*}
The updates \eqref{eq:bregman-proximal-descent-minimization} thus alternate a mirror descent and left Bregman proximal steps:
\begin{equation}
  x^{(k+1)} = \leftprox^{\varphi}_{\tau g} \left[ \nabla \varphi^* \left(\nabla \varphi (x^{(k)}) - \tau \nabla f(x^{(k)})\right) \right]
  \label{eq:bregman-proximal-descent}
\end{equation}
and are known as \glsxtrlong{bpg} descent \cite{bauschkeDescentLemmaLipschitz2017}.

For convex $f$ and $g$, under the relative smoothness property \ref{def:relative-smoothness}, the \glsxtrshort{bpg} algorithm enjoys convergence guarantees similar to the usual Euclidean proximal gradient descent algorithm.
The descent inequality \eqref{lem:descent} can be used in combination with properties of the left Bregman operator to show that \glsxtrshort{bpg} converges in $\mathcal{O}(1/k)$ in function values to a minimum of $f + g$ in  \cite[theorem 1]{bauschkeDescentLemmaLipschitz2017}, and under additional assumptions satisfied by standard choices of Bregman divergences, full-sequence convergence of the iterates to a solution of the composite problem \eqref{eq:composite-optimization-convex} was shown \cite[theorem 2]{bauschkeDescentLemmaLipschitz2017}.

\subsection{Application to Poisson inverse problems}
\label{sec:relative-smoothness:subsec:poisson}

The Poisson \glsxtrshort{nll} defined in \eqref{eq:poisson-nll} is not $L$-smooth, which makes it a good candidate for the relative smoothness framework.
The main challenge is then to find a Legendre-type potential $\varphi$ such that the function is $L$-smooth relative to $\varphi$.
\begin{lemma}[Poisson \glsxtrshort{nll} and Burg's entropy \thmcite{bauschkeDescentLemmaLipschitz2017}{lemma 7}]
  \label{lem:relative-smoothness-poisson}
  The Poisson \glsxtrshort{nll} defined in \eqref{eq:poisson-nll} is $L$-smooth relative to Burg's entropy $\varphi(x) = - \sum_{i=1}^n \log x_i$ with $L = \| y \|_1$.
\end{lemma}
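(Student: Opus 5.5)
We use the Hessian characterization in \cref{def:relative-smoothness}. Both $f$ and Burg's entropy $\varphi(x) = -\sum_i \log x_i$ are $\mathcal C^2$ on $\intdom\varphi = \mathbb{R}^n_{++}$, at least at points where $Ax$ has positive entries. It therefore suffices to show
\begin{equation*}
  \langle d, \nabla^2 f(x) d\rangle \leq \|y\|_1 \langle d, \nabla^2\varphi(x) d\rangle, \qquad \forall x \in \mathbb{R}^n_{++},\ d \in \mathbb{R}^n .
\end{equation*}
Convexity of $\|y\|_1\varphi - f$ then follows, and with it the descent inequality \eqref{lem:descent}.

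\textbf{Computing the two quadratic forms.} The linear term $\sum_j (Ax)_j$ has zero Hessian. Writing $a_j$ for the $j$-th row of $A$, we get
\begin{equation*}
  \langle d, \nabla^2 f(x) d\rangle = \sum_{j=1}^m y_j \frac{\langle a_j, d\rangle^2}{\langle a_j, x\rangle^2},
  \qquad
  \langle d, \nabla^2 \varphi(x) d\rangle = \sum_{i=1}^n \frac{d_i^2}{x_i^2}.
\end{equation*}
Since $\|y\|_1 = \sum_j y_j$ and $y_j \ge 0$, it is enough to prove for each $j$ that
\begin{equation*}
  \frac{\langle a_j, d\rangle^2}{\langle a_j, x\rangle^2} \le \sum_i \frac{d_i^2}{x_i^2}.
\end{equation*}
Summing these bounds with weights $y_j$ then gives the claim.

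\textbf{Key step: a Jensen-type bound on each row.} This uses the nonnegativity of $A$. Fix $j$ with $\langle a_j,x\rangle>0$. Indices with $a_{ji}=0$ contribute nothing to either the numerator or the denominator. Define the weights
\begin{equation*}
  \lambda_i = \frac{a_{ji}x_i}{\langle a_j,x\rangle},
\end{equation*}
which are nonnegative and sum to one. Then
\begin{equation*}
  \frac{\langle a_j, d\rangle}{\langle a_j, x\rangle} = \sum_i \lambda_i \frac{d_i}{x_i}.
\end{equation*}
By convexity of $t\mapsto t^2$ (or Cauchy--Schwarz), its square is at most $\sum_i \lambda_i d_i^2/x_i^2 \le \sum_i d_i^2/x_i^2$, because each $\lambda_i\le 1$.

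The only delicate point is rows with $\langle a_j,x\rangle=0$. For $x\in\mathbb{R}^n_{++}$ this forces $a_j=0$, and then the $j$-th term vanishes from $f$ whatever $y_j$ is. If $y_j>0$ on such a row, $f$ is not finite, so one assumes $A$ has no zero rows, as is standard. The remaining steps are routine.
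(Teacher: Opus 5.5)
Your argument is correct and is essentially the proof of the cited Lemma~7 of Bauschke--Bolte--Teboulle, which the paper invokes without reproducing. That proof applies Jensen's inequality with the convex weights $a_{ji}x_i/\langle a_j,x\rangle$ and then bounds each weight by one; the paper's own appendix proof of Lemma~\ref{lem:relative-smoothness-reparameterized} uses the same weighted Cauchy--Schwarz device. One small point: your closing remark on zero rows is slightly self-contradictory. Such a row drops out of the Hessian, not out of $f$, which is $\equiv+\infty$ when $y_j>0$. Simply assuming no zero rows, as the paper does in Section~\ref{subsec:convergence-analysis-poisson}, settles it.
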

This property yields the \glsxtrshort{bpg} algorithm for the Poisson \glsxtrshort{nll} with Burg's entropy as potential \cite{bauschkeDescentLemmaLipschitz2017}:
\begin{equation}
  x^{(k+1)} = \leftprox^\varphi_{\tau g} \left[ \frac{x^{(k)}}{1 + \tau x^{(k)} \odot \nabla f(x^{(k)})} \right]
  \label{eq:bpg-poisson}
\end{equation}
and which is convergent for a suitable step size $0< \tau \leq 1 / \| y \|_1$.

When $g=0$, these updates reduce to mirror descent with Burg's entropy.
They differ from the multiplicative updates: no constant step size $\tau$ yields the iterates \eqref{eq:multiplicative-updates}.
In practice $\| y \|_1$ can be large, especially in emission tomography where the data is 3D, and the dimension of the sinogram can exceed $10^7$ elements.
This would typically lead to small step sizes and slow convergence of BPG in practice if this conservative bound was used.
Backtracking or line-search procedures allow the step size to adapt at each iteration.
In our experiments, we use Armijo backtracking for Burg mirror descent and BPG, starting each iteration from a unit trial step size.

Another interesting choice of potential is the negative entropy $\varphi(x) = \sum_{i=1}^n \left[ x_i \log x_i - x_i \right]$, which yields the Kullback--Leibler divergence.
Indeed, the Poisson data-fidelity $f$ is equal, up to additive constants, to the \glsxtrshort{kl} divergence between $y$ and $Ax$.
One could then imagine that the geometry induced by the negative entropy could be related to the Poisson data-fidelity.
Unfortunately, the Poisson \glsxtrshort{nll} is not relatively smooth with respect to the negative entropy, as we show in appendix \ref{proof:lem:non-relative-smoothness-poisson}.
The mirror descent updates associated to the negative entropy are:
\begin{equation}
  x^{(k+1)} = x^{(k)} \odot \exp \left( - \nabla f(x^{(k)})\right),
\end{equation}
which is sometimes known as \emph{entropic} mirror descent. 
In the Poisson setting, it has been used in emission tomography \cite{bernardSystemeModulaireTraitement2017} where it appeared to have approximately similar behavior as the regular multiplicative updates \eqref{eq:multiplicative-updates}.
When the mirror descent step is constrained to the unit simplex $\Delta_n = \{ x \in \mathbb{R}^n_+ : \sum_{i=1}^n x_i = 1\}$, the updates are instead:
\begin{equation}
  x^{(k+1)} = \frac{x^{(k)} \odot \exp \left( - \nabla f(x^{(k)}) \right)}{\sum_{i=1}^n x_i^{(k)} \exp \left( - [\nabla f(x^{(k)})]_i \right)},
  \label{eq:exponentiated-gradient}
\end{equation}
and are sometimes called \emph{exponentiated gradient} \cite{kivinenExponentiatedGradientGradient1997} or more generally \emph{multiplicative weight updates} in game theory \cite{aroraMultiplicativeWeightsUpdate2012b}
This instance of mirror descent has been studied in the context of optimization in \cite{beckMirrorDescentNonlinear2003a}, and has been shown to converge under the usual $L$-smoothness assumption on $f$, excluding the Poisson \glsxtrshort{nll} from the analysis.

While the Poisson data-fidelity seems incompatible with the negative entropy, the reverse objective $\text{\glsxtrshort{kl}}(Ax \parallel y)$ is $L$-smooth relative to the negative entropy \cite[lemma 8]{bauschkeDescentLemmaLipschitz2017} with constant $L = \| A \|_1$.
However, this reverse objective does not coincide with the negative log-likelihood of the Poisson observation model.

\section{Right Bregman proximal gradient algorithm}
\label{sec:regularized-scheme}

In this section, we introduce \glsxtrshort{rbpg}, a new algorithm for solving composite optimization problems of the form \eqref{eq:composite-optimization-convex}.
The algorithm is based on the application of \glsxtrshort{bpg} to a reparameterization of the objective function.
We first motivate \glsxtrshort{rbpg} with recent links established between mirror descent and classical algorithms for Poisson inverse problems.
We then derive the method in a general setting, before specializing it to the Poisson case.
The properties of this algorithm and the precise assumptions under which it is guaranteed to converge are discussed later in section \ref{sec:convergence-analysis-convex}.

\subsection{Motivations}
\label{subsec:prior-motivation}

Recent work has established a strong connection between the \glsxtrshort{em} algorithm \cite{dempsterMaximumLikelihoodIncomplete1977} and mirror descent \cite{kunstner2021homeomorphic, aubin-frankowskiMirrorDescentRelative2022a}.
In particular, in the parametric setting of exponential-family models, it was shown in \cite{kunstner2021homeomorphic} that \glsxtrshort{em} can be interpreted as a mirror descent scheme, with a mirror map $\nabla \varphi$ induced by the log-partition function of the complete-data likelihood written in natural parameters.
This perspective was later extended in \cite{aubin-frankowskiMirrorDescentRelative2022a}, which developed mirror descent and relative smoothness in spaces of measures.
Notably, their analysis made it possible to establish this connection between \glsxtrshort{em} and mirror descent in the nonparametric setting.
Among their applications, they proved that the Richardson-Lucy algorithm \cite{richardsonBayesianBasedIterativeMethod1972a,lucyIterativeTechniqueRectification1974}, corresponding to the multiplicative updates \eqref{eq:multiplicative-updates} where $A$ specifically models a convolution operator, can also be interpreted as a mirror descent method, and they provide sublinear rates in function values.
The general multiplicative updates \eqref{eq:multiplicative-updates} are also known to arise from an \glsxtrshort{em} framework, based on a suitable choice of latent variables \cite{sheppMaximumLikelihoodReconstruction1982b}.
These connections between \glsxtrshort{em} and mirror descent motivated us to revisit multiplicative updates from an optimization point of view.

\subsection{Formal derivation of the R-BPG updates}
\label{subsec:reparameterized-descent}

We consider a reparameterization of the objective induced by a Legendre potential $\varphi$, assumed to be $\mathcal{C}^2$ on $\intdom\varphi$ with $\nabla^2\varphi(x)\succ0$ for every $x\in\intdom\varphi$.
We study the function $f$ through the coordinates $\eta = \nabla \varphi(x)$: the reparameterized objective function is then $f \circ \nabla \varphi^*$.
Choosing $\varphi^*$ as potential, the mirror descent updates \eqref{eq:mirror-descent-minimization} in mirror space are then:
\begin{equation}
  \eta^{(k+1)} = \argmin_{\eta \in \intdom\varphi^*} \left\{ (f \circ \nabla \varphi^*)(\eta^{(k)}) + \langle \nabla (f \circ \nabla \varphi^*)(\eta^{(k)}),\eta - \eta^{(k)} \rangle + \frac{1}{\tau} D_{\varphi^*}(\eta \parallel \eta^{(k)}) \right\}
\end{equation}
For now, we assume that this mapping is well-defined, i.e. that the minimization problem above has a unique solution, and we will provide conditions for this to hold in section \ref{sec:convergence-analysis-convex}.
The first-order optimality condition of this minimization problem is:
\begin{equation*}
  \label{eq:first-order-reparameterized}
  \nabla \varphi^*(\eta^{(k+1)}) = \nabla \varphi^*(\eta^{(k)}) - \tau \nabla (f \circ \nabla \varphi^*)(\eta^{(k)}).
\end{equation*}
Using the chain rule, the reparameterized gradient can be written as:
\begin{equation}
  \label{eq:reparameterized-gradient}
  \nabla (f \circ \nabla \varphi^*)(\eta) = \nabla^2 \varphi^*(\eta) \nabla f(\nabla \varphi^*(\eta))
\end{equation}
and using point 3 of proposition \ref{prop:legendre-properties}, we have
\begin{equation}
    \nabla^2 \varphi^*(\eta) = \left[\nabla^2 \varphi(\nabla \varphi^*(\eta))\right]^{-1}.
\end{equation}
Hence we can rewrite \eqref{eq:first-order-reparameterized} as:
\begin{equation*}
  \nabla \varphi^*(\eta^{(k+1)}) = \nabla \varphi^*(\eta^{(k)}) - \tau \left[ \nabla^2 \varphi(\nabla \varphi^*(\eta^{(k)})) \right]^{-1} \nabla f(\nabla \varphi^*(\eta^{(k)})).
\end{equation*}
Finally, using the first point of proposition \ref{prop:legendre-properties}, the update becomes in primal coordinates:
\begin{equation}
  x^{(k+1)} = x^{(k)} - \tau \left[ \nabla^2 \varphi(x^{(k)}) \right]^{-1} \nabla f(x^{(k)}).
  \label{eq:preconditioned-gradient-descent}
\end{equation}

Thus, mirror descent on the reparameterized objective in mirror coordinates reduces, in primal coordinates, to a preconditioned gradient step where the preconditioner is the inverse Hessian of $\varphi$.
This update is reminiscent of Riemannian gradient descent and, more specifically, of natural gradient descent \cite{amariInformationGeometryIts2016}, in which the metric tensor is given by the Fisher information matrix.
This connection is well-known and has been notably highlighted in the exponential-family setting \cite{raskuttiInformationGeometryMirror2015a} as well as in the continuous time limit \cite{gunasekarMirrorlessMirrorDescent2021b}.

We now extend this construction to the composite objective $\Phi = f + g$.
We introduce the notation $z^{(k)} =  x^{(k)} - \tau \left[ \nabla^2 \varphi(x^{(k)}) \right]^{-1} \nabla f(x^{(k)})$.
The \glsxtrshort{bpg} mapping applied to the reparameterized objective $\Phi \circ \nabla \varphi^*$ is, ignoring constants independent of $\eta$:
\begin{equation}
  \eta^{(k+1)} = \argmin_{\eta \in \intdom\varphi^*} \left\{ \langle \nabla (f \circ \nabla \varphi^*)(\eta^{(k)}), \eta - \eta^{(k)} \rangle + \frac{1}{\tau} D_{\varphi^*}(\eta \parallel \eta^{(k)}) + g \circ \nabla \varphi^*(\eta) \right\}.
  \label{eq:right-bregman-proximal-gradient-minimization-eta}
\end{equation}
Similarly to the derivations done in section \ref{sec:relative-smoothness:subsec:bregman-proximal}, we rewrite the above updates as:
\begin{equation*}
  \eta^{(k+1)} = \leftprox^{\varphi^*}_{\tau g \circ \nabla \varphi^*}  \circ \nabla \varphi \circ  \left( \nabla \varphi^* - \tau \nabla (f \circ \nabla \varphi^*)\right) (\eta^{(k)}).
  \label{eq:update-ugly}
\end{equation*}
We can significantly simplify these updates and rewrite them in the primal coordinates. 
In order to do so, we make explicit the relationship between the Bregman proximal operators in primal and mirror coordinates.
\begin{lemma}[Link between left and right Bregman proximal operators]
\label{lem:left-right-prox}
Let $\varphi: \mathbb{R}^n \to \mathbb R\cup\{+\infty\}$ be a Legendre-type function, and $g: \mathbb{R}^n \to \mathbb R\cup\{+\infty\}$ be a proper \glsxtrshort{lsc} function.
Assume that all the proximal problems appearing below have unique minimizers lying in the interiors of the domains of their respective potentials.
Then in $\intdom \varphi^*$ for $\tau > 0$:
\begin{equation}
  \label{eq:left-to-right}
  \leftprox^{\varphi^*}_{\tau g \circ \nabla \varphi^*} = \nabla \varphi \circ \rightprox^{\varphi}_{\tau g} \circ \nabla \varphi^*,
\end{equation}
and symmetrically in $\intdom \varphi$:
\begin{equation}
  \label{eq:right-to-left}
  \rightprox^{\varphi}_{\tau g} = \nabla \varphi^* \circ \leftprox^{\varphi^*}_{\tau g \circ \nabla \varphi^*} \circ \nabla \varphi.
\end{equation}
Similarly the right Bregman proximal operator of $g$ in mirror space is tied to the left Bregman proximal operator in primal space:
\begin{equation}
  \label{eq:right-dual-to-left-primal}
  \rightprox^{\varphi^*}_{\tau g \circ \nabla \varphi^*}
  =  \nabla \varphi \circ \leftprox^{\varphi}_{\tau g} \circ \nabla \varphi^*,
\end{equation}
or, equivalently,
\begin{equation}
  \label{eq:left-primal-to-right-dual}
  \leftprox^{\varphi}_{\tau g}
  = \nabla \varphi^* \circ \rightprox^{\varphi^*}_{\tau g \circ \nabla \varphi^*} \circ \nabla \varphi.
\end{equation}
\end{lemma}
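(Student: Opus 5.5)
The proof is a change of variables in the minimization problems. It uses two facts from Proposition~\ref{prop:legendre-properties}: $\nabla\varphi$ is a bijection from $\intdom\varphi$ onto $\intdom\varphi^*$ with inverse $\nabla\varphi^*$ (point 1), and $D_\varphi(x\parallel y)=D_{\varphi^*}(\nabla\varphi(y)\parallel\nabla\varphi(x))$ (point 2). By Lemma~\ref{lem:legendre-conjugacy}, $\varphi^*$ is also Legendre, so all four operators are defined with Legendre potentials. The uniqueness and interiority assumptions let us restrict every $\argmin$ to the interior of the relevant domain, where the bijection applies.

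For \eqref{eq:left-to-right}, fix $\xi\in\intdom\varphi^*$. By definition,
\begin{equation*}
\leftprox^{\varphi^*}_{\tau g\circ\nabla\varphi^*}(\xi)=\argmin_{\eta\in\intdom\varphi^*}\left\{\tau g(\nabla\varphi^*(\eta))+D_{\varphi^*}(\eta\parallel\xi)\right\}.
\end{equation*}
Substitute $\eta=\nabla\varphi(u)$ with $u\in\intdom\varphi$; point 1 makes this substitution a bijection of the feasible sets. Set $x=\nabla\varphi^*(\xi)$, so that $\xi=\nabla\varphi(x)$. Then $g(\nabla\varphi^*(\eta))=g(u)$, and point 2 gives
\begin{equation*}
D_{\varphi^*}(\nabla\varphi(u)\parallel\nabla\varphi(x))=D_\varphi(x\parallel u).
\end{equation*}
The objective therefore becomes $\tau g(u)+D_\varphi(x\parallel u)$, whose unique minimizer is $\rightprox^{\varphi}_{\tau g}(x)$. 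Since the substitution is a bijection, the minimizers correspond: $\eta^\star=\nabla\varphi(u^\star)$. This gives $\leftprox^{\varphi^*}_{\tau g\circ\nabla\varphi^*}(\xi)=\nabla\varphi\bigl(\rightprox^{\varphi}_{\tau g}(\nabla\varphi^*(\xi))\bigr)$.

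To get \eqref{eq:right-to-left}, compose \eqref{eq:left-to-right} on the left with $\nabla\varphi^*$ and on the right with $\nabla\varphi$, and use $\nabla\varphi^*\circ\nabla\varphi=\mathrm{id}$ on $\intdom\varphi$ and $\nabla\varphi\circ\nabla\varphi^*=\mathrm{id}$ on $\intdom\varphi^*$.

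For \eqref{eq:right-dual-to-left-primal}, the same substitution $\eta=\nabla\varphi(u)$ applies to
\begin{equation*}
\argmin_{\eta}\left\{\tau g(\nabla\varphi^*(\eta))+D_{\varphi^*}(\xi\parallel\eta)\right\}.
\end{equation*}
With $\xi=\nabla\varphi(x)$, point 2 gives $D_{\varphi^*}(\nabla\varphi(x)\parallel\nabla\varphi(u))=D_\varphi(u\parallel x)$. The objective becomes $\tau g(u)+D_\varphi(u\parallel x)$, which is minimized by $\leftprox^{\varphi}_{\tau g}(x)$. Composing with the inverse maps as before gives \eqref{eq:left-primal-to-right-dual}.

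The only delicate step is justifying that argmins transform under the reparameterization. This holds because $\nabla\varphi$ is a bijection between the interiors and the objective values agree pointwise, so minimizers map to minimizers. The hypothesis that minimizers are unique and lie in the interiors rules out boundary minimizers that the bijection would miss, for instance when the left prox is defined over $\operatorname{dom}\varphi$. No differentiability of $g$ is needed, and the argument does not use the $\mathcal C^2$ assumptions.
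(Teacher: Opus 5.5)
Your proof is correct and follows essentially the paper's argument: the bijection $\nabla\varphi:\intdom\varphi\to\intdom\varphi^*$ transforms the minimization problems, the duality $D_\varphi(x\parallel y)=D_{\varphi^*}(\nabla\varphi(y)\parallel\nabla\varphi(x))$ turns each objective into the other, and the remaining identities follow by composing with the inverse mirror maps. The only difference is cosmetic: you start from the mirror-space left prox and substitute $\eta=\nabla\varphi(u)$, while the paper starts from $\rightprox^{\varphi}_{\tau g}\circ\nabla\varphi^*$ and substitutes $u=\nabla\varphi^*(\eta')$.
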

\begin{proof}
  Let us first write
  \begin{equation*}
    \left[ \rightprox^{\varphi}_{\tau g} \circ \nabla \varphi^* \right] (\eta) = \argmin_{u \in \intdom \varphi} \left\{ g(u) + \frac{1}{\tau} D_\varphi(\nabla \varphi^* (\eta) \parallel u) \right\},
  \end{equation*}
  then with the change of variable $u = \nabla \varphi^*(\eta')$ to rewrite the above problem as:
  \begin{equation*}
    \left[ \rightprox^{\varphi}_{\tau g} \circ \nabla \varphi^* \right] (\eta) = \nabla \varphi^* \left( \argmin_{\eta' \in \intdom \varphi^*} \left\{ g \circ \nabla \varphi^*(\eta') + \frac{1}{\tau} D_\varphi(\nabla \varphi^* (\eta) \parallel \nabla \varphi^*(\eta')) \right\} \right).
  \end{equation*}
  We then use point 2 of proposition \ref{prop:legendre-properties} to rewrite this as:
  \begin{align*}
    \left[ \rightprox^{\varphi}_{\tau g} \circ \nabla \varphi^* \right] (\eta) &= \nabla \varphi^* \left( \argmin_{\eta' \in \intdom \varphi^*} \left\{ g \circ \nabla \varphi^*(\eta') + \frac{1}{\tau} D_{\varphi^*}(\eta' \parallel \eta) \right\} \right) \\
    &= \left[ \nabla \varphi^* \circ \leftprox^{\varphi^*}_{\tau g \circ \nabla \varphi^*} \right] (\eta).
  \end{align*}
  Composing by $\nabla \varphi$ on the left gives the first relation, while the second is obtained by composing by $\nabla \varphi$ on the right. For the remaining relations, let $x \in \intdom \varphi$ and change variables as above:
  \begin{align*}
    \left[ \rightprox^{\varphi^*}_{\tau g \circ \nabla \varphi^*} \circ \nabla \varphi \right](x)
    &= \argmin_{\eta' \in \intdom \varphi^*}
      \left\{ g \circ \nabla \varphi^*(\eta')
      + \frac{1}{\tau}D_{\varphi^*}(\nabla \varphi(x) \parallel \eta') \right\} \\
    &= \nabla \varphi \left(
      \argmin_{u \in \intdom \varphi}
      \left\{ g(u) + \frac{1}{\tau}D_\varphi(u \parallel x) \right\}
      \right) \\
    &= \left[ \nabla \varphi \circ \leftprox^\varphi_{\tau g} \right](x),
  \end{align*}
  where the second equality follows from
  $D_{\varphi^*}(\nabla \varphi(x) \parallel \nabla \varphi(u)) = D_\varphi(u \parallel x)$.
  Composing this identity with the inverse mirror maps gives \eqref{eq:right-dual-to-left-primal} and \eqref{eq:left-primal-to-right-dual}.
\end{proof}

Using \eqref{eq:left-to-right} we can rewrite equation \eqref{eq:update-ugly} as:
\begin{equation}
  \eta^{(k+1)} = \nabla \varphi \circ \rightprox^{\varphi}_{\tau g} \circ \left( \nabla \varphi^* - \tau \nabla (f \circ \nabla \varphi^*)\right) (\eta^{(k)}).
\end{equation}
As previously detailed, the point $\nabla \varphi^* - \tau \nabla (f \circ \nabla \varphi^*)(\eta^{(k)})$ can be rewritten as \eqref{eq:preconditioned-gradient-descent} in primal coordinates.
Per the first point of proposition \ref{prop:legendre-properties}, $\nabla \varphi$ is a bijection, hence we can rewrite the complete updates in the primal coordinates $x$ to obtain:
\begin{equation}
  x^{(k+1)} = \rightprox^{\varphi}_{\tau g} \left( x^{(k)} - \tau \left[ \nabla^2 \varphi(x^{(k)}) \right]^{-1} \nabla f(x^{(k)}) \right).
  \label{eq:bpg-right}
\end{equation}
Interestingly, this algorithm involves the \emph{right} Bregman proximal operator, which is less commonly used in optimization than its \emph{left} counterpart.
Some of its properties were studied in \cite{bauschkeJointMinimizationAlternating2006}, in the context of schemes alternating left and right Bregman proximal steps.
To the best of our knowledge, however, the scheme \eqref{eq:bpg-right} itself is new.

\subsection{Mirror space relative smoothness}

For the algorithm to be well-behaved, the relative smoothness of $f \circ \nabla \varphi^*$ with respect to $\varphi^*$ appears as a natural condition.
However, this condition deals with functions living in the mirror space, comparing their curvature there, which makes comparison with the notion of relative smoothness in primal space difficult.
We propose a characterization of the relative smoothness condition in mirror space relying only on quantities measured in primal space:
\begin{proposition}[Mirror space relative smoothness characterization]
  \label{prop:mirror-relative-smoothness-characterization}
  Let $f: \mathbb{R}^n \to \Rinf$ be proper, \glsxtrshort{lsc}, and $\varphi: \mathbb{R}^n \to \Rinf$ be of Legendre type.
  Further assume that $f \in \mathcal{C}^2$ on $\intdom f$ and $\varphi \in \mathcal{C}^3$ on $\intdom \varphi$, with $\intdom \varphi \subset \intdom f$ and $\nabla^2\varphi(x)\succ0$ for every $x\in\intdom\varphi$. Then the following statements are equivalent:
  \begin{itemize}
    \item[(i)] $f \circ \nabla \varphi^*$ is $L$-smooth relative to $\varphi^*$ 
    \item[(ii)] $\forall x \in \intdom \varphi$:
    \begin{equation}
      \label{eq:mirror-relative-smoothness-characterization}
      \nabla^2 f(x) - \nabla^3 \varphi(x) \left[ \left[ \nabla^2 \varphi(x) \right]^{-1} \nabla f(x) \right] \preceq L \nabla^2 \varphi(x)
    \end{equation}
  \end{itemize}
\end{proposition}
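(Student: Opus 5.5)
We will use the Hessian characterization in Definition~\ref{def:relative-smoothness}, applied in mirror space. Set $h = f\circ\nabla\varphi^*$ on $\intdom\varphi^*$. By Lemma~\ref{lem:legendre-conjugacy}, $\varphi^*$ is Legendre. By point~3 of Proposition~\ref{prop:legendre-properties} and the inverse function theorem, $\varphi^*$ is $\mathcal C^3$ on $\intdom\varphi^*$ with $\nabla^2\varphi^*\succ0$, because $\nabla\varphi$ is a $\mathcal C^2$ diffeomorphism. Since $\intdom\varphi\subset\intdom f$, the function $h$ is $\mathcal C^2$ there. Statement (i) is therefore equivalent to
\begin{equation*}
\nabla^2 h(\eta)\preceq L\,\nabla^2\varphi^*(\eta)\quad\forall \eta\in\intdom\varphi^*,
\end{equation*}
that is, convexity of $L\varphi^*-h$ on the open convex set $\intdom\varphi^*$. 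The whole task is to compute $\nabla^2 h$ and transport this inequality to primal variables through $\eta=\nabla\varphi(x)$, which is a bijection onto $\intdom\varphi^*$ by point~1.

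The first step differentiates the gradient formula \eqref{eq:reparameterized-gradient}, $\nabla h(\eta)=\nabla^2\varphi^*(\eta)\nabla f(\nabla\varphi^*(\eta))$. The product and chain rules give
\begin{equation*}
\nabla^2 h(\eta)=\nabla^2\varphi^*(\eta)\,\nabla^2 f(x)\,\nabla^2\varphi^*(\eta)+\nabla^3\varphi^*(\eta)\big[\nabla f(x)\big],
\end{equation*}
with $x=\nabla\varphi^*(\eta)$. Here we use that $\nabla^3\varphi^*$ is symmetric in all indices, so contracting the third index with $\nabla f(x)$ is the correct term. Writing $H=\nabla^2\varphi(x)$, we substitute $\nabla^2\varphi^*(\eta)=H^{-1}$ (point~3) and use point~4 with $r=\nabla f(x)$. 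This yields
\begin{equation*}
\nabla^2 h(\eta)=H^{-1}\Big(\nabla^2 f(x)-\nabla^3\varphi(x)\big[H^{-1}\nabla f(x)\big]\Big)H^{-1}.
\end{equation*}

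The second step is a congruence argument. The mirror-space inequality reads $H^{-1}M(x)H^{-1}\preceq L\,H^{-1}$, where $M(x)$ is the left-hand side of \eqref{eq:mirror-relative-smoothness-characterization}. Since $H$ is symmetric and invertible, $A\mapsto HAH$ preserves and reflects the Loewner order. Hence the inequality is equivalent to $M(x)\preceq LH$. Because $\eta\mapsto x$ is a bijection between $\intdom\varphi^*$ and $\intdom\varphi$, the condition holds for all $\eta$ if and only if it holds for all $x$, which proves (i)$\Leftrightarrow$(ii).

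The computation itself is routine. The main point needing care is the regularity and equivalence bookkeeping. First, we need $\varphi^*\in\mathcal C^3$ so that point~4 and the Hessian criterion apply; this follows from $\nabla\varphi^*=(\nabla\varphi)^{-1}$ being $\mathcal C^2$. Second, we must justify that convexity of $L\varphi^*-h$ on the open convex set $\intdom\varphi^*$ is equivalent to positive semidefiniteness of its Hessian there. This is the standard second-order criterion, and it is exactly how Definition~\ref{def:relative-smoothness} is phrased.
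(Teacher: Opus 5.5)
Your proposal is correct and follows the paper's proof essentially line by line: you reduce (i) to the Hessian inequality for $f\circ\nabla\varphi^*$, differentiate the chain-rule gradient, substitute points~3 and~4 of Proposition~\ref{prop:legendre-properties}, and conclude by congruence with $\nabla^2\varphi(x)$. Your extra bookkeeping fills in details the paper leaves implicit: the inverse function theorem gives $\varphi^*\in\mathcal C^3$, which point~4 needs, and congruence reflects as well as preserves the Loewner order, which gives the reverse implication.
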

\begin{proof}
  Given both $f$ and $\varphi$ are $\mathcal{C}^2$, the first statement is equivalent to: 
  \begin{equation}
    \label{eq:dual-coordinate-relative-smoothness-1}
    \forall \eta \in \intdom \varphi^*, ~ \nabla^2 (f \circ \nabla \varphi^*)(\eta) \preceq L \nabla^2 \varphi^*(\eta).
  \end{equation}
  Recall that: 
  \begin{equation*}
    \nabla (f \circ \nabla \varphi^*) (\eta) = \nabla^2 \varphi^*(\eta) \nabla f(\nabla \varphi^*(\eta))
  \end{equation*}
  We can now differentiate this expression again using the chain rule, which gives:
  \begin{equation*}
    \nabla^2 (f \circ \nabla \varphi^*) (\eta) = \nabla^3 \varphi^* (\eta) \left[ \nabla f(\nabla \varphi^* (\eta))\right] + \nabla^2 \varphi^* (\eta) \nabla^2 f( \nabla \varphi^*(\eta)) \nabla^2 \varphi^* (\eta).
  \end{equation*}
  We now rewrite everything in primal variable $x=\nabla \varphi^* (\eta)$, which gives:
  \begin{equation*}
    \nabla^2 (f \circ \nabla \varphi^*) (\eta) = \nabla^3 \varphi^* (\nabla \varphi(x)) \left[ \nabla f(x)\right] + \nabla^2 \varphi^* (\nabla \varphi(x)) \nabla^2 f(x) \nabla^2 \varphi^* (\nabla \varphi(x)).
  \end{equation*}
  The third point of proposition \ref{prop:legendre-properties} gives $\nabla^2 \varphi^* (\nabla \varphi(x)) = \left[ \nabla^2 \varphi(x)\right]^{-1}$. 
  Replacing in the previous equation yields:
  \begin{equation*}
    \nabla^2 (f \circ \nabla \varphi^*) (\eta) = \nabla^3 \varphi^* (\nabla \varphi(x)) \left[ \nabla f(x)\right] + \left[ \nabla^2 \varphi(x)\right]^{-1} \nabla^2 f(x) \left[ \nabla^2 \varphi(x)\right]^{-1}.
  \end{equation*}
  We replace the third order term using the fourth point of proposition \ref{prop:legendre-properties} which gives, for any direction $r \in \mathbb{R}^n$:
  \begin{equation*}
    \nabla^3 \varphi^*(\nabla \varphi(x)) \left[r\right] = - \left[ \nabla^2 \varphi(x) \right]^{-1} \nabla^3 \varphi(x) \left[ \left[ \nabla^2 \varphi(x) \right]^{-1} r \right] \left[ \nabla^2 \varphi(x) \right]^{-1}
  \end{equation*}
  hence:
  \begin{equation*}
    \nabla^3 \varphi^* (\nabla \varphi(x)) \left[ \nabla f(x)\right] =  - \left[ \nabla^2 \varphi(x) \right]^{-1} \nabla^3 \varphi(x) \left[ \left[ \nabla^2 \varphi(x) \right]^{-1} \nabla f(x) \right] \left[ \nabla^2 \varphi(x) \right]^{-1}
  \end{equation*}
  thus:
  \begin{equation*}
    \nabla^2 (f \circ \nabla \varphi^*) (\eta) = \left[ \nabla^2 \varphi(x) \right]^{-1} \left( \nabla^2 f(x) - \nabla^3 \varphi(x) \left[ \left[ \nabla^2 \varphi(x) \right]^{-1} \nabla f(x) \right] \right) \left[ \nabla^2 \varphi(x) \right]^{-1}.
  \end{equation*}
  And finally:
  \begin{equation*}
    \left[ \nabla^2 \varphi(x) \right]^{-1} \left( \nabla^2 f(x) - \nabla^3 \varphi(x) \left[ \left[ \nabla^2 \varphi(x) \right]^{-1} \nabla f(x) \right] \right) \left[ \nabla^2 \varphi(x) \right]^{-1} \preceq L \left[ \nabla^2 \varphi(x) \right]^{-1}
  \end{equation*}
  Multiplying to the left and right by $\nabla^2 \varphi(x)$ gives the expected result.
\end{proof}
We can now compare the characterization of relative smoothness for twice differentiable functions in definition \ref{def:relative-smoothness}:
\begin{equation*}
  \nabla^2 f(x) \preceq L \nabla^2 \varphi(x)
\end{equation*}
and relative smoothness in mirror coordinates:
\begin{equation*}
  \nabla^2 f(x) - \nabla^3 \varphi(x) \left[ \left[ \nabla^2 \varphi(x) \right]^{-1} \nabla f(x) \right] \preceq L \nabla^2 \varphi(x).
\end{equation*}
It is clear that a function that is not $L$-relatively smooth in the primal coordinates can become $L$-relatively smooth in mirror coordinates when the term $\nabla^3 \varphi(x) \left[ \left[ \nabla^2 \varphi(x) \right]^{-1} \nabla f(x) \right]$ is positive.
Conversely, a function that is $L$-relatively smooth in the primal coordinates can lose this relative smoothness once transported in mirror space if the correction term is negative.
Neither condition seems to imply the other in general.
The mirror space relative smoothness is also reminiscent of self-concordant functions \cite[Chapter 5]{nesterovLecturesConvexOptimization2018}, but we have not investigated this link further.

\subsection{Application to Poisson inverse problems}
\label{subsec:poisson-application}

We now specialize \glsxtrshort{rbpg} to the Poisson inverse problem setting, where $f$ is the Poisson \glsxtrshort{nll} defined in equation \eqref{eq:poisson-nll}.
As potential for the Bregman divergence, we choose the weighted negative entropy given by:
\begin{equation}
  \varphi(x) = \sum_{i=1}^n s_i \left[ x_i \log x_i - x_i \right], \quad \quad \varphi^*(\eta) = \sum_{i=1}^n s_i \exp(\eta_i / s_i),
  \label{eq:weighted-negative-entropy}
\end{equation}
where $s = A^\top \mathbf{1}_m$ is the vector of column sums of $A$.
The Hessian of this potential is:
\begin{equation}
  \label{eq:weighted-negentropy-hessian}
  \nabla^2 \varphi(x) = \operatorname{diag}\left( \frac{s}{x}\right),
\end{equation}
The associated Bregman divergence is the weighted Kullback--Leibler divergence
\begin{equation}
  \label{eq:weighted-kl-divergence}
  D_\varphi(x \parallel y) = \sum_{i=1}^n s_i \left[ x_i \log \frac{x_i}{y_i} - x_i + y_i \right].
\end{equation}
The cornerstone of the approach proposed here is the mirror space relative smoothness between $f$ and $\varphi$, which we establish in the following lemma for the Poisson \glsxtrshort{nll} and the potential defined in \eqref{eq:weighted-negative-entropy}:

\begin{lemma}[Relative smoothness of the reparameterized Poisson objective]
  \label{lem:relative-smoothness-reparameterized}
  The Poisson \glsxtrshort{nll} reparameterized with respect to the weighted negative entropy \eqref{eq:weighted-negative-entropy},  $f \circ \nabla \varphi^*$,  is $1$-smooth relative to $\varphi^*$.
\end{lemma}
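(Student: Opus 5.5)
The plan is to apply \cref{prop:mirror-relative-smoothness-characterization} with $L=1$. This reduces the claim to the matrix inequality
\begin{equation*}
  \nabla^2 f(x) - \nabla^3 \varphi(x) \left[ \left[ \nabla^2 \varphi(x) \right]^{-1} \nabla f(x) \right] \preceq \nabla^2 \varphi(x), \qquad \forall x \in \mathbb{R}^n_{++}.
\end{equation*}
I then verify this inequality by an explicit computation followed by a Cauchy--Schwarz argument.

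I will assume, as is standard, that $A$ has no zero row and no zero column. Then $s>0$, so the weighted entropy is Legendre with $\nabla^2\varphi\succ 0$. Moreover $Ax>0$ for every $x>0$, so $\intdom\varphi=\mathbb{R}^n_{++}\subset\intdom f$. Both $f$ and $\varphi$ are smooth there, so the hypotheses of the proposition hold.

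Next I compute each term. The derivatives of $f$ are $\nabla f(x)= s - A^\top \frac{y}{Ax}$ and $\nabla^2 f(x) = A^\top \operatorname{diag}\!\left(\frac{y}{(Ax)^2}\right) A$. For $\varphi$, \eqref{eq:weighted-negentropy-hessian} gives $\nabla^2\varphi(x)=\operatorname{diag}(s/x)$, and $\nabla^3\varphi(x)[r]=\operatorname{diag}(-s\odot r/x^2)$. With $r=[\nabla^2\varphi(x)]^{-1}\nabla f(x) = x\odot\nabla f(x)/s$, the correction term is $\nabla^3\varphi(x)[r] = -\operatorname{diag}(\nabla f(x)/x)$. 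The condition becomes
\begin{equation*}
  A^\top \operatorname{diag}\!\left(\tfrac{y}{(Ax)^2}\right) A + \operatorname{diag}\!\left(\tfrac{s}{x}\right) - \operatorname{diag}\!\left(\tfrac{1}{x}\odot A^\top \tfrac{y}{Ax}\right) \preceq \operatorname{diag}\!\left(\tfrac{s}{x}\right).
\end{equation*}
The $s/x$ terms cancel, so it suffices to show
\begin{equation*}
  A^\top \operatorname{diag}\!\left(\tfrac{y}{(Ax)^2}\right) A \preceq \operatorname{diag}\!\left(\tfrac{1}{x}\odot A^\top \tfrac{y}{Ax}\right).
\end{equation*}
It is this cancellation of $s$ that makes the constant exactly $1$.

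The last step, and the only nontrivial one, is to check this inequality on quadratic forms. Fix $u\in\mathbb{R}^n$. For each row $j$, write $(Au)_j=\sum_i \sqrt{a_{ji}x_i}\cdot \sqrt{a_{ji}/x_i}\,u_i$ and apply Cauchy--Schwarz:
\begin{equation*}
  (Au)_j^2 \le (Ax)_j \sum_i a_{ji}\frac{u_i^2}{x_i}.
\end{equation*}
Multiplying by $y_j/(Ax)_j^2\ge 0$ and summing over $j$ gives
\begin{equation*}
  \sum_j \frac{y_j (Au)_j^2}{(Ax)_j^2} \le \sum_i \frac{u_i^2}{x_i}\sum_j a_{ji}\frac{y_j}{(Ax)_j} = \sum_i \frac{u_i^2}{x_i}\left(A^\top\frac{y}{Ax}\right)_i,
\end{equation*}
which is exactly the required inequality. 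By \cref{prop:mirror-relative-smoothness-characterization}, $f\circ\nabla\varphi^*$ is then $1$-smooth relative to $\varphi^*$.
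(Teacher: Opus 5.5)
Your proposal is correct and matches the paper's proof essentially step for step. The paper likewise reduces to proposition \ref{prop:mirror-relative-smoothness-characterization}, cancels the $\operatorname{diag}(s/x)$ terms, and closes with the same row-wise Cauchy--Schwarz bound $(Au)_j^2\le (Ax)_j\sum_i a_{ji}u_i^2/x_i$. It additionally shows the constant is sharp (equality at $u=x$ rules out $L<1$), but that is not needed for the statement as posed.
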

\begin{proof}
  The proof is deferred to appendix \ref{proof:lem:relative-smoothness-reparameterized}.
\end{proof}
The 1-relative smoothness of the reparameterized Poisson objective allows for a descent step-size $\tau \leq 1$.
In the unregularized case $g=0$, the preconditioned gradient step \eqref{eq:preconditioned-gradient-descent} becomes, by replacing the inverse of the Hessian of the potential by its expression \eqref{eq:weighted-negentropy-hessian}:
\begin{equation}
  x^{(k+1)} = x^{(k)} - \tau \frac{x^{(k)}}{s} \odot \nabla f(x^{(k)}) = x^{(k)} - \tau \frac{x^{(k)}}{s} \odot \left( A^\top \mathbf{1}_m - A^\top \frac{y}{Ax^{(k)}} \right)
\end{equation}
which can be rewritten as:
\begin{equation}
  \label{eq:multiplicative-updates-step-size}
  x^{(k+1)} = (1 - \tau) x^{(k)} + \tau \frac{x^{(k)}}{s} \odot A^\top \frac{y}{Ax^{(k)}}
\end{equation}
and taking the step size $\tau = 1$ recovers the well-known multiplicative updates \eqref{eq:multiplicative-updates}.
Now considering a convex regularization term $g$, and taking $\tau = 1$, the regularized counterpart of the multiplicative updates is
\begin{equation}
  x^{(k+1)} = \rightprox^{\varphi}_{g} \left( \frac{x^{(k)}}{s} \odot A^\top \frac{y}{Ax^{(k)}} \right).
  \label{eq:regularized-multiplicative-updates}
\end{equation}

These iterates can also be interpreted as a form of implicit multiplicative updates.
Let $z^{(k)} = \frac{x^{(k)}}{s} \odot A^\top \frac{y}{Ax^{(k)}}$ be the point at which we compute the right Bregman proximal map.
For strictly positive iterates, $x^{(k+1)}$ is the solution of the following optimization problem:
\begin{equation}
    \label{eq:right-bregman-prox-2}
    x^{(k+1)} = \argmin_{x \in \mathbb{R}^n_{++}} \left\{ g(x) + D_\varphi \left( z^{(k)} \parallel x \right) \right\}.
\end{equation}
Substituting the expression of the weighted \glsxtrshort{kl} divergence \eqref{eq:weighted-kl-divergence} into \eqref{eq:right-bregman-prox-2} gives:
\begin{equation*}
    x^{(k+1)} = \argmin_{x \in \mathbb{R}^n_{++}} \left\{ g(x) + \sum_{i=1}^n s_i \left[ z^{(k)}_i \log \frac{z^{(k)}_i}{x_i} + x_i - z^{(k)}_i \right] \right\},
\end{equation*}
which is equivalent, after discarding terms independent of $x$, to:
\begin{equation*}
    x^{(k+1)} = \argmin_{x \in \mathbb{R}^n_{++}} \left\{ g(x) + \sum_{i=1}^n s_i \left[ x_i - z^{(k)}_i \log (x_i) \right] \right\}.
\end{equation*}
The first-order optimality condition is, for some $\xi^{(k+1)} \in \partial g(x^{(k+1)})$:
\begin{equation*}
    0 = \xi^{(k+1)} + s \odot \left( \mathbf{1}_n - \frac{z^{(k)}}{x^{(k+1)}} \right)
\end{equation*}
and substituting the definition of $z^{(k)}$ gives:
\begin{equation*}
    0 = \xi^{(k+1)} + s - \frac{x^{(k)}}{x^{(k+1)}}A^\top \frac{y}{Ax^{(k)}}.
\end{equation*}
Equivalently,
\begin{equation}
    x^{(k+1)} \odot (s + \xi^{(k+1)})= x^{(k)} \odot A^\top \frac{y}{Ax^{(k)}}.
    \label{eq:implicit-multiplicative-gradient}
\end{equation}
When $z^{(k)}\in\mathbb{R}^n_{++}$, the positivity of $x^{(k+1)}$ and \eqref{eq:implicit-multiplicative-gradient} imply that $s+\xi^{(k+1)}\in\mathbb{R}^n_{++}$.
Thus, we can alternatively write:
\begin{equation}
  \label{eq:implicit-multiplicative}
    x^{(k+1)} = \frac{x^{(k)}}{s + \xi^{(k+1)}} \odot A^\top \frac{y}{Ax^{(k)}}.
\end{equation}

The implicit formulation \eqref{eq:implicit-multiplicative} can be related to Green's method \cite{greenUseEmAlgorithm1990}, a common heuristic for regularizing multiplicative updates, which is also sometimes called \glsxtrfull{osl}.
The name of this heuristic comes from the fact that it replaces $\xi^{(k+1)}\in\partial g(x^{(k+1)})$ by a subgradient $\xi^{(k)}\in\partial g(x^{(k)})$ evaluated at the current iterate, leading, whenever the denominator is positive, to
\begin{equation}
  x^{(k+1)} = \frac{x^{(k)}}{s+\xi^{(k)}} \odot A^\top\frac{y}{Ax^{(k)}}. \label{eq:osl-multiplicative-updates}
\end{equation}
This approximation gives inexpensive explicit updates and has been widely used in emission tomography \cite{greenBayesianReconstructionsEmission1990} and microscopy deconvolution \cite{deyRichardsonLucyAlgorithm2006}, but it does not generally guarantee convergence or descent.
We will use this implicit formulation again in the convergence analysis of section \ref{subsec:convergence-analysis-poisson}.

\section{Convergence analysis}
\label{sec:convergence-analysis-convex}

In this section, we study the convergence properties of the updates \eqref{eq:bpg-right}.
Under appropriate assumptions, we prove that the objective decreases along the iterates in the general convex setting.
We then specialize the analysis to Poisson inverse problems and to the weighted negative entropy potential \eqref{eq:weighted-negative-entropy}.
This yields sublinear convergence rates in function values for the regularized multiplicative updates \eqref{eq:regularized-multiplicative-updates}.

\subsection{Problem setting and assumptions}
\label{sec:analysis:subsec:assumptions}

We recall that we aim to minimize the composite convex objective $\Phi=f+g$ using the \glsxtrshort{rbpg} updates \eqref{eq:bpg-right}.
We impose the following standing assumptions.
\begin{assumption}[Baseline assumptions]
  \label{assumption:blanket}
  \leavevmode
  \begin{itemize}
    \item $f: \mathbb{R}^n \to \mathbb{R}\cup\{+\infty\}$ is proper, \glsxtrshort{lsc} and convex, and is differentiable on $\intdom\varphi$.
    \item $\varphi: \mathbb{R}^n \to \mathbb{R}\cup\{+\infty\}$ is of Legendre type and $\mathcal{C}^2$ on $\intdom\varphi$, with $\nabla^2\varphi(x) \succ 0$ for every $x \in \intdom\varphi$.
    \item $g: \mathbb{R}^n \to \mathbb{R}\cup\{+\infty\}$ is a proper, \glsxtrshort{lsc} and convex function with $\operatorname{dom} \varphi \subset \operatorname{dom} g$.
    \item $\inf \left\{ \Phi(x) : x \in \overline{\operatorname{dom}} \varphi \right\} > -\infty$.
  \end{itemize}
\end{assumption}

\begin{assumption}[Dual relative smoothness]
  \label{assumption:dual-relative-smoothness}
  The reparameterized function $f\circ\nabla\varphi^*: \mathbb{R}^n \to \mathbb{R} \cup \left\{+\infty \right\}$ is $L$-smooth relative to $\varphi^*$ for some $L>0$.
\end{assumption}

We now give sufficient conditions for the update to exist in $\intdom\varphi$ and be unique.
\begin{assumption}
  \label{assumption:well-definedness}
  Let $L$ be the mirror space relative smoothness constant.
  There is $\tau \in (0, 1/L]$ such that the following properties hold:
  \begin{itemize}
    \item For every $x\in\intdom\varphi$,
    \begin{equation*}
      x-\tau\left[\nabla^2\varphi(x)\right]^{-1}\nabla f(x)
      \in\intdom\varphi.
    \end{equation*}
    \item For every $z\in\intdom\varphi$, the extended-valued mapping
    \begin{equation*}
      u\mapsto
      \begin{cases}
        \tau g(u)+D_\varphi(z\parallel u), & u\in\intdom\varphi,\\
        +\infty, & u\notin\intdom\varphi
      \end{cases}
    \end{equation*}
    is \glsxtrshort{lsc} and its sublevel sets are bounded on $\mathbb{R}^n$.
    \item For every $z\in\intdom\varphi$, the right Bregman proximal mapping
    \begin{equation*}
      \rightprox^\varphi_{\tau g}(z)
      = \argmin_{u\in\intdom\varphi}
      \left\{\tau g(u)+D_\varphi(z\parallel u)\right\}
    \end{equation*}
    is single-valued.
  \end{itemize}
\end{assumption}
Let us define the mapping:
\begin{equation}
  T_\tau(x) = \argmin_{u \in \intdom\varphi} \left\{ g(u) + \left\langle \left[ \nabla^2 \varphi(x) \right]^{-1} \nabla f(x), \nabla \varphi(u) - \nabla \varphi(x) \right\rangle + \frac{1}{\tau} D_\varphi(x \parallel u) \right\}.
  \label{eq:general-mapping}
\end{equation}
\begin{lemma}[Well-definedness]
  Under assumptions \ref{assumption:blanket} and \ref{assumption:well-definedness}, $T_\tau(x)$ is well-defined and single-valued from $\intdom \varphi$ to $\intdom\varphi$.
  Moreover,
  \begin{equation}
    T_\tau(x)
    =\rightprox^\varphi_{\tau g}\!\left(
      x-\tau\left[\nabla^2\varphi(x)\right]^{-1}\nabla f(x)
    \right).
  \end{equation}
  \label{lem:well-definedness}
\end{lemma}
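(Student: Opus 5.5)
The plan is to show that the objective defining $T_\tau(x)$ coincides, up to an additive constant independent of $u$ and a positive factor $1/\tau$, with the objective of the right Bregman proximal problem at $z = x-\tau[\nabla^2\varphi(x)]^{-1}\nabla f(x)$. Well-definedness and single-valuedness then follow directly from \cref{assumption:well-definedness}.

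Fix $x\in\intdom\varphi$ and set $p=[\nabla^2\varphi(x)]^{-1}\nabla f(x)$, so that $z=x-\tau p$. By the first item of \cref{assumption:well-definedness}, $z\in\intdom\varphi$. Expanding the definition \eqref{eq:bregman-divergence}, for $u\in\intdom\varphi$,
\begin{equation*}
  D_\varphi(z\parallel u)-D_\varphi(x\parallel u) = \varphi(z)-\varphi(x)-\langle \nabla\varphi(u), z-x\rangle = \varphi(z)-\varphi(x)+\tau\langle \nabla\varphi(u),p\rangle .
\end{equation*}
Hence
\begin{equation*}
  \langle p,\nabla\varphi(u)-\nabla\varphi(x)\rangle+\tfrac1\tau D_\varphi(x\parallel u) = \tfrac1\tau D_\varphi(z\parallel u) + c(x),
\end{equation*}
where $c(x)=\tfrac1\tau(\varphi(x)-\varphi(z))-\langle p,\nabla\varphi(x)\rangle$ does not depend on $u$. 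The objective in \eqref{eq:general-mapping} is therefore $\tfrac1\tau\bigl(\tau g(u)+D_\varphi(z\parallel u)\bigr)+c(x)$ on $\intdom\varphi$, so both problems have the same set of minimizers.

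It remains to check that this common minimizer exists, is unique, and lies in $\intdom\varphi$. By the second item of \cref{assumption:well-definedness}, the extended function $h_z(u)=\tau g(u)+D_\varphi(z\parallel u)$ on $\intdom\varphi$ (set to $+\infty$ elsewhere) is lsc with bounded sublevel sets. It is also proper: $g$ is finite on $\operatorname{dom}\varphi\supset\intdom\varphi$ by \cref{assumption:blanket}, and $h_z(z)=\tau g(z)<\infty$. A proper lsc function with bounded (hence compact) sublevel sets attains its infimum, by Weierstrass. Any minimizer has finite value, so it lies in $\intdom\varphi$. The third item of \cref{assumption:well-definedness} gives uniqueness. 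Therefore $T_\tau(x)=\rightprox^\varphi_{\tau g}(z)\in\intdom\varphi$, as claimed.

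The only delicate point is the existence step: the argmin is taken over the open set $\intdom\varphi$, so one must rule out minimizing sequences escaping to the boundary. This is exactly what the lsc assumption on the extended mapping in \cref{assumption:well-definedness} provides. The rest is the algebraic identity above.
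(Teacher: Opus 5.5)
Your proof is correct and follows essentially the same route as the paper. Both rewrite the objective of $T_\tau(x)$, up to the factor $1/\tau$ and a $u$-independent constant, as $\tau g(u)+D_\varphi(z\parallel u)$ at the natural-gradient point $z$, then invoke the three items of the well-definedness assumption (interiority of $z$, existence via lower semicontinuity and bounded sublevel sets, uniqueness); your explicit constant $c(x)$ and the properness check $h_z(z)=\tau g(z)<\infty$ just spell out steps the paper states without detail.
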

\begin{proof}
  Multiplying the objective in \eqref{eq:general-mapping} by $\tau$ and expanding $D_\varphi(x\parallel u)$, the function to minimize is equal, up to additive terms independent of $u$, to the extended-valued mapping in assumption \ref{assumption:well-definedness}, evaluated at
  \begin{equation*}
    z=x-\tau\left[\nabla^2\varphi(x)\right]^{-1}\nabla f(x).
  \end{equation*}
  This mapping is proper because $z\in\intdom\varphi$ and $\operatorname{dom}\varphi\subset\operatorname{dom}g$.
  By assumption \ref{assumption:well-definedness}, the function being minimized is \glsxtrshort{lsc}, with bounded, thus compact, sublevel sets.
  Hence, it attains its minimum on $\mathbb{R}^n$.
  Since the extended objective is infinite outside $\intdom\varphi$, every minimizer belongs to $\intdom\varphi$.
  Finally, the uniqueness of the minimizer results from assumption \ref{assumption:well-definedness}.
\end{proof}

Finding a minimal, universally relevant set of well-definedness conditions is not straightforward, even for the standard left Bregman proximal mapping.
See, for instance, the discussion around \cite[lemma~2]{bauschkeDescentLemmaLipschitz2017}.
Assumption \ref{assumption:well-definedness} is deliberately stated as a convenient sufficient condition for us to perform the analysis in the Poisson setting we are interested in.
Ideally, we would like to have a more general set of assumptions to guarantee that the natural-gradient step \eqref{eq:preconditioned-gradient-descent} remains in the interior of the domain of $\varphi$.
We verify that all of the points of assumption \ref{assumption:well-definedness} hold in the Poisson setting in proposition \ref{prop:poisson-assumptions}.

\subsection{Properties in the general convex case}

Since our algorithm can be viewed as a \glsxtrshort{bpg} update applied to a reparameterized objective, a natural starting point would be to apply existing convergence results for convex \glsxtrshort{bpg} \cite{bauschkeDescentLemmaLipschitz2017,luRelativelySmoothConvex2018,teboulleSimplifiedViewFirst2018}.
This would yield sublinear convergence rates in function values \cite[theorem 4.1]{teboulleSimplifiedViewFirst2018}, and convergence of the sequence of mirror iterates $\left( \eta^{(k)} \right)_{k \in \mathbb{N}}$ to a minimizer of the reparameterized objective $\Phi \circ \nabla \varphi^*$ \cite[theorem 4.2]{teboulleSimplifiedViewFirst2018}.
One could then relate minimizers of $\Phi \circ \nabla \varphi^*$ to those of the original objective $\Phi$ and translate these guarantees back to the primal iterates $\left( x^{(k)} \right)_{k \in \mathbb{N}}$.

However, a major obstacle to this approach is that the reparameterized objective $\Phi \circ \nabla \varphi^*$ need not be convex, even when $\Phi$ is.
Indeed, requiring the change of variables $\nabla \varphi^*$ to preserve convexity is a strong assumption, and it fails, for instance, for the weighted negative entropy \eqref{eq:weighted-negative-entropy} we use in the Poisson setting.
A possible alternative is to rely on the convergence analysis for non-convex \glsxtrshort{bpg} \cite{bolteFirstOrderMethods2018}.
This route, however, requires additional assumptions on the regularization term $g$ which are too restrictive to be practical.
For completeness, we detail this alternate analysis of \glsxtrshort{rbpg} using tools from the non-convex literature in Appendix \ref{appendix:sec:non-convex}.

In the rest of this section, we therefore follow yet a different route.
Rather than working in the reparameterized mirror coordinates, we work directly in primal coordinates and exploit the convexity of $\Phi$ to establish descent properties for the iterates $\left( x^{(k)} \right)_{k \in \mathbb{N}}$.
This approach uses only the standing assumptions \ref{assumption:blanket}--\ref{assumption:well-definedness}.

We first recall a well-known three-point identity for Bregman divergences:
\begin{lemma}[Three-point identity {\cite[lemma 3.1]{chenConvergenceAnalysisProximalLike1993}}]
    \label{lem:three-point-identity}
    Let $\varphi$ be a Legendre function and $D_\varphi$ its associated Bregman divergence.
    Then, for all $x, y, z \in \intdom \varphi$:
    \begin{equation}
        D_\varphi(x \parallel z) = D_\varphi(x \parallel y) + D_\varphi(y \parallel z) + \langle \nabla \varphi(y) - \nabla \varphi(z), x - y \rangle.
    \end{equation}
    Alternatively, taking $z = x$ gives a two-point identity:
    \begin{equation}
        D_\varphi(x \parallel y) + D_\varphi(y \parallel x) = \langle \nabla \varphi(y) - \nabla \varphi(x), y - x \rangle.
    \end{equation}
\end{lemma}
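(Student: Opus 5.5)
The three-point identity follows from expanding all four terms with the definition \eqref{eq:bregman-divergence}. Since $x,y,z\in\intdom\varphi$, where $\varphi$ is differentiable, every quantity below is finite. I would write
\begin{align*}
  D_\varphi(x\parallel y) &= \varphi(x)-\varphi(y)-\langle\nabla\varphi(y),x-y\rangle,\\
  D_\varphi(y\parallel z) &= \varphi(y)-\varphi(z)-\langle\nabla\varphi(z),y-z\rangle,
\end{align*}
and add them. The $\varphi(y)$ terms cancel, which leaves $\varphi(x)-\varphi(z)-\langle\nabla\varphi(y),x-y\rangle-\langle\nabla\varphi(z),y-z\rangle$.

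Next I would add the cross term $\langle\nabla\varphi(y)-\nabla\varphi(z),x-y\rangle$. The two $\langle\nabla\varphi(y),x-y\rangle$ contributions cancel. The two remaining $\nabla\varphi(z)$ terms combine as
\[
  -\langle\nabla\varphi(z),x-y\rangle-\langle\nabla\varphi(z),y-z\rangle=-\langle\nabla\varphi(z),x-z\rangle .
\]
The total is then $\varphi(x)-\varphi(z)-\langle\nabla\varphi(z),x-z\rangle=D_\varphi(x\parallel z)$, which is the first identity.

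For the two-point identity, I would substitute $z=x$. The left-hand side becomes $D_\varphi(x\parallel x)=0$, so
\[
  0=D_\varphi(x\parallel y)+D_\varphi(y\parallel x)+\langle\nabla\varphi(y)-\nabla\varphi(x),x-y\rangle .
\]
Moving the inner product to the other side and flipping the sign of $x-y$ gives $\langle\nabla\varphi(y)-\nabla\varphi(x),y-x\rangle$, which is the stated form.

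There is no genuine obstacle in this argument; it is pure bookkeeping of the cross terms. The only point to check is that all evaluations of $\nabla\varphi$ are at points of $\intdom\varphi$, and this holds by hypothesis.
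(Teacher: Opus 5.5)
Your proposal is correct and takes the same approach as the paper, which proves the identity by expanding the definition of the Bregman divergence and defers the bookkeeping to Chen--Teboulle. Your explicit cancellation of the cross terms and the substitution $z=x$ with $D_\varphi(x\parallel x)=0$ carry out exactly that computation.
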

\begin{proof}
    Expanding the definition of the Bregman divergences gives the expected result, see \cite[lemma 3.1]{chenConvergenceAnalysisProximalLike1993} for detailed computations.
\end{proof}
We will also need the following descent inequality in primal coordinates, which is an immediate consequence of the relative smoothness of the reparameterized function $f \circ \nabla \varphi^*$:
\begin{lemma}[Descent inequality]
  \label{lem:descent-inequality-reparameterized}
  Suppose assumptions \ref{assumption:blanket} and \ref{assumption:dual-relative-smoothness} hold.
  Then for all $u,x\in\intdom\varphi$, we have:
  \begin{equation}
    f(u) \leq f(x) + \langle \left[ \nabla^2 \varphi(x) \right]^{-1} \nabla f(x), \nabla \varphi(u) - \nabla \varphi(x) \rangle + L D_\varphi(x \parallel u).
  \end{equation}
\end{lemma}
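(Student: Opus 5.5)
The plan is to apply the descent lemma \eqref{lem:descent} in mirror coordinates and translate every term back to primal variables. By assumption \ref{assumption:dual-relative-smoothness}, $h := f\circ\nabla\varphi^*$ is $L$-smooth relative to $\varphi^*$. By definition \ref{def:relative-smoothness}, since $h$ is differentiable on $\intdom\varphi^*$ (as a composition of $f$, differentiable on $\intdom\varphi$, with $\nabla\varphi^*$, which maps $\intdom\varphi^*$ into $\intdom\varphi$ by point 1 of proposition \ref{prop:legendre-properties}), this gives for all $\eta,\eta'\in\intdom\varphi^*$
\begin{equation*}
  h(\eta') \le h(\eta) + \langle \nabla h(\eta), \eta'-\eta\rangle + L\, D_{\varphi^*}(\eta'\parallel\eta).
\end{equation*}

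Given $u,x\in\intdom\varphi$, I will set $\eta=\nabla\varphi(x)$ and $\eta'=\nabla\varphi(u)$. Both lie in $\intdom\varphi^*$ by point 1 of proposition \ref{prop:legendre-properties}, and $\nabla\varphi^*(\eta)=x$, $\nabla\varphi^*(\eta')=u$. Hence $h(\eta)=f(x)$ and $h(\eta')=f(u)$. The three terms then translate as follows.
\begin{itemize}
  \item By the chain rule \eqref{eq:reparameterized-gradient} and point 3 of proposition \ref{prop:legendre-properties}, $\nabla h(\eta)=\nabla^2\varphi^*(\nabla\varphi(x))\nabla f(x)=[\nabla^2\varphi(x)]^{-1}\nabla f(x)$.
  \item The increment is $\eta'-\eta=\nabla\varphi(u)-\nabla\varphi(x)$.
  \item By point 2 of proposition \ref{prop:legendre-properties}, $D_{\varphi^*}(\nabla\varphi(u)\parallel\nabla\varphi(x)) = D_\varphi(x\parallel u)$.
\end{itemize}
Substituting these into the mirror-space inequality gives exactly the claimed bound.

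The only delicate point is the regularity needed to invoke the descent lemma for $h$. Definition \ref{def:relative-smoothness} phrases relative smoothness as convexity of $L\varphi^*-h$ on $\intdom\varphi^*$. The gradient inequality for the convex differentiable function $L\varphi^*-h$, with $\varphi^*$ differentiable on $\intdom\varphi^*$ by lemma \ref{lem:legendre-conjugacy}, yields the descent inequality directly. This step needs only first-order differentiability of $h$, so the $\mathcal C^2$ assumption on $\varphi$ is used solely for the Hessian identity in point 3. That identity needs $\varphi^*$ to be $\mathcal C^2$ near $\nabla\varphi(x)$, which follows from the inverse function theorem because $\nabla^2\varphi\succ0$.
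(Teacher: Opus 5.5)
Your proof is correct and follows the paper's argument. Like the paper, you apply the descent inequality \eqref{lem:descent} to $f\circ\nabla\varphi^*$ in mirror coordinates, substitute $\eta=\nabla\varphi(x)$ and $\eta'=\nabla\varphi(u)$, and translate the gradient via the chain rule with point 3 and the divergence via point 2 of proposition \ref{prop:legendre-properties}. One minor imprecision: the $\mathcal C^2$ regularity is not used \emph{solely} for point 3, since differentiability of $h$ itself already requires $\nabla\varphi^*$ to be differentiable, but your inverse-function-theorem remark supplies exactly this.
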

\begin{proof}
  By applying the descent inequality \eqref{lem:descent} to the reparameterized objective $f \circ \nabla \varphi^*$, we have for any $\eta,\, \xi\in\intdom\varphi^*$:
  \begin{equation*}
    (f \circ \nabla \varphi^*)(\xi) \leq (f \circ \nabla \varphi^*)(\eta) + \langle \nabla (f \circ \nabla \varphi^*)(\eta), \xi - \eta \rangle + L D_{\varphi^*}(\xi \parallel \eta).
  \end{equation*}
  We use the change of variables $u=\nabla\varphi^*(\xi)$ and $x = \nabla \varphi^*(\eta)$ to rewrite the inequality as:
  \begin{equation*}
    f(u) \leq f(x) + \langle \nabla (f \circ \nabla \varphi^*)(\eta), \nabla \varphi(u) - \nabla \varphi(x) \rangle + L D_{\varphi^*}(\nabla \varphi(u) \parallel \nabla \varphi(x)).
  \end{equation*}
  The desired inequality follows from the second point of proposition \ref{prop:legendre-properties} and equation \eqref{eq:reparameterized-gradient} which gives:
  \begin{equation*}
    D_{\varphi^*}(\nabla \varphi(u) \parallel \nabla \varphi(x)) = D_\varphi(x \parallel u).
  \end{equation*}
\end{proof}
With these two lemmas, we are now ready to prove the decrease of the composite objective $\Phi$ along the iterates.
\begin{proposition}[Function value decrease along the iterates]
  \label{prop:objective-decrease}
  Suppose assumptions \ref{assumption:blanket} and \ref{assumption:dual-relative-smoothness} hold, and let $0<\tau\leq 1/L$ satisfy assumption \ref{assumption:well-definedness}.
  Starting from $x^{(0)}\in\intdom\varphi$, let $\left\{ x^{(k)} \right\}_{k \in \mathbb{N}}$ be generated by $x^{(k+1)}=T_\tau(x^{(k)})$.
  Then the sequence of function values
  $\left( \Phi(x^{(k)}) \right)_{k \in \mathbb{N}}$ is non-increasing and satisfies
  \begin{equation}
    \Phi(x^{(k+1)}) - \Phi(x^{(k)}) \le \left(L-\frac{1}{\tau}\right) D_\varphi(x^{(k)} \parallel x^{(k+1)}).
  \end{equation}
\end{proposition}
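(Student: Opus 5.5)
The proof combines the descent inequality of Lemma~\ref{lem:descent-inequality-reparameterized} with the fact that $x^{(k+1)}=T_\tau(x^{(k)})$ minimizes the model \eqref{eq:general-mapping}. Comparing the value of that model at the minimizer with its value at the feasible point $u=x^{(k)}$ produces the bound directly.

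\emph{Step 1 (descent inequality).} Fix $k$ and write $x=x^{(k)}$, $x^+=x^{(k+1)}$. Both lie in $\intdom\varphi$ by Lemma~\ref{lem:well-definedness}. Apply Lemma~\ref{lem:descent-inequality-reparameterized} with $u=x^+$:
\begin{equation*}
  f(x^+) \le f(x) + \left\langle \left[\nabla^2\varphi(x)\right]^{-1}\nabla f(x), \nabla\varphi(x^+)-\nabla\varphi(x)\right\rangle + L\, D_\varphi(x\parallel x^+).
\end{equation*}

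\emph{Step 2 (optimality of $x^+$).} By definition, $x^+$ minimizes over $u\in\intdom\varphi$ the function
\begin{equation*}
  u\mapsto g(u)+\left\langle [\nabla^2\varphi(x)]^{-1}\nabla f(x),\nabla\varphi(u)-\nabla\varphi(x)\right\rangle+\tfrac1\tau D_\varphi(x\parallel u).
\end{equation*}
The point $u=x$ is admissible. At $u=x$ the inner product vanishes and $D_\varphi(x\parallel x)=0$, so the value there is $g(x)$. Hence
\begin{equation*}
  g(x^+)+\left\langle [\nabla^2\varphi(x)]^{-1}\nabla f(x),\nabla\varphi(x^+)-\nabla\varphi(x)\right\rangle+\tfrac1\tau D_\varphi(x\parallel x^+)\le g(x).
\end{equation*}

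\emph{Step 3 (combine).} Add the two inequalities. The linear terms cancel, which leaves
\begin{equation*}
  \Phi(x^+)\le \Phi(x)+\left(L-\tfrac1\tau\right)D_\varphi(x\parallel x^+).
\end{equation*}
Since $\tau\le 1/L$ we have $L-1/\tau\le 0$, and $D_\varphi\ge 0$ for a Legendre $\varphi$. The right-hand correction is therefore nonpositive, so the sequence $\Phi(x^{(k)})$ is non-increasing.

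\emph{Main obstacle.} The argument needs $\Phi(x^{(k)})$ to be finite, so that subtracting values is legitimate. Here $f$ is finite on $\intdom\varphi$ because it is differentiable there. Also $g(x)<\infty$ because $\operatorname{dom}\varphi\subset\operatorname{dom} g$, and $g>-\infty$ because $g$ is proper. These facts justify the comparison in Step 2. No convexity of $f$ or $g$ is used beyond what guarantees well-definedness.
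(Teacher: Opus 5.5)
Your proof is correct and has the same structure as the paper's. You apply the descent inequality of Lemma~\ref{lem:descent-inequality-reparameterized} at $u=x^{(k+1)}$, add it to the minimality of the update tested against $u=x^{(k)}$, and the linear terms cancel. The only difference is in how minimality is invoked. The paper introduces $z^{(k)}$ and uses the three-point identity so that it can appeal to minimality of $\rightprox^\varphi_{\tau g}(z^{(k)})$. You use minimality of the $T_\tau$ model \eqref{eq:general-mapping} directly, which avoids that identity and does not need $z^{(k)}\in\intdom\varphi$ at this step.
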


\begin{proof}
  Let us define:
  \begin{equation*}
    z^{(k)}=x^{(k)}-\tau\left[\nabla^2\varphi(x^{(k)})\right]^{-1}
    \nabla f(x^{(k)}).
  \end{equation*}
  Applying lemma \ref{lem:descent-inequality-reparameterized} gives
  \begin{align*}
    f(x^{(k+1)})-f(x^{(k)}) \leq &\left\langle \left[\nabla^2\varphi(x^{(k)})\right]^{-1}\nabla f(x^{(k)}), \nabla\varphi(x^{(k+1)})-\nabla\varphi(x^{(k)}) \right\rangle \\
                                &+L D_\varphi(x^{(k)}\parallel x^{(k+1)}),
  \end{align*}
  which can be rewritten as:
  \begin{equation}
    \label{eq:descent-primal}
    f(x^{(k+1)})-f(x^{(k)}) \leq \frac{1}{\tau} \left\langle z^{(k)}-x^{(k)}, \nabla\varphi(x^{(k)}) - \nabla\varphi(x^{(k+1)})\right\rangle + L D_\varphi(x^{(k)}\parallel x^{(k+1)}).
  \end{equation}
  Applying the three-point identity of lemma \ref{lem:three-point-identity} with $x=z^{(k)}$, $y=x^{(k)}$, and $z=x^{(k+1)}$ gives:
  \begin{align*}
    \left\langle z^{(k)}-x^{(k)}, \nabla \varphi(x^{(k)}) - \nabla \varphi(x^{(k+1)}) \right\rangle = ~&D_\varphi(z^{(k)}\parallel x^{(k+1)}) \\
    &-D_\varphi(z^{(k)}\parallel x^{(k)}) -D_\varphi(x^{(k)}\parallel x^{(k+1)}),
  \end{align*}
  which combined with \eqref{eq:descent-primal} yields:
  \begin{equation}
    f(x^{(k+1)})-f(x^{(k)}) \leq \frac{1}{\tau} \left(D_\varphi(z^{(k)}\parallel x^{(k+1)}) -D_\varphi(z^{(k)}\parallel x^{(k)}) \right) + \left( L - \frac{1}{\tau} \right) D_\varphi(x^{(k)}\parallel x^{(k+1)}).
    \label{eq:proof:function-value-decrease:eq-f}
  \end{equation}
  We will now upper bound the $g$ part of the objective.
  Recall that:
  \begin{equation*}
    x^{(k+1)}=\rightprox^\varphi_{\tau g}(z^{(k)}),
  \end{equation*}
  hence by definition of the minimum we have:
  \begin{equation}
    g(x^{(k+1)}) - g(x^{(k)}) \leq \frac{1}{\tau} \left(D_\varphi(z^{(k)}\parallel x^{(k)}) - D_\varphi(z^{(k)}\parallel x^{(k+1)})\right).
    \label{eq:proof:function-value-decrease:eq-g}
  \end{equation}
  We complete the proof by summing both inequalities \eqref{eq:proof:function-value-decrease:eq-f} and \eqref{eq:proof:function-value-decrease:eq-g} and obtain:
  \begin{equation}
    \Phi(x^{(k+1)}) - \Phi(x^{(k)}) \leq \left(L-\frac{1}{\tau}\right) D_\varphi(x^{(k)}\parallel x^{(k+1)}).
  \end{equation}
\end{proof}
Under these general assumptions, the preceding argument establishes monotonicity but does not yield a function-value convergence rate.

\subsection{Convergence rate in the Poisson setting}
\label{subsec:convergence-analysis-poisson}

We now focus on the Poisson inverse problem setting, where $f$ is the Poisson \glsxtrshort{nll} defined in \eqref{eq:poisson-nll}, and $\varphi$ is the weighted negative entropy defined in \eqref{eq:weighted-negative-entropy}.
The descent inequality derived in the general convex setting will remain a key ingredient in what follows.
To obtain a convergence rate in function values for the Poisson case, we combine it with the recent analysis developed in \cite{zhaoConvergenceRateAnalysis2023} for the multiplicative updates.
We follow the same proof pattern and extend it to the regularized setting and the updates \eqref{eq:regularized-multiplicative-updates}.

Throughout the following analysis, we recall that we assume $A$ to be a nonnegative matrix with at least one positive entry in each row and each column, and that the data $y \in \mathbb{R}^m_{++}$.
We now verify that the assumptions in section \ref{sec:analysis:subsec:assumptions} hold for this setting.
\begin{proposition}
    \label{prop:poisson-assumptions}
    Let $g:\mathbb{R}^n\to\mathbb{R}\cup\{+\infty\}$ be proper, \glsxtrshort{lsc}, convex, and bounded from below, with $\operatorname{dom}\varphi\subset\operatorname{dom}g$, such that $\Phi=f+g$ has bounded sublevel sets on $\mathbb{R}^n_+$.
    Then assumptions \ref{assumption:blanket}, \ref{assumption:dual-relative-smoothness}, and \ref{assumption:well-definedness} hold with $L=1$ and $\tau=1$.
    Moreover, the solution set $\arg\min_{x\in\mathbb{R}^n_+}\Phi(x)$ is nonempty.
\end{proposition}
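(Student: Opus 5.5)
We check the three assumptions one item at a time and then prove existence of a minimizer.

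\textbf{Assumption \ref{assumption:blanket}.}
The Poisson \glsxtrshort{nll} $f$ is proper, \glsxtrshort{lsc} and convex on $\mathbb{R}^n_+$, with the convention $f=+\infty$ outside. It is differentiable on $\mathbb{R}^n_{++}=\intdom\varphi$, since each row of $A$ has a positive entry and so $Ax>0$ there. The weighted entropy \eqref{eq:weighted-negative-entropy} is separable, and $s_i>0$ because every column of $A$ has a positive entry. It is therefore of Legendre type, $\mathcal C^2$, and has Hessian $\operatorname{diag}(s/x)\succ0$. The hypotheses on $g$ are exactly those required. For the lower bound on $\Phi$, we use $t-y_j\log t\ge y_j-y_j\log y_j$ for each $j$. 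This gives $f\ge\text{const}$ on $\mathbb{R}^n_+$, and $g$ is bounded below by hypothesis.

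\textbf{Assumption \ref{assumption:dual-relative-smoothness}.}
This holds with $L=1$ by lemma \ref{lem:relative-smoothness-reparameterized}.

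\textbf{Assumption \ref{assumption:well-definedness}, first item.}
With $\tau=1$, \eqref{eq:multiplicative-updates-step-size} gives $z=\frac{x}{s}\odot A^\top\frac{y}{Ax}$. Fix a coordinate $i$ and pick $j$ with $A_{ji}>0$. Then $(A^\top\frac{y}{Ax})_i\ge A_{ji}y_j/(Ax)_j>0$, using $y\in\mathbb{R}^m_{++}$. Hence $z\in\mathbb{R}^n_{++}$.

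\textbf{Assumption \ref{assumption:well-definedness}, second item.}
Fix $z\in\mathbb{R}^n_{++}$ and set $h(u)=g(u)+\sum_i s_i[z_i\log(z_i/u_i)+u_i-z_i]$ on $\mathbb{R}^n_{++}$, with $h=+\infty$ elsewhere. Each term $s_i[u_i-z_i\log u_i]$ is coercive as $u_i\to\infty$ and tends to $+\infty$ as $u_i\to0^+$. Since $g$ is bounded below, $h(u)\to+\infty$ both when $\|u\|\to\infty$ and when $u$ approaches $\partial\mathbb{R}^n_+$. This gives bounded sublevel sets. It also gives lower semicontinuity: inside $\mathbb{R}^n_{++}$, $h$ is a sum of an \glsxtrshort{lsc} function and a continuous one; at boundary points the liminf is $+\infty$ because $g$ is bounded below.

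\textbf{Assumption \ref{assumption:well-definedness}, third item.}
$u\mapsto D_\varphi(z\parallel u)$ is strictly convex in $u$, since $-z_i\log u_i$ is strictly convex for $z_i>0$. Adding the convex $g$ keeps strict convexity on the convex set $\mathbb{R}^n_{++}$, so the minimizer is unique.

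\textbf{Existence of a minimizer.}
$\Phi$ is proper, since $\Phi$ is finite on $\mathbb{R}^n_{++}\subset\operatorname{dom}g$. It is \glsxtrshort{lsc} on the closed set $\mathbb{R}^n_+$ and has bounded sublevel sets by hypothesis. A nonempty sublevel set is then compact, and Weierstrass' theorem gives a minimizer in $\mathbb{R}^n_+$.

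\textbf{Main obstacle.}
The delicate point is the lower semicontinuity of the extended map in the second item at boundary points. The blow-up of $-z_i\log u_i$ as $u_i\to0^+$, combined with the lower bound on $g$, is exactly what forces $\liminf h=+\infty$ there. The assumption $y>0$ is equally essential: it is what guarantees $z$ lies in the interior in the first item.
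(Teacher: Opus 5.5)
Your proposal is correct and follows the paper's proof essentially step by step. The shared steps are: Assumption~\ref{assumption:blanket} via $s\in\mathbb{R}^n_{++}$ and the lower bounds on $f$ and $g$; Lemma~\ref{lem:relative-smoothness-reparameterized} for $L=1$; strict positivity of the multiplicative-update point; blow-up of $D_\varphi(z\parallel\cdot)$ at the boundary and at infinity, which gives lower semicontinuity and bounded sublevel sets; strict convexity for uniqueness; and Weierstrass for existence. Your version only makes some steps more explicit than the paper does, namely the bound $t-y_j\log t\ge y_j-y_j\log y_j$, the coordinatewise positivity of $A^\top(y/Ax)$, and the liminf argument at boundary points.
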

\begin{proof}
    The Poisson \glsxtrshort{nll} $f$ is proper, \glsxtrshort{lsc}, convex, differentiable on $\mathbb{R}^n_{++}$, and bounded from below.
    Since every column of $A$ contains a positive entry, $s=A^\top\mathbf{1}_m\in\mathbb{R}^n_{++}$.
    Hence the weighted negative entropy $\varphi$ is of Legendre type, is $\mathcal{C}^2$ on $\mathbb{R}^n_{++}$, and satisfies
    \begin{equation*}
        \nabla^2\varphi(x)=\operatorname{diag}(s/x)\succ0.
    \end{equation*}
    Since both $f$ and $g$ are bounded from below, $\inf_{x\in\mathbb{R}^n_+}\Phi(x)>-\infty$.
    Together with the assumptions on $g$, this verifies assumption \ref{assumption:blanket}.
    Assumption \ref{assumption:dual-relative-smoothness} holds with $L=1$ by lemma \ref{lem:relative-smoothness-reparameterized}.
    It remains to verify assumption \ref{assumption:well-definedness} for $\tau=1$.
    For every $x\in\mathbb{R}^n_{++}$, the preconditioned-gradient point is
    \begin{equation*}
        x-\left[\nabla^2\varphi(x)\right]^{-1}\nabla f(x)
        =\frac{x}{s}\odot A^\top\frac{y}{Ax}\in\mathbb{R}^n_{++},
    \end{equation*}
    where strict positivity follows from the assumptions on $A$ and $y$.
    For every $z\in\mathbb{R}^n_{++}$, the mapping
    \begin{equation}
      \label{eq:proximal-weighted-negentropy-mapping}
        u\mapsto g(u)+D_\varphi(z\parallel u)
    \end{equation}
    extended by $+\infty$ outside $\mathbb{R}^n_{++}$ is \glsxtrshort{lsc} and its sublevel sets are bounded, because $g$ is \glsxtrshort{lsc} and bounded from below while $D_\varphi(z\parallel\cdot)$ diverges at the boundary of the positive orthant and as $\|u\|\to+\infty$.

    More precisely, let $z\in\mathbb{R}^n_{++}$ and $u\in\mathbb{R}^n_+\setminus\mathbb{R}^n_{++}$.
    There exists an index $i$ such that $u_i=0$, while $s_i z_i>0$.
    Using the conventions for the generalized \glsxtrshort{kl} divergence stated in the introduction, the corresponding term $s_i z_i\log(z_i/u_i)$ is equal to $+\infty$, and hence $D_\varphi(z\parallel u)=+\infty$.
    Hence, none of the minimizers of \eqref{eq:proximal-weighted-negentropy-mapping} can lie on the boundary of the positive orthant.

    Moreover, $D_\varphi(z\parallel\cdot)$ is strictly convex on $\mathbb{R}^n_{++}$.
    Since $g$ is convex, the proximal objective is therefore strictly convex, so its minimizer is unique and the right Bregman proximal mapping is single-valued.
    Thus assumption \ref{assumption:well-definedness} holds.

    Finally, $\Phi$ is proper and \glsxtrshort{lsc}, and its sublevel sets are bounded by assumption.
    Thus, the solution set $\arg\min_{x\in\mathbb{R}^n_+}\Phi(x)$ is nonempty.
\end{proof}

Proposition \ref{prop:poisson-assumptions} allows us to apply the results of the previous subsection.
In particular, starting from $x^{(0)}\in\mathbb{R}^n_{++}$, the natural-gradient point \eqref{eq:preconditioned-gradient-descent}, given by $z^{(k)} = \frac{x^{(k)}}{s} \odot A^\top \frac{y}{Ax^{(k)}}$ in the Poisson case, is strictly positive at every iteration, and the subsequent right Bregman proximal step cannot return a point on the boundary of the positive orthant.
Therefore the regularized multiplicative updates are well-defined, remain in $\mathbb{R}^n_{++}$, and generate a non-increasing sequence of objective values.
For the following analysis, we rely on the implicit formulation \eqref{eq:implicit-multiplicative} derived in section \ref{subsec:poisson-application}, which we recall here:
\begin{equation*}
    x^{(k+1)} = \frac{x^{(k)}}{s + \xi^{(k+1)}} \odot A^\top \frac{y}{Ax^{(k)}},
    \qquad \xi^{(k+1)}\in\partial g(x^{(k+1)}).
\end{equation*}
Equivalently, the iterates satisfy the optimality relation \eqref{eq:implicit-multiplicative-gradient}:
\begin{equation*}
    x^{(k+1)} \odot (s + \xi^{(k+1)}) = x^{(k)} \odot A^\top \frac{y}{Ax^{(k)}}.
\end{equation*}

For every $a\in\mathbb{R}^n_{++}$, we define the associated scaled simplex
\begin{equation*}
    \Delta_n(a):=\left\{x\in\mathbb{R}^n_+:\langle a,x\rangle=\|y\|_1\right\}.
\end{equation*}
In the unregularized case where $g=0$, it is well known that the multiplicative updates \eqref{eq:multiplicative-updates} stay within the scaled simplex $\Delta_n(s)$.
Indeed, a direct computation gives:
\begin{align*}
  \langle s, x^{(k+1)} \rangle &= \langle s, \frac{x^{(k)}}{s} \odot A^\top \frac{y}{Ax^{(k)}} \rangle \\
  &= \langle x^{(k)}, A^\top \frac{y}{A x^{(k)}} \rangle \\
  &= \langle A x^{(k)}, \frac{y}{A x^{(k)}} \rangle \\
  &= \| y \|_1.
\end{align*}
A similar property holds in the regularized case and directly follows from the implicit gradient expression \eqref{eq:implicit-multiplicative-gradient} of the updates:
\begin{lemma}[Invariant]
    \label{lem:invariant}
    Let $x^{(k)}$ and $x^{(k+1)}$ be two consecutive iterates of the regularized multiplicative updates \eqref{eq:regularized-multiplicative-updates}, and let $\xi^{(k+1)} \in \partial g(x^{(k+1)})$ be such that
    \begin{equation}
        x^{(k+1)} \odot (s + \xi^{(k+1)})= x^{(k)} \odot A^\top \frac{y}{Ax^{(k)}}.
    \end{equation}
    Then $s+\xi^{(k+1)}\in\mathbb{R}^n_{++}$ and $x^{(k+1)}\in\Delta_n(s+\xi^{(k+1)})$.
\end{lemma}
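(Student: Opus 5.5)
The plan is to read both claims directly off the coordinatewise identity
\begin{equation*}
  x^{(k+1)}_i\,(s_i+\xi^{(k+1)}_i) = x^{(k)}_i\,\Big[A^\top \tfrac{y}{Ax^{(k)}}\Big]_i, \qquad i=1,\dots,n,
\end{equation*}
combined with the strict positivity facts established in proposition \ref{prop:poisson-assumptions}. That proposition shows that the iterates stay in $\mathbb{R}^n_{++}$ when started from $x^{(0)}\in\mathbb{R}^n_{++}$, and that the right Bregman proximal step never returns a boundary point. So throughout I will take $x^{(k)},x^{(k+1)}\in\mathbb{R}^n_{++}$.

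\textbf{Positivity.} First, the right-hand side of the identity is strictly positive in every coordinate. Indeed $Ax^{(k)}\in\mathbb{R}^m_{++}$, because each row of $A$ has a positive entry and $x^{(k)}>0$. Together with $y\in\mathbb{R}^m_{++}$, this gives $y/Ax^{(k)}\in\mathbb{R}^m_{++}$. Since each column of $A$ has a positive entry, $A^\top (y/Ax^{(k)})\in\mathbb{R}^n_{++}$, and multiplying by $x^{(k)}>0$ keeps it positive. Dividing coordinatewise by $x^{(k+1)}_i>0$ then gives $s_i+\xi^{(k+1)}_i>0$ for every $i$, that is, $s+\xi^{(k+1)}\in\mathbb{R}^n_{++}$. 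This is the same reasoning the paper used informally right before \eqref{eq:implicit-multiplicative}.

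\textbf{Invariant.} Next I take the inner product of the identity with $\mathbf{1}_n$ and repeat the unregularized computation:
\begin{equation*}
  \langle s+\xi^{(k+1)}, x^{(k+1)}\rangle = \Big\langle x^{(k)}, A^\top \tfrac{y}{Ax^{(k)}}\Big\rangle = \Big\langle Ax^{(k)}, \tfrac{y}{Ax^{(k)}}\Big\rangle = \|y\|_1 .
\end{equation*}
Since $x^{(k+1)}\in\mathbb{R}^n_+$ and $a:=s+\xi^{(k+1)}\in\mathbb{R}^n_{++}$, this says precisely that $x^{(k+1)}\in\Delta_n(s+\xi^{(k+1)})$.

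\textbf{Main obstacle.} The only delicate point is justifying that $x^{(k+1)}$ is strictly positive and that a subgradient $\xi^{(k+1)}$ satisfying the identity actually exists. The lemma assumes the latter. The former comes from proposition \ref{prop:poisson-assumptions}, which shows that $D_\varphi(z\parallel\cdot)=+\infty$ on the boundary of the orthant. I would cite that proposition explicitly; the rest is a two-line computation.
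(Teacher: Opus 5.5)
Your proposal is correct and matches the paper's proof. The paper also gets $\langle s+\xi^{(k+1)},x^{(k+1)}\rangle=\|y\|_1$ by pairing the identity with $\mathbf{1}_n$ and moving $A^\top$ across the inner product, and it deduces $s+\xi^{(k+1)}\in\mathbb{R}^n_{++}$ from the same identity using strict positivity of the iterates and of $A^\top(y/Ax^{(k)})$. Your explicit appeal to proposition \ref{prop:poisson-assumptions} for $x^{(k+1)}\in\mathbb{R}^n_{++}$, and your choice to establish positivity first so that $\Delta_n(s+\xi^{(k+1)})$ is well defined, only spell out what the paper leaves implicit.
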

\begin{proof}
    \begin{align*}
        \langle s + \xi^{(k+1)}, x^{(k+1)} \rangle &= \langle (s + \xi^{(k+1)}) \odot x^{(k+1)}, \mathbf{1}_n \rangle \\
        &= \langle x^{(k)} \odot A^\top \frac{y}{Ax^{(k)}}, \mathbf{1}_n \rangle \\
        &= \langle A^\top \frac{y}{Ax^{(k)}}, x^{(k)} \rangle \\
        &= \langle \frac{y}{Ax^{(k)}}, A x^{(k)} \rangle \\
        &= \| y \|_1.
    \end{align*}
    The defining relation of $\xi^{(k+1)}$ in \eqref{eq:implicit-multiplicative-gradient}, together with the strict positivity of the iterates and of $A^\top(y/Ax^{(k)})$, also gives $s+\xi^{(k+1)}\in\mathbb{R}^n_{++}$.
\end{proof}
The second technical lemma is the analogue of \cite[lemma 1]{zhaoConvergenceRateAnalysis2023} and is obtained by writing the \glsxtrfull{kkt} conditions of the composite optimization problem \eqref{eq:composite-optimization-convex} for the Poisson \glsxtrshort{nll}.
\begin{lemma}[\glsxtrshort{kkt} condition]
    \label{lem:kkt-condition}
    For every solution $x^\star$ of the composite problem \eqref{eq:composite-optimization-convex} there exist $\xi^\star \in \partial g(x^\star)$ and $\nu^\star \in \mathbb{R}_+^n$ satisfying the \glsxtrshort{kkt} conditions
    \begin{equation}
        \nabla f(x^\star) + \xi^\star = \nu^\star
        \qquad\text{and}\qquad
        \nu_i^\star x_i^\star = 0
        \quad\text{for every }i\in\{1,\ldots,n\}.
    \end{equation}
    In particular, if $x^\star_i>0$ for some $i\in\{1,\ldots,n\}$, then $\nu_i^\star=0$ and $\nabla_i f(x^\star)=-\xi_i^\star$.
    Moreover, $s+\xi^\star\in\mathbb{R}^n_{++}$ and $x^\star\in\Delta_n(s+\xi^\star)$.
\end{lemma}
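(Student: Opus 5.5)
We first obtain the KKT system by writing optimality of $x^\star$ for $\min_{x\in\mathbb{R}^n}\{f(x)+g(x)+\iota_{\mathbb{R}^n_+}(x)\}$. Fermat's rule gives $0\in\partial(f+g+\iota_{\mathbb{R}^n_+})(x^\star)$. We then split this subdifferential with the sum rule, $\partial(f+g+\iota_{\mathbb{R}^n_+})(x^\star)=\nabla f(x^\star)+\partial g(x^\star)+N_{\mathbb{R}^n_+}(x^\star)$. The normal cone is $N_{\mathbb{R}^n_+}(x^\star)=\{-\nu:\nu\in\mathbb{R}^n_+,\ \nu_ix^\star_i=0\}$.

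The sum rule needs a qualification condition. Since $\operatorname{dom}\varphi=\mathbb{R}^n_+\subset\operatorname{dom}g$, $g$ is finite on the orthant, so $\mathbb{R}^n_{++}\subset\operatorname{int}\operatorname{dom} g$ and $g$ is continuous there. The condition $\operatorname{ri}\operatorname{dom}g\cap\operatorname{ri}\mathbb{R}^n_+\neq\emptyset$ therefore holds.

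The delicate point is differentiability of $f$ at $x^\star$ when $x^\star$ has zero entries. $f$ is smooth wherever $Ax>0$, so it suffices to show $Ax^\star\in\mathbb{R}^m_{++}$. Because $y\in\mathbb{R}^m_{++}$, any $x$ with some $(Ax)_j=0$ has $f(x)=+\infty$. Hence $f(x^\star)<\infty$, which holds by minimality and properness, forces $Ax^\star>0$. The function $f$ is then $\mathcal{C}^1$ on the open set $\{Ax>0\}$, which contains $x^\star$, and the sum rule applies with $\partial f(x^\star)=\{\nabla f(x^\star)\}$. Rearranging gives $\nabla f(x^\star)+\xi^\star=\nu^\star$ with complementary slackness, and the "in particular" statement is immediate.

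For the remaining claims we use $\nabla f(x)=s-A^\top(y/Ax)$. This gives
\begin{equation*}
s+\xi^\star=\nu^\star-\nabla f(x^\star)+s=\nu^\star+A^\top\frac{y}{Ax^\star}.
\end{equation*}
Here $\nu^\star\ge0$, and $A^\top(y/Ax^\star)\in\mathbb{R}^n_{++}$ because $y>0$, $Ax^\star>0$, and every column of $A$ has a positive entry. Hence $s+\xi^\star\in\mathbb{R}^n_{++}$.

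For the simplex membership we compute
\begin{equation*}
\langle s+\xi^\star,x^\star\rangle=\langle\nu^\star,x^\star\rangle+\Big\langle A^\top\frac{y}{Ax^\star},x^\star\Big\rangle=0+\Big\langle \frac{y}{Ax^\star},Ax^\star\Big\rangle=\|y\|_1.
\end{equation*}
The first term vanishes by complementarity. Together with $x^\star\in\mathbb{R}^n_+$, this gives $x^\star\in\Delta_n(s+\xi^\star)$. The only real obstacle is the justification of the sum rule and the differentiability of $f$ at boundary points, handled above through the positivity of $Ax^\star$.
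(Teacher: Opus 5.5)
Your proof is correct and follows the same route as the paper: derive the KKT system, substitute $\nabla f(x^\star)=s-A^\top(y/Ax^\star)$ to get $s+\xi^\star=A^\top(y/Ax^\star)+\nu^\star$, and use complementarity for both positivity and the identity $\langle s+\xi^\star,x^\star\rangle=\|y\|_1$. The only difference is that you justify the KKT system itself, via Fermat's rule, the sum rule with the qualification $\mathbb{R}^n_{++}\subset\operatorname{ri}\operatorname{dom}g$, and differentiability of $f$ at $x^\star$ from $Ax^\star>0$, whereas the paper simply reads it off the Lagrangian.
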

\begin{proof}
    The minimizer $x^\star \in \mathbb{R}^n_{+}$ has finite objective value $f(x^\star) < + \infty$. Since $y \in \mathbb{R}^m_{++}$, it follows that $Ax^\star \in \mathbb{R}^m_{++}$.
    Under the positivity constraint $x\in\mathbb{R}^n_+$, the Lagrangian of the composite optimization problem \eqref{eq:composite-optimization-convex} is
    \begin{equation*}
        \mathcal{L}(x, \nu) = f(x) + g(x) + \langle \nu, -x \rangle,
    \end{equation*}
    where $\nu \in \mathbb{R}^n_{+}$ is the vector of Lagrange multipliers associated with the nonnegativity constraint.
    The \glsxtrshort{kkt} conditions for optimality are
    \begin{equation}
        0 \in \partial g(x^\star) + \nabla f(x^\star) - \nu^\star
        \qquad\text{and}\qquad
        \nu_i^\star x_i^\star = 0
        \quad\text{for every }i\in\{1,\ldots,n\}.
    \end{equation}
    Using $\nabla f(x^\star)=s-A^\top(y/Ax^\star)$, we obtain, for some $\xi^\star \in \partial g(x^\star)$:
    \begin{equation*}
        s+\xi^\star=A^\top\frac{y}{Ax^\star}+\nu^\star.
    \end{equation*}
    Since $y/(Ax^\star)\in\mathbb{R}^m_{++}$ and every column of $A$ contains a positive entry, each component of $A^\top(y/Ax^\star)$ is strictly positive.
    Together with $\nu^\star \in\mathbb{R}^n_+$, this yields $s+\xi^\star\in\mathbb{R}^n_{++}$.
    Taking the inner product with $x^\star$ and using the complementarity conditions gives
    \begin{equation*}
        \langle x^\star,s+\xi^\star\rangle
        =\left\langle Ax^\star,\frac{y}{Ax^\star}\right\rangle
        +\langle x^\star,\nu^\star\rangle
        =\|y\|_1.
    \end{equation*}
    Therefore, $x^\star\in\Delta_n(s+\xi^\star)$.
\end{proof}
Finally, the last technical lemma is the regularized analogue of \cite[lemma 3]{zhaoConvergenceRateAnalysis2023}:

\begin{lemma}[Objective gap bound]
    \label{lem:objective-gap-bound}
    Let $x^\star$ be a solution of the composite problem \eqref{eq:composite-optimization-convex}, and let $\xi^\star \in \partial g(x^\star)$ and $\nu^\star \in \mathbb R_+^n$ such that
    \begin{equation*}
        s + \xi^\star = A^\top \left( \frac{y}{Ax^\star} \right) + \nu^\star,
        \qquad
        \nu_i^\star x_i^\star = 0
        \quad\text{for every }i\in\{1,\ldots,n\}.
    \end{equation*}
    Let $x\in\mathbb R_{++}^n\cap\Delta_n(s+\xi)$ where $\xi \in \partial g(x)$ satisfies $(s+\xi) \in \mathbb R_{++}^n$.
    Then
    \begin{equation}
        \Phi(x) - \Phi(x^\star) \leq \sum_{i=1}^n x_i^\star (s_i+\xi_i^\star) \log \left( \frac{[A^\top(y/Ax)]_i}{s_i+\xi_i} \right).
    \end{equation}
\end{lemma}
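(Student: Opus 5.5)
The plan is to split the gap into an $f$-part and a $g$-part, and to handle the $g$-part by convexity. Then the nonlinear $f$-part is matched with the right-hand side through two elementary inequalities for the logarithm: Jensen's inequality (concavity) and $\log t\ge 1-1/t$.

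\emph{Step 1: linearize $g$ and use the invariant.} Since $\xi\in\partial g(x)$, convexity gives $g(x)-g(x^\star)\le\langle \xi,x-x^\star\rangle$. Writing $f$ explicitly, $f(x)-f(x^\star)=\langle s,x-x^\star\rangle+\sum_j y_j\log\big((Ax^\star)_j/(Ax)_j\big)$. Adding these and using $x\in\Delta_n(s+\xi)$, i.e.\ $\langle s+\xi,x\rangle=\|y\|_1$, we get
\begin{equation*}
\Phi(x)-\Phi(x^\star)\le \|y\|_1-\langle s+\xi,x^\star\rangle+\sum_{j=1}^m y_j\log\frac{(Ax^\star)_j}{(Ax)_j}.
\end{equation*}
It therefore suffices to show that the right-hand side $R$ of the lemma dominates this quantity.

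\emph{Step 2: Jensen on the right-hand side.} Let $w_i=x_i^\star(s_i+\xi_i^\star)$. By complementarity $\nu_i^\star x_i^\star=0$, so $w_i=x_i^\star[A^\top(y/Ax^\star)]_i$; only indices with $x_i^\star>0$ contribute, and for those $s_i+\xi_i^\star=[A^\top(y/Ax^\star)]_i>0$. On such indices, split
\begin{equation*}
\log\frac{[A^\top(y/Ax)]_i}{s_i+\xi_i}=\log\frac{s_i+\xi_i^\star}{s_i+\xi_i}+\log\frac{[A^\top(y/Ax)]_i}{[A^\top(y/Ax^\star)]_i}.
\end{equation*}
The second ratio equals $\sum_j r_{ij}\,(Ax^\star)_j/(Ax)_j$ with probability weights
\begin{equation*}
r_{ij}=\frac{A_{ji}y_j/(Ax^\star)_j}{[A^\top(y/Ax^\star)]_i}.
\end{equation*}
By concavity of $\log$, its logarithm is at least $\sum_j r_{ij}\log\big((Ax^\star)_j/(Ax)_j\big)$. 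Multiplying by $w_i$ and summing over $i$, the double sum collapses because $\sum_i x_i^\star A_{ji}=(Ax^\star)_j$. This gives exactly $\sum_j y_j\log\big((Ax^\star)_j/(Ax)_j\big)$, reproducing the data-fidelity term from Step 1.

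\emph{Step 3: the remaining term.} It remains to show $\sum_i w_i\log\frac{s_i+\xi_i^\star}{s_i+\xi_i}\ge \|y\|_1-\langle s+\xi,x^\star\rangle$. Both $s+\xi$ and $s+\xi^\star$ are positive, the latter by Lemma~\ref{lem:kkt-condition}. Using $\log t\ge 1-1/t$ gives the lower bound $\sum_i w_i-\sum_i x_i^\star(s_i+\xi_i)$. Here $\sum_i w_i=\langle s+\xi^\star,x^\star\rangle=\|y\|_1$, because $x^\star\in\Delta_n(s+\xi^\star)$ by Lemma~\ref{lem:kkt-condition}. This is exactly the needed bound.

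The main obstacle is Step 2: choosing the Jensen weights so that the sum over $i$ collapses back to the sum over $j$. The key is to rewrite $w_i$ via complementarity as $x_i^\star[A^\top(y/Ax^\star)]_i$ before applying Jensen. Care is also needed to restrict to the support of $x^\star$ so that every logarithm is well defined.
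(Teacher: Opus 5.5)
Your proposal is correct and follows essentially the same route as the paper. You linearize $g$ at $x$, use the two scaled-simplex identities, and apply Jensen with the same weights: your $r_{ij}$ are the paper's $w_{ij}$ with the roles of $i$ and $j$ swapped, obtained via complementarity on the support of $x^\star$. Your termwise bound $\log t\ge 1-1/t$ in Step 3 is exactly the nonnegativity of $\mathrm{KL}\bigl(x^\star\odot(s+\xi^\star)\parallel x^\star\odot(s+\xi)\bigr)$ that the paper invokes to close the argument.
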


\begin{proof}
    We start from
    \begin{equation}
      \label{eq:rewrite-poisson-nll}
      \begin{aligned}
        f(x)-f(x^\star)
        &=\sum_{i=1}^m\left[(Ax)_i-(Ax^\star)_i
        -y_i\log\left(\frac{(Ax)_i}{(Ax^\star)_i}\right)\right] \\
        &=\langle s,x-x^\star\rangle
        +\sum_{i=1}^m y_i\log\left(\frac{(Ax^\star)_i}{(Ax)_i}\right).
      \end{aligned}
    \end{equation}
    Since $\xi \in \partial g(x)$, convexity of $g$ gives
    \begin{equation*}
        g(x)-g(x^\star) \le \langle \xi, x-x^\star \rangle,
    \end{equation*}
    hence
    \begin{equation}
        \Phi(x)-\Phi(x^\star) \leq \langle s+\xi,x-x^\star\rangle
        +\sum_{i=1}^m y_i\log\left(\frac{(Ax^\star)_i}{(Ax)_i}\right).
        \label{eq:objective-gap-preliminary-1}
    \end{equation}
    Then note that, with $x\in\Delta_n(s+\xi)$ by assumption, and $x^\star\in\Delta_n(s+\xi^\star)$ by lemma \ref{lem:kkt-condition}, we can rewrite the first term as:
    \begin{equation}
        \langle s+\xi,x-x^\star\rangle
        =\langle x^\star,\xi^\star-\xi\rangle,
        \label{eq:scaled-simplex-linear-identity}
    \end{equation}
    Substituting this identity into \eqref{eq:objective-gap-preliminary-1} yields:
    \begin{equation}
        \Phi(x)-\Phi(x^\star) \leq \langle x^\star, \xi^\star - \xi \rangle + \sum_{i=1}^m y_i \log\left(\frac{(Ax^\star)_i}{(Ax)_i}\right).
        \label{eq:objective-gap-preliminary}
    \end{equation}
    We now bound the logarithmic term.
    Let
    \begin{equation*}
        J^\star:=\left\{j\in\{1,\ldots,n\}:x_j^\star>0\right\}.
    \end{equation*}
    For every $j\in J^\star$, the \glsxtrshort{kkt} conditions give
    \begin{equation*}
        s_j+\xi_j^\star = \left[A^\top\left(\frac{y}{Ax^\star}\right)\right]_j.
    \end{equation*}
    Define the nonnegative coefficients
    \begin{equation*}
        w_{ij} := \frac{y_i a_{ij}}{(s_j+\xi_j^\star)(Ax^\star)_i}
    \end{equation*}
    which, for every $j\in J^\star$, satisfy $\sum_{i=1}^m w_{ij}=1$ and
    \begin{equation*}
        \frac{[A^\top(y/Ax)]_j}{s_j+\xi_j^\star} = \sum_{i=1}^m w_{ij} \frac{(Ax^\star)_i}{(Ax)_i}.
    \end{equation*}
    By concavity of the logarithm, for every $j\in J^\star$ we have
    \begin{equation*}
      \begin{aligned}
        \log\left(\frac{[A^\top(y/Ax)]_j}{s_j+\xi_j^\star}\right)
        &=\log\left(\sum_{i=1}^m
        w_{ij}\frac{(Ax^\star)_i}{(Ax)_i}\right) \\
        &\geq\sum_{i=1}^m
        w_{ij}\log\left(\frac{(Ax^\star)_i}{(Ax)_i}\right).
      \end{aligned}
    \end{equation*}
    Multiplying by $x_j^\star(s_j+\xi_j^\star)$ and summing over $j\in J^\star$, while noting that $x_j^\star=0$ for $j\notin J^\star$, gives
    \begin{align*}
        &\sum_{j=1}^n x_j^\star(s_j+\xi_j^\star)
        \log\left(\frac{[A^\top(y/Ax)]_j}{s_j+\xi_j^\star}\right) \\
        &\quad\geq \sum_{j\in J^\star}x_j^\star(s_j+\xi_j^\star)
        \sum_{i=1}^m w_{ij}\log\left(\frac{(Ax^\star)_i}{(Ax)_i}\right) \\
        &\quad= \sum_{i=1}^m\left(\sum_{j\in J^\star}
        x_j^\star(s_j+\xi_j^\star)w_{ij}\right)
        \log\left(\frac{(Ax^\star)_i}{(Ax)_i}\right).
    \end{align*}
    Moreover,
    \begin{align*}
        \sum_{j\in J^\star}x_j^\star(s_j+\xi_j^\star)w_{ij}
        &=\sum_{j\in J^\star}x_j^\star
        \frac{y_i a_{ij}}{(Ax^\star)_i} \\
        &=y_i\frac{\sum_{j\in J^\star}a_{ij}x_j^\star}
        {(Ax^\star)_i} \\
        &=y_i.
    \end{align*}
    Using this identity, we conclude that
    \begin{equation*}
        \sum_{j=1}^n x_j^\star(s_j+\xi_j^\star)
        \log\left(\frac{[A^\top(y/Ax)]_j}{s_j+\xi_j^\star}\right) \geq
        \sum_{i=1}^m y_i
        \log\left(\frac{(Ax^\star)_i}{(Ax)_i}\right).
    \end{equation*}
    Combining this bound with \eqref{eq:objective-gap-preliminary} and \eqref{eq:scaled-simplex-linear-identity} yields
    \begin{equation*}
        \Phi(x)-\Phi(x^\star)
        \leq \langle x^\star,\xi^\star-\xi\rangle + \sum_{j=1}^n x_j^\star(s_j+\xi_j^\star)
        \log\left(\frac{[A^\top(y/Ax)]_j}{s_j+\xi_j^\star}\right).
    \end{equation*}
    We decompose the logarithmic term as
    \begin{align*}
        \sum_{j=1}^n x_j^\star(s_j+\xi_j^\star)
        \log\left(\frac{[A^\top(y/Ax)]_j}{s_j+\xi_j^\star}\right) =& ~
        \sum_{j=1}^n x_j^\star(s_j+\xi_j^\star)
        \log\left(\frac{[A^\top(y/Ax)]_j}{s_j+\xi_j}\right) \\
        &+
        \sum_{j=1}^n x_j^\star(s_j+\xi_j^\star)
        \log\left(\frac{s_j+\xi_j}{s_j+\xi_j^\star}\right).
    \end{align*}
    Moreover,
    \begin{align*}
        \langle x^\star,\xi^\star-\xi\rangle
        +\sum_{j=1}^n x_j^\star(s_j+\xi_j^\star)
        \log\left(\frac{s_j+\xi_j}{s_j+\xi_j^\star}\right)
        &= \sum_{j=1}^n x_j^\star\left[
        (s_j+\xi_j^\star)\log\left(\frac{s_j+\xi_j}{s_j+\xi_j^\star}\right)
        +\xi_j^\star-\xi_j\right] \\
        &=-\mathrm{KL}\left(
        x^\star\odot(s+\xi^\star)
        \parallel x^\star\odot(s+\xi)\right).
    \end{align*}
    Substituting the two preceding identities into the objective-gap bound \eqref{eq:objective-gap-preliminary} gives
    \begin{equation*}
        \Phi(x)-\Phi(x^\star)
        \leq \sum_{j=1}^n x_j^\star(s_j+\xi_j^\star)
        \log\left(\frac{[A^\top(y/Ax)]_j}{s_j+\xi_j}\right) -\mathrm{KL}\left(
        x^\star\odot(s+\xi^\star)
        \parallel x^\star\odot(s+\xi)\right),
    \end{equation*}
    and since the \glsxtrshort{kl} divergence is non-negative, we get the desired result.
\end{proof}

With these three technical lemmas, we are finally ready to state our main convergence result, which is the sublinear rate $\mathcal{O}(1/k)$ in function values of the regularized multiplicative updates \eqref{eq:regularized-multiplicative-updates}.
We follow the proof pattern of \cite{zhaoConvergenceRateAnalysis2023} and make use of the previous technical lemmas to obtain a fairly simple proof using a telescoping argument.
\begin{theorem}
    \label{theorem:function-value-rate}
    Suppose the hypotheses of proposition \ref{prop:poisson-assumptions} hold.
    Let $\left(x^{(k)}\right)_{k\in\mathbb{N}}$ be the sequence generated by the regularized multiplicative updates \eqref{eq:implicit-multiplicative} from an initial point $x^{(0)}\in\mathbb{R}^n_{++}$.
    Let $x^\star\in\arg\min_{x\in\mathbb{R}^n_+}\Phi(x)$, and let $\xi^\star\in\partial g(x^\star)$ be a \glsxtrshort{kkt} subgradient given by lemma \ref{lem:kkt-condition}.
    Then, for every $k\geq1$,
    \begin{equation*}
        \Phi(x^{(k)}) - \Phi(x^\star) \leq \frac{\mathrm{KL}( x^\star \odot (s + \xi^\star) \parallel x^{(1)} \odot (s + \xi^{(1)}))}{k}.
    \end{equation*}
    Moreover, the ergodic iterate $\overline{x}^{(k)} = \frac{1}{k} \sum_{\ell=1}^k x^{(\ell)}$ satisfies
    \begin{equation*}
        \Phi(\overline{x}^{(k)}) - \Phi(x^\star) \leq \frac{\mathrm{KL}( x^\star \odot (s + \xi^\star) \parallel x^{(1)} \odot (s + \xi^{(1)}))}{k}.
    \end{equation*}
\end{theorem}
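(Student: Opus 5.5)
The plan is a telescoping argument in the variables $w^{(k)} := x^{(k)} \odot (s+\xi^{(k)})$ for $k\geq 1$. Here $\xi^{(k)}\in\partial g(x^{(k)})$ is the subgradient furnished by the optimality relation \eqref{eq:implicit-multiplicative-gradient} of the step producing $x^{(k)}$. This is why the bound starts at $k=1$: $x^{(0)}$ carries no such subgradient. We also set $w^\star := x^\star\odot(s+\xi^\star)$.

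First, rewrite the quantity appearing in the objective gap bound. Relation \eqref{eq:implicit-multiplicative-gradient} at step $k\to k+1$ reads
\begin{equation*}
  w^{(k+1)} = x^{(k)}\odot A^\top\frac{y}{Ax^{(k)}}.
\end{equation*}
Hence, componentwise,
\begin{equation*}
  \frac{[A^\top(y/Ax^{(k)})]_i}{s_i+\xi^{(k)}_i} = \frac{w^{(k+1)}_i}{w^{(k)}_i}.
\end{equation*}
Lemma \ref{lem:invariant} gives $s+\xi^{(k)}\in\mathbb{R}^n_{++}$ and $x^{(k)}\in\mathbb{R}^n_{++}\cap\Delta_n(s+\xi^{(k)})$. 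These are exactly the hypotheses of lemma \ref{lem:objective-gap-bound} with $x=x^{(k)}$ and $\xi=\xi^{(k)}$, together with the KKT data $(\xi^\star,\nu^\star)$ from lemma \ref{lem:kkt-condition}. Applying that lemma yields
\begin{equation*}
  \Phi(x^{(k)})-\Phi(x^\star)\le \sum_{i=1}^n w^\star_i \log\frac{w^{(k+1)}_i}{w^{(k)}_i}.
\end{equation*}

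Second, identify the right-hand side as a difference of KL divergences. Expanding the definition,
\begin{equation*}
  \mathrm{KL}(w^\star\parallel w^{(k)})-\mathrm{KL}(w^\star\parallel w^{(k+1)}) = \sum_i w^\star_i\log\frac{w^{(k+1)}_i}{w^{(k)}_i} + \sum_i w^{(k)}_i - \sum_i w^{(k+1)}_i .
\end{equation*}
By lemma \ref{lem:invariant}, $\sum_i w^{(k)}_i=\langle s+\xi^{(k)},x^{(k)}\rangle=\|y\|_1$ for every $k\ge1$, so the linear terms cancel. This gives the one-step inequality
\begin{equation*}
  \Phi(x^{(k)})-\Phi(x^\star)\le \mathrm{KL}(w^\star\parallel w^{(k)})-\mathrm{KL}(w^\star\parallel w^{(k+1)}).
\end{equation*}

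Third, sum over $\ell=1,\dots,k$. The sum telescopes, and dropping the nonnegative term $\mathrm{KL}(w^\star\parallel w^{(k+1)})$ gives
\begin{equation*}
  \sum_{\ell=1}^k\big(\Phi(x^{(\ell)})-\Phi(x^\star)\big)\le \mathrm{KL}(w^\star\parallel w^{(1)}).
\end{equation*}
By proposition \ref{prop:poisson-assumptions}, proposition \ref{prop:objective-decrease} applies with $L=\tau=1$, so $\Phi(x^{(\ell)})$ is non-increasing. Each summand is therefore at least $\Phi(x^{(k)})-\Phi(x^\star)$, and dividing by $k$ gives the first bound. For the ergodic iterate, convexity of $\Phi$ (Jensen) gives $\Phi(\overline{x}^{(k)})\le\frac1k\sum_{\ell=1}^k\Phi(x^{(\ell)})$, and the same summed inequality yields the second bound.

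The main obstacle is bookkeeping rather than analysis: the same subgradient $\xi^{(k)}$ must serve both as the "$\xi$" in lemma \ref{lem:objective-gap-bound} and as the subgradient in the implicit relation defining $x^{(k)}$. Only then does the ratio collapse to $w^{(k+1)}_i/w^{(k)}_i$ and the invariant $\sum_i w_i^{(k)}=\|y\|_1$ hold. If the subdifferential is not a singleton, fix $\xi^{(k)}$ once as the one produced by the prox optimality condition and use it consistently. The cancellation of linear terms also requires $\sum_i w^\star_i$ to play no role, which holds because those terms come only from the $w^{(k)}$ and $w^{(k+1)}$ parts of the KL expansion.
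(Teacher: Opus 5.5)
Your proposal is correct and follows essentially the paper's proof. Both use the energy $\mathrm{KL}(w^\star\parallel w^{(k)})$, rewrite the bound of lemma \ref{lem:objective-gap-bound} as a difference of consecutive energies via \eqref{eq:implicit-multiplicative-gradient}, telescope, and conclude with the monotonicity of proposition \ref{prop:objective-decrease} and Jensen; the only cosmetic difference is that the paper first drops the linear terms of each energy using both simplex invariants (for $x^{(k)}$ and $x^\star$), whereas you cancel them in the difference using only the iterate invariant.
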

\begin{proof}
    Let us define the Lyapunov energy:
    \begin{equation}
        \mathcal{E}_k := \mathrm{KL}( x^\star \odot (s + \xi^\star) \parallel x^{(k)} \odot (s + \xi^{(k)})).
    \end{equation}
    Since $x^{(k)}\in\Delta_n(s+\xi^{(k)})$ for all $k\geq1$ by lemma \ref{lem:invariant}, while $x^\star\in\Delta_n(s+\xi^\star)$ by lemma \ref{lem:kkt-condition}, the expression of this \glsxtrshort{kl} divergence simplifies to just the logarithmic term:
    \begin{equation}
        \mathcal{E}_k = \sum_{i=1}^n x_i^\star (s_i+\xi_i^\star) \log \left( \frac{x_i^\star (s_i + \xi_i^\star)}{x_i^{(k)} (s_i + \xi_i^{(k)})} \right).
    \end{equation}
    Now we study the difference $\mathcal{E}_k - \mathcal{E}_{k+1}$:
    \begin{equation}
      \label{eq:difference-1}
        \mathcal{E}_k - \mathcal{E}_{k+1} = \sum_{i=1}^n x_i^\star (s_i+\xi_i^\star) \log \left( \frac{x_i^{(k+1)} (s_i + \xi_i^{(k+1)})}{x_i^{(k)} (s_i + \xi_i^{(k)})} \right).
    \end{equation}
    Replacing $x^{(k+1)}$ with the implicit formulation \eqref{eq:implicit-multiplicative}, the difference \eqref{eq:difference-1} can be written as:
    \begin{equation}
        \mathcal{E}_k - \mathcal{E}_{k+1} = \sum_{i=1}^n x_i^\star (s_i+\xi_i^\star) \log \left( \frac{[A^\top(y/Ax^{(k)})]_i}{s_i + \xi_i^{(k)}} \right).
    \end{equation}
    Lemma \ref{lem:objective-gap-bound} gives:
    \begin{equation}
        0 \leq \Phi(x^{(k)}) - \Phi(x^\star) \leq \mathcal{E}_k - \mathcal{E}_{k+1}.
    \end{equation}
    We can now telescope over $k$ to get
    \begin{equation*}
        \sum_{l=1}^k \left[ \Phi(x^{(l)}) - \Phi(x^\star) \right] \leq \mathcal{E}_1 - \mathcal{E}_{k+1} \leq \mathcal{E}_1,
    \end{equation*}
    and since $\left( \Phi(x^{(k)}) \right)_{k \in \mathbb{N}}$ is non-increasing (proposition \ref{prop:objective-decrease}), we have:
    \begin{equation*}
        k \left( \Phi(x^{(k)}) - \Phi(x^\star) \right) \leq \sum_{l=1}^k \left[ \Phi(x^{(l)}) - \Phi(x^\star) \right] \leq \mathcal{E}_1,
    \end{equation*}
    which gives the first bound.
    Using the convexity of $\Phi$ we have
    \begin{equation*}
        k \Phi(\overline{x}^{(k)}) \leq \sum_{l=1}^k \Phi(x^{(l)}),
    \end{equation*}
    which further proves the second bound.
\end{proof}

\section{Numerical experiments}
\label{sec:experiments}

In this section we first derive the updates of our algorithm for several usual choices of regularizers $g$.
We then compare its performance against other algorithms for regularized Poisson inverse problems, on both deconvolution and \glsxtrlong{pet} problems.
We implement all the methods and perform our experiments using the \texttt{DeepInverse} Python library \cite{tachella2025deepinverse}.
The \glsxtrshort{pet} experiments make use of the \texttt{parallelproj} projectors \cite{schrammPARALLELPROJOpensourceFramework2024a}.

\subsection{Algorithms}
\label{subsec:algorithms}

For the Tikhonov regularization $\ell_2(x) = \frac{\lambda}{2} \| x \|_2^2$, with $\lambda > 0$, the right Bregman proximal map with respect to the weighted negative entropy is available in closed form:
\begin{equation}
    \overrightarrow{\operatorname{prox}}^\varphi_{\ell_2}(x) = \frac{-s + \sqrt{s^2 + 4 \lambda s \odot x}}{2 \lambda}.
\end{equation}
Therefore the regularized multiplicative updates \eqref{eq:regularized-multiplicative-updates} amount to:
\begin{equation}
  x^{(k+1)} = \frac{-s + \sqrt{s^2 + 4 \lambda x^{(k)} \odot A^\top \frac{y}{Ax^{(k)}} }}{2 \lambda}. \label{eq:mu-l2-algo}
\end{equation}

For the sparsity-promoting regularization $\ell_1(x) = \lambda \|x\|_1$, the right Bregman proximal map is likewise available in closed form for $x \in \mathbb{R}^n_{++}$:
\begin{equation}
    \overrightarrow{\operatorname{prox}}^\varphi_{\ell_1}(x) = \frac{s \odot x}{s + \lambda \mathbf{1}_n}
\end{equation}
which gives
\begin{equation}
    x^{(k+1)} = \frac{x^{(k)}}{s + \lambda\mathbf{1}_n} \odot A^\top \frac{y}{Ax^{(k)}}. \label{eq:mu-l1-algo}
\end{equation}

We finally consider the isotropic \glsxtrfull{tv} regularization, which promotes piecewise-constant solutions \cite{rudinNonlinearTotalVariation1992b}:
\begin{equation}
  \text{\glsxtrshort{tv}}(x) = \lambda \| D x \|_{2,1},
\end{equation}
where $D$ is the discrete image gradient and $d$ denotes the number of spatial dimensions. In our experiments, we use the forward finite differences of \texttt{DeepInverse}, with zero differences at the upper boundary of each spatial axis, and their exact adjoint $D^\top$.
Similarly to the Euclidean setting, the right Bregman proximal map of the \glsxtrshort{tv} regularization does not have a closed-form expression.
For a fixed $z\in\mathbb{R}^n_{++}$, and after discarding terms independent of the optimization variable, the proximal subproblem is
\begin{equation}
  \overrightarrow{\operatorname{prox}}^\varphi_{\text{\glsxtrshort{tv}}}(z) = \argmin_{u \in \mathbb{R}^n_{+}} \left\{ \lambda \|D u\|_{2,1} + \langle s, u - z \odot \log u\rangle \right\}. \label{eq:tv-poisson-energy}
\end{equation}
We solve \eqref{eq:tv-poisson-energy} with the \glsxtrfull{pdhg} algorithm \cite{chambolleFirstOrderPrimalDual2011}.
Let
\begin{equation}
  F(u) = \langle s, u-z\odot\log u\rangle, \qquad G(q) = \lambda\|q\|_{2,1}.
\end{equation}
The Euclidean proximal map of $F$ is separable and has the closed form
\begin{equation}
  \operatorname{prox}_{\tau F}(v) = \frac{v-\tau s+\sqrt{(v-\tau s)^2+4\tau s\odot z}}{2}, \label{eq:tv-pdhg-primal-prox}
\end{equation}
while $\operatorname{prox}_{\sigma G^*}$ is the projection $\Pi_{\mathcal{P}_\lambda}$ onto $\mathcal{P}_\lambda=\{p \in \mathbb{R}^{n\times d} \mid \|p_i\|_2\leq\lambda, \, i \in \{1, \dots, n\} \}$.
Starting from $u^{(0)}=z$, $p^{(0)}=0$, and $\bar u^{(0)}=u^{(0)}$, our inner iterates are
\begin{equation}
  \begin{cases}
    p_i^{(l+1)} = \dfrac{p_i^{(l)}+\sigma(D\bar u^{(l)})_i}{\max\left\{1,\left\|p_i^{(l)}+\sigma(D\bar u^{(l)})_i\right\|_2/\lambda\right\}}, \\[2em]
    u^{(l+1)} = \dfrac{u^{(l)}-\tau D^\top p^{(l+1)}-\tau s+\sqrt{\left(u^{(l)}-\tau D^\top p^{(l+1)}-\tau s\right)^2+4\tau s\odot z}}{2}, \\[1em]
    \bar u^{(l+1)} = 2u^{(l+1)}-u^{(l)}.
  \end{cases}
  \label{eq:mu-tv-pdhg}
\end{equation}

$F$ and $G$ are proper, \glsxtrshort{lsc}, and convex, and the subproblem admits a saddle point.
Consequently, the inner iterates converge to the unique minimizer of \eqref{eq:tv-poisson-energy} whenever $\tau\sigma\|D\|^2<1$ \cite{chambolleFirstOrderPrimalDual2011}.
In our implementation, for $d=2$ spatial dimensions, we use $\tau=\sigma=0.99/\sqrt{4d}$ together with the bound $\|D\|^2\leq4d$.
Other methods for minimizing the functional \eqref{eq:tv-poisson-energy} exist \cite{maximTomographicReconstructionPoisson2023}.
We have chosen \textsc{pdhg} for its simplicity and because, unlike other existing methods, it does not impose an upper bound on the regularization weight $\lambda$.

\subsection{Non-blind Poisson deconvolution}
\label{subsec:deconvolution}

In this experiment the goal is to recover a signal $x \in \mathbb{R}^n_+$ from noisy measurements $y \in \mathbb{R}^n_+$ of the form $y \sim \mathcal{P}(A x)$, where $A$ is a convolution operator, with $A^\top \mathbf{1}_n = \mathbf{1}_n$.
Although the convergence analysis of section \ref{subsec:convergence-analysis-poisson} assumes strictly positive observations $y\in\mathbb{R}^m_{++}$, in this experiment we allow observations that may contain zero entries, i.e. $y\in\mathbb{R}^m_+$.

\begin{figure}[h]
  \centering
  \includegraphics[width=0.7\textwidth]{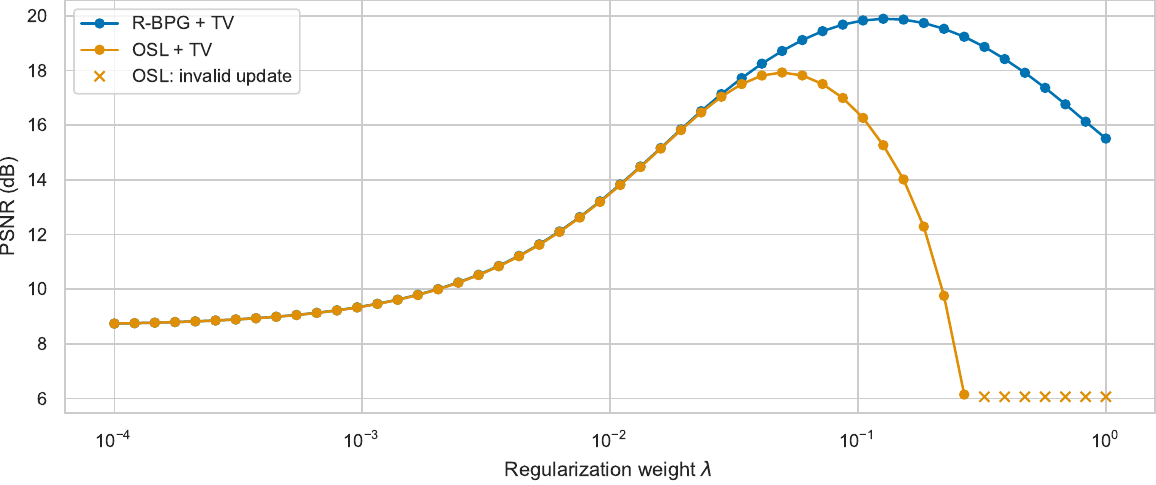}%
  \caption{\Glsxtrshort{psnr} as a function of the regularization weight for \texttt{R-BPG+TV} and \texttt{\glsxtrshort{osl}-\glsxtrshort{tv}} in the sweep of figure \ref{fig:deconvolution-tv-stability-reconstructions}. Crosses along the bottom mark \glsxtrshort{osl} runs that have diverged and have no assoaciated PSNR values.}
  \label{fig:deconvolution-tv-stability-metrics}
\end{figure}

We choose a Gaussian blur kernel with a standard deviation of 2 pixels and circular padding, and we take as ground-truth image the \texttt{butterfly} image from the \texttt{Set3C} dataset, converted to grayscale and resized to $256\times256$ pixels.
We also introduce a gain parameter $\gamma$ so that $y \sim \gamma \mathcal{P}(\frac{A x}{\gamma})$, which enables us to increase the noise level in the measurements. 
We use $\gamma = 10^{-2}$, corresponding to a measurement \glsxtrfull{snr} of approximately 17.52 dB.
We also report a comparison at the higher noise level $\gamma=10^{-1}$ in appendix \ref{app:high-noise-deconvolution}.

\begin{figure}[h]
  \centering
  \includegraphics[width=0.6\textwidth]{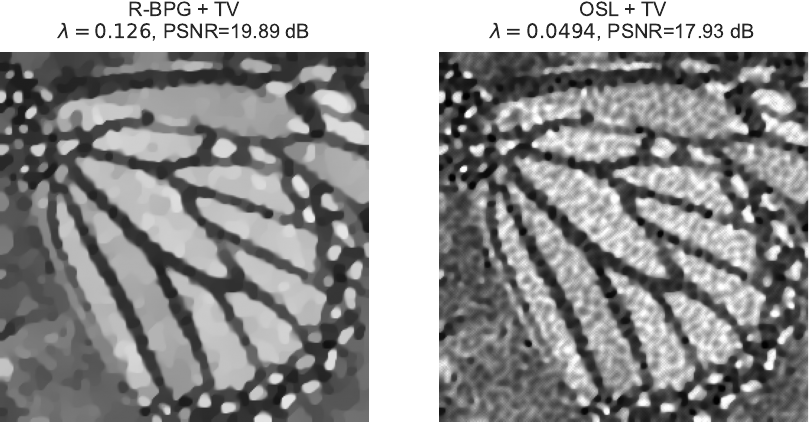}
  \caption{Reconstructions at the best tested weights from the regularization-weight sweep with gain $\gamma=10^{-1}$ and Gaussian blur standard deviation 2 pixels: \texttt{R-BPG+TV} uses $\lambda\approx0.126$ and \texttt{\glsxtrshort{osl}-\glsxtrshort{tv}} uses $\lambda\approx0.0494$.}
  \label{fig:deconvolution-tv-stability-reconstructions}
\end{figure}

We denote by \texttt{\glsxtrshort{mu}} the \glsxtrlong{mu} \eqref{eq:multiplicative-updates} without regularization, by \texttt{R-BPG+L2} the regularized multiplicative updates \eqref{eq:regularized-multiplicative-updates} with Tikhonov regularization, and \texttt{R-BPG+TV} the regularized multiplicative updates with \glsxtrshort{tv} regularization, whose right Bregman proximal step is approximated using the \textsc{pdhg} iterates \eqref{eq:mu-tv-pdhg}.
We compare these algorithms against \texttt{Mirror descent} with Burg's entropy as potential \cite{bauschkeDescentLemmaLipschitz2017}, and its Tikhonov-regularized variant \texttt{BPG+L2}.
For \texttt{Mirror descent}, we use Armijo backtracking, starting each iteration with a trial step size $\tau=1$ and halving it until the candidate $x_\tau$ is strictly positive and satisfies
\begin{equation}
  \Phi(x_\tau)\leq\Phi(x^{(k)})+c\left[\langle\nabla f(x^{(k)}),x_\tau-x^{(k)}\rangle+g(x_\tau)-g(x^{(k)})\right],
  \qquad c=10^{-4},
  \label{eq:bpg-l2-armijo}
\end{equation}
This search allows at most 50 trial steps per iteration, recomputing the proximal candidate at every trial step. 
Both \texttt{MU}, \texttt{R-BPG-L2} and \texttt{R-BPG-TV} use a unit step size.
We also compare against a popular heuristic for Poisson inverse problems sometimes called Green's algorithm or \glsxtrshort{osl} \cite{greenUseEmAlgorithm1990,greenBayesianReconstructionsEmission1990}, which we covered at the end of section \ref{sec:regularized-scheme}.  
The Tikhonov and \glsxtrshort{tv} regularized \glsxtrshort{osl} updates \eqref{eq:osl-multiplicative-updates} \cite{deyRichardsonLucyAlgorithm2006}, denoted by \texttt{OSL+L2} and \texttt{OSL+TV}.
For \texttt{R-BPG+TV}, we run 100 inner \textsc{pdhg} iterations at each outer iteration, restarting the inner solver from the current multiplicative update with a zero dual variable. 
All compared methods are initialized with an image of ones. 

\begin{figure}[h]
  \centering
  \includegraphics[width=0.8\textwidth]{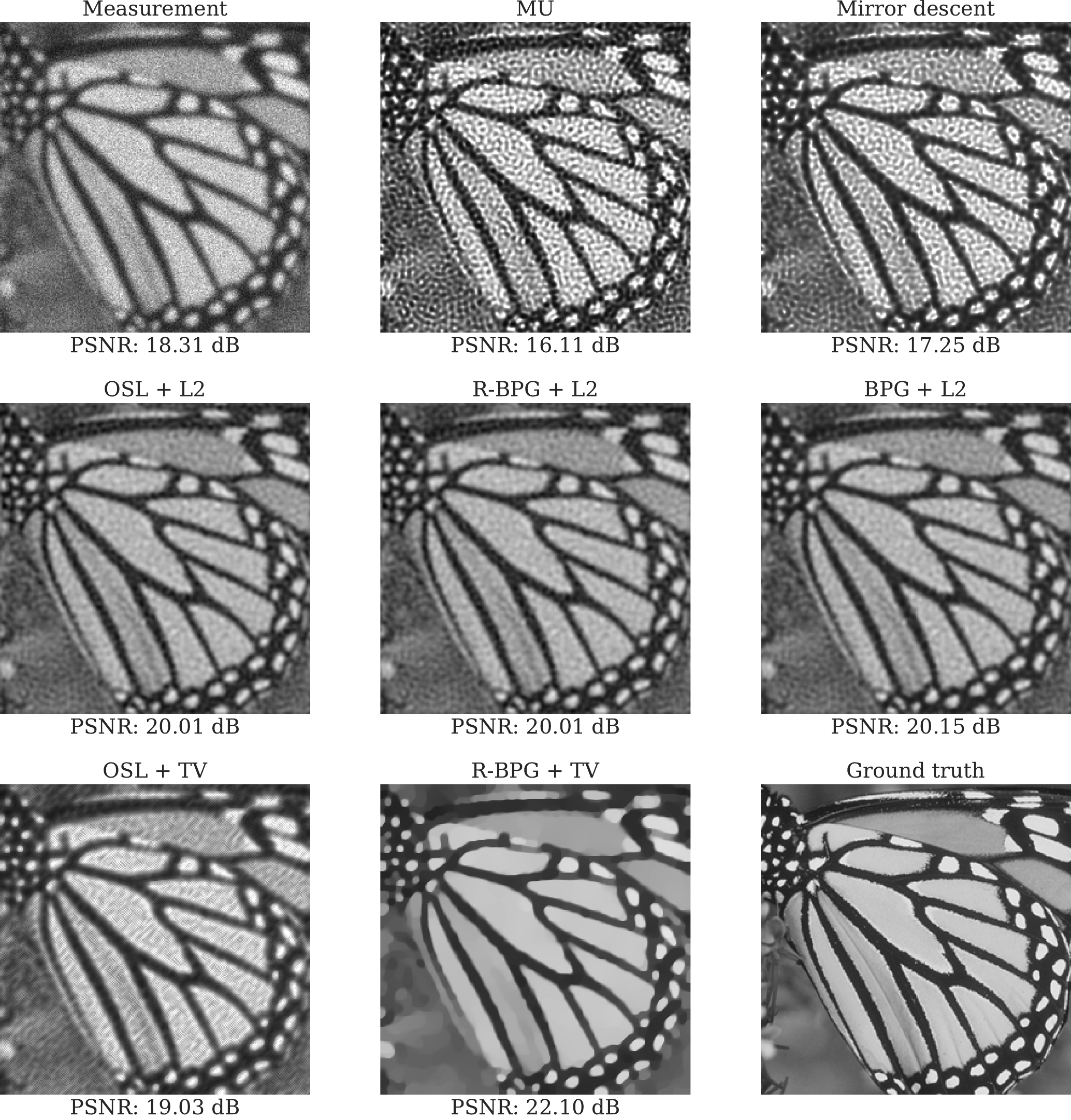}
  \caption{Non-blind deconvolution of the \texttt{butterfly} image from the \texttt{Set3C} dataset, with gain $\gamma = 10^{-2}$. Tikhonov regularization uses $\lambda=0.15$, and \glsxtrshort{tv} regularization uses $\lambda=0.08$.}
  \label{fig:deconvolution-reconstructions-1}
\end{figure}

For the TV stability comparison, we sweep 50 logarithmically spaced regularization weights between $10^{-4}$ and $1$, with 100 outer iterations per reconstruction.
We report the \glsxtrshort{psnr} values obtained for all the tested reconstruction parameters for \texttt{R-BPG+TV} and \texttt{OSL+TV} in figure \ref{fig:deconvolution-tv-stability-metrics}, in the deconvolution setting with a gain $\gamma = 10^{-1}$.
When the regularization weight is small, both methods produce similar reconstructions and achieve similar \glsxtrshort{psnr} values.
For larger weights, the reconstruction quality of \texttt{OSL+TV} deteriorates. The sweep aborts an \glsxtrshort{osl} run if its denominator becomes nonpositive or an iterate becomes nonfinite; here, all seven aborted runs encounter a nonpositive denominator, starting at the tested weight $\lambda\approx0.324$.
The \texttt{R-BPG+TV} iterates remain finite and nonnegative for all tested regularization weights.
The best tested weights are $\lambda\approx0.126$ for \texttt{R-BPG+TV} and $\lambda\approx0.0494$ for \texttt{OSL+TV}, with \glsxtrshort{psnr} values of 19.89 dB and 17.93 dB, respectively.
The corresponding reconstructions are shown in figure \ref{fig:deconvolution-tv-stability-reconstructions}.

We compare all the mentioned algorithms at the noise level $\gamma = 10^{-2}$ and plot the reconstructions in figure \ref{fig:deconvolution-reconstructions-1}.
All the algorithms are run for 150 iterations.
We report the selected regularization parameters in the captions of the figures.
We also plot the evolution of the different composite objectives, as well as the evolution of the \glsxtrfull{psnr} and the relative distance between iterates in figure \ref{fig:deconvolution-reconstructions-metrics-1}.
The \texttt{MU} and \texttt{Mirror descent} algorithms without regularization produce typical noise amplification artifacts.
Adding Tikhonov regularization mitigates these artifacts, but at the cost of blurry reconstructions.
With this choice of regularization, the \texttt{OSL+L2} method is stable and produces a reconstruction of similar quality to \texttt{R-BPG+L2}.
The \texttt{R-BPG+TV} algorithm mitigates noise artifacts while preserving edges.
At this regularization weight, the \texttt{OSL+TV} method exhibits persistent iterate oscillations and visible reconstruction artifacts.

\begin{figure}[h]
  \centering
  \includegraphics[width=\textwidth]{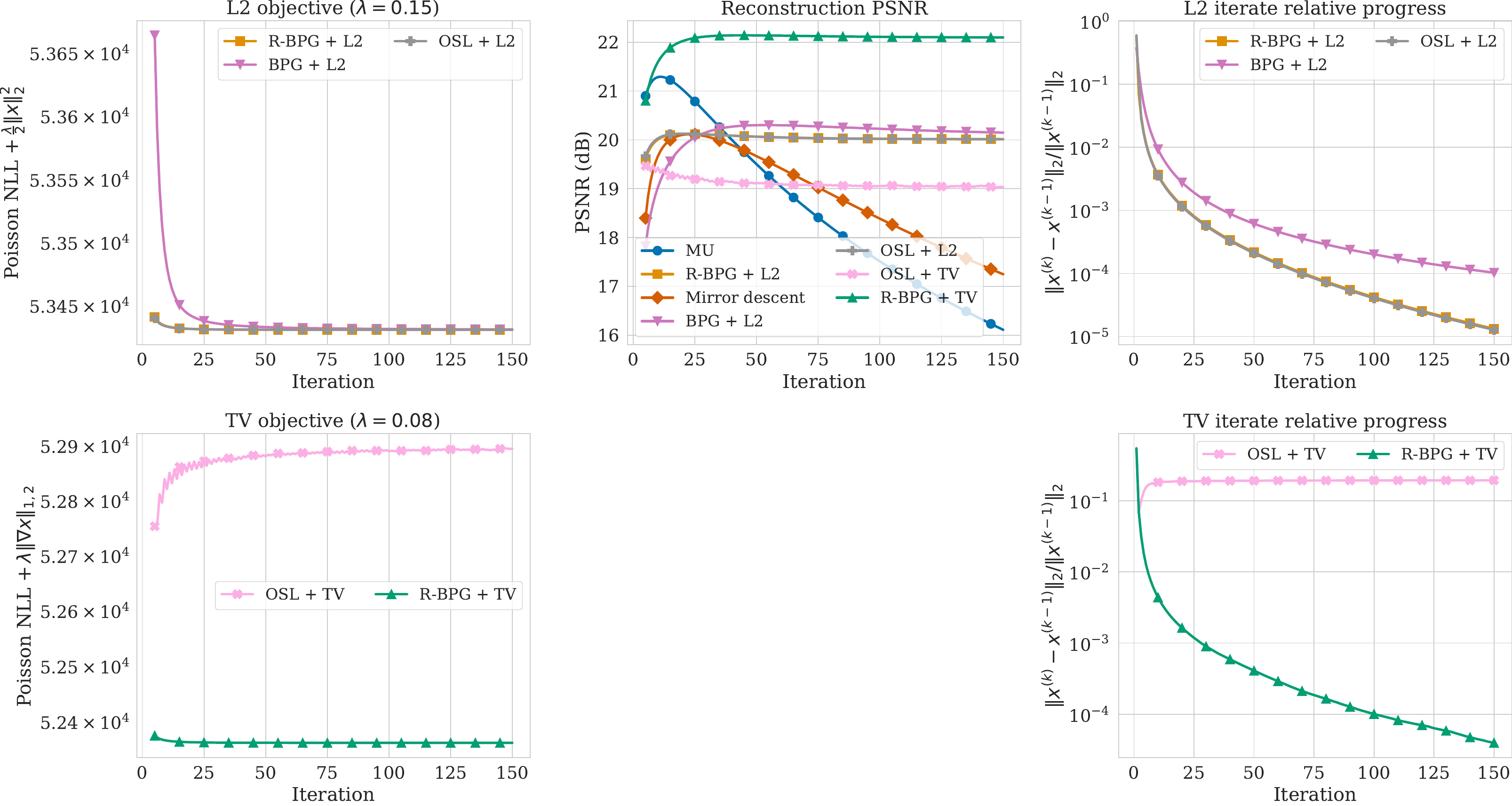}
  \caption{Composite objectives, \glsxtrshort{psnr} in dB, and relative distances between successive iterates for the non-blind deconvolution experiment with gain $\gamma = 10^{-2}$.}
  \label{fig:deconvolution-reconstructions-metrics-1}
\end{figure}

\subsection{Positron emission tomography}

We consider two tomographic reconstruction experiments, in 2D and 3D, for \glsxtrshort{pet}.
The forward model of \glsxtrshort{pet} is a composition of several linear degradations.
It can be described as recovering an image $x \in \mathbb{R}^n_+$ from measurements $y \in \mathbb{R}^m_+$ of the form:
\begin{equation}
  \label{eq:pet-problem}
  y \sim \gamma \mathcal{P}\left( \frac{ARHx + b}{\gamma} \right)
\end{equation}
where $A$ is an attenuation matrix, often representing an additional X-ray \glsxtrfull{ct} scan of the patient, $R$ is the discretized Radon transform and models the geometry of the scanner, $H$ is a convolution operator representing the \glsxtrfull{psf} of the system, and $b \in \mathbb{R}^m_{++}$ is a background term representing the contribution of random and scattered events.
Similarly to the deconvolution experiment, the gain parameter $\gamma$ controls the noise level. 

\begin{figure}[h]
  \centering
  \includegraphics[width=\textwidth]{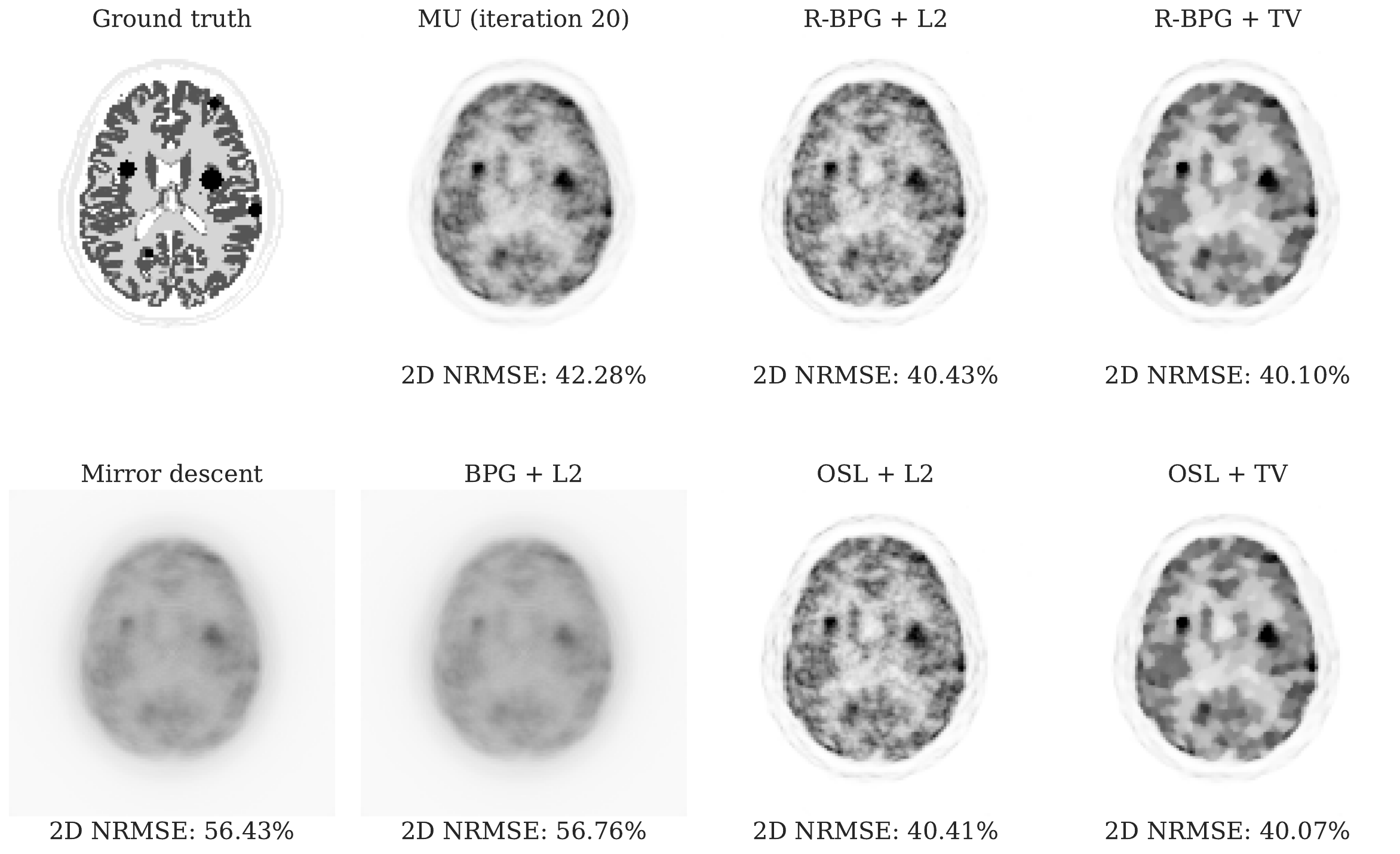}
  \caption{Reconstructions of the $120\times120$ 2D \texttt{BrainWeb} \glsxtrshort{pet} phantom, with 2 mm pixel spacing, 3 mm \glsxtrshort{psf} full-width at half maximum, and $500{,}000$ expected prompt coincidences.}
  \label{fig:tomographic-reconstructions}
\end{figure}

Both experiments use subject 04 from \texttt{BrainWeb} \cite{collinsDesignConstructionRealistic1998}.
The FDG activity and attenuation maps are obtained from the open-source \texttt{brainweb} implementation \cite{daCostaLuisBrainweb2020}.
We insert five hot lesions with diameters 6, 8, 10, 12, and 16 mm and activity 192, which is 50\% above the FDG grey-matter activity of 128, then normalize the emission image by its maximum.
For both acquisitions, the Gaussian \glsxtrshort{psf} has a full-width at half maximum of 3 mm. 
We use an acquisition geometry with 34 polygon sides, eight detector bins per side, and a spacing of 8 mm. 
We target $N=500{,}000$ counts, which is equivalent to again of approximately $\gamma\approx1.33\times10^{-3}$.

We compare the same seven algorithms as in the deconvolution experiment. 
We early-stop \texttt{MU} at 20 iterations and run every other method for 50 iterations. 
The Tikhonov weight is $\lambda=0.01$ in both dimensions while the TV weight is $\lambda=0.0015$. 
For \texttt{R-BPG+TV}, we run 100 inner \textsc{pdhg} iterations at every outer iteration.

The reconstructions are shown in figure \ref{fig:tomographic-reconstructions}, and the composite objectives, \glsxtrshort{nrmse}, and relative distances between successive iterates are shown in figure \ref{fig:tomographic-reconstructions-metrics}.
We also report the \glsxtrfull{rc} for each lesion, which is the ratio of mean reconstructed activity to mean ground-truth activity within each labeled lesion.

\begin{figure}[h]
  \centering
  \includegraphics[width=\textwidth]{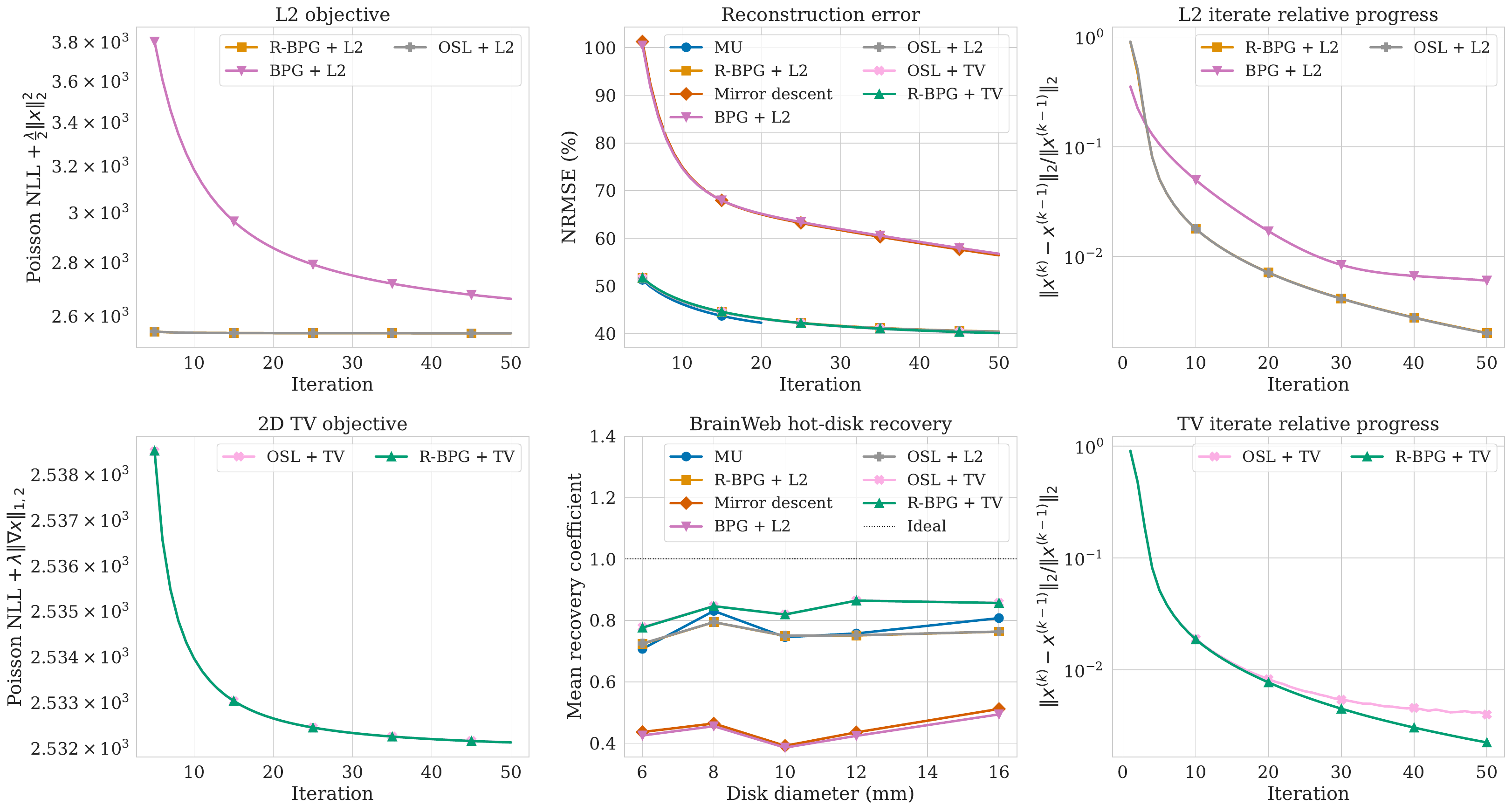}
  \caption{Composite objectives, \glsxtrshort{nrmse} in percent, relative distances between successive iterates and recovery coefficients for the 2D \texttt{BrainWeb} \glsxtrshort{pet} experiment.}
  \label{fig:tomographic-reconstructions-metrics}
\end{figure}

\section{Conclusion and discussions}
\label{sec:discussions}

We introduced \glsxtrshort{rbpg}, a novel algorithm to minimize convex composite objective functions, based on a reparameterization of the problem in mirror coordinates.
This approach makes it possible to use potential functions that are not covered by standard relative smoothness.
In the Poisson case \glsxtrshort{rbpg} reduces to a regularized version of the well-known multiplicative updates algorithm.
Our numerical results suggest that in the Poisson setting, the proposed algorithm is competitive with existing methods such as \texttt{BPG}, using unit-step multiplicative updates and allowing general convex regularization.

The general convergence analysis of \glsxtrshort{rbpg} remains incomplete.
In particular establishing convergence rates and convergence of the iterates in the general setting is a natural direction for future work.
Efficient computations of the Bregman proximal map in the case of the weighted negative entropy for other convex regularization is also of interest for the method to be applicable to larger scale problems.
Another interesting extension is the case where $f$, $g$, or both, are nonconvex.
Allowing for non-convexity would open the door to analyzing the behavior of the algorithm when deep neural networks are used as regularizers.
Finally we conjecture that the convergence rate in the Poisson case can be improved for some cases, as numerical experiments show a decrease of the objective faster than the sublinear bound established in theorem \ref{theorem:function-value-rate}.

\clearpage

\appendix

\section{Proofs}
\label{appendix:proofs}

\subsection{Failure of relative smoothness with negative entropy}
\label{proof:lem:non-relative-smoothness-poisson}

\begin{proof}
  Let $v=\mathbf{1}_n$ and $x=t\mathbf{1}_n$ with $t>0$, so that $x$ lies
  in the interior of the positive orthant. Since $Av>0$,
  \begin{equation*}
    v^\top\nabla^2 f(x)v=\frac{\|y\|_1}{t^2},
    \qquad v^\top\nabla^2\varphi(x)v=\frac{n}{t},
  \end{equation*}
  for $\varphi(x)=\sum_i(x_i\log x_i-x_i)$.
  Relative smoothness would therefore require $\|y\|_1\leq Lnt$ for every
  $t>0$, which is impossible as $t\downarrow0$.
\end{proof}

\subsection{Proof of lemma \ref{lem:relative-smoothness-reparameterized}}
\label{proof:lem:relative-smoothness-reparameterized}

\begin{proof}
  We use the characterization of proposition \ref{prop:mirror-relative-smoothness-characterization}.
  Fix $x \in \intdom \varphi = \mathbb{R}_{++}^n$.
  With $s=A^\top\mathbf{1}_m$, recall the derivatives of the Poisson \glsxtrshort{nll} are
  \begin{equation*}
    \nabla f(x)=s-A^\top\frac{y}{Ax}
    \quad\text{and}\quad
    \nabla^2 f(x)=A^\top \operatorname{diag}\left(\frac{y}{(Ax)^2}\right)A.
  \end{equation*}
  The derivatives of the weighted negative entropy are given by:
  \begin{equation}
    \label{eq:weighted-negentropy-third-derivative}
    \nabla^2 \varphi(x) = \operatorname{diag}\left( \frac{s}{x}\right)~\text{and}~\nabla^3 \varphi(x) [u] = -\operatorname{diag} \left( \frac{s \odot u}{x^2}\right).
  \end{equation}
  Thus replacing every term by its expression:
  \begin{equation*}
    \left[\nabla^2\varphi(x)\right]^{-1}\nabla f(x)
    =\frac{x}{s}\odot\left(s-A^\top\frac{y}{Ax}\right)
    =x-\frac{x}{s}\odot A^\top\frac{y}{Ax},
  \end{equation*}
  and:
  \begin{equation*}
    \nabla^3\varphi(x)
    \left[\left[\nabla^2\varphi(x)\right]^{-1}\nabla f(x)\right]
    =-\operatorname{diag}\left(\frac{s}{x}-\frac{1}{x}\odot A^\top\frac{y}{Ax}\right).
  \end{equation*}
  Substituting these expressions into \eqref{eq:mirror-relative-smoothness-characterization} yields:
  \begin{align*}
    &\nabla^2 f(x)
    -\nabla^3\varphi(x)
    \left[\left[\nabla^2\varphi(x)\right]^{-1}\nabla f(x)\right] \\
    &\qquad = \operatorname{diag}\left(\frac{s}{x}\right)
    +A^\top \operatorname{diag}\left(\frac{y}{(Ax)^2}\right)A
    -\operatorname{diag}\left(\frac{1}{x}\odot A^\top\frac{y}{Ax}\right).
  \end{align*}
  It remains to show that the last two terms form a negative semidefinite matrix.
  For any $w\in\mathbb{R}^n$, the Cauchy--Schwarz inequality and the nonnegativity of the coefficients of $A$ give, for every $1 \leq i \leq m$:
  \begin{equation}
    \label{eq:cauchy-schwarz}
    \left(\sum_{j=1}^n a_{ij}w_j\right)^2
    \leq
    \left(\sum_{j=1}^n a_{ij}x_j\right)
    \left(\sum_{j=1}^n \frac{a_{ij}}{x_j}w_j^2\right)
    =(Ax)_i\sum_{j=1}^n \frac{a_{ij}}{x_j}w_j^2.
  \end{equation}
  Hence
  \begin{align*}
    w^\top A^\top \operatorname{diag}\left(\frac{y}{(Ax)^2}\right)Aw
    &=\sum_{i=1}^m \frac{y_i}{(Ax)_i^2}
      \left(\sum_{j=1}^n a_{ij}w_j\right)^2 \\
    &\leq \sum_{i=1}^m \frac{y_i}{(Ax)_i}
      \sum_{j=1}^n \frac{a_{ij}}{x_j}w_j^2 \\
    &=\sum_{j=1}^n \frac{1}{x_j}\left(A^\top\frac{y}{Ax}\right)_j w_j^2 \\
    &=w^\top\operatorname{diag}\left(\frac{1}{x}\odot A^\top\frac{y}{Ax}\right)w,
  \end{align*}
  and:
  \begin{equation*}
    \nabla^2 f(x)
    -\nabla^3\varphi(x)
    \left[\left[\nabla^2\varphi(x)\right]^{-1}\nabla f(x)\right]
    \preceq \operatorname{diag}\left(\frac{s}{x}\right)
    =\nabla^2\varphi(x).
  \end{equation*}
  Proposition \ref{prop:mirror-relative-smoothness-characterization} then shows that
  $f\circ\nabla\varphi^*$ is $1$-smooth relative to $\varphi^*$.

  Now assume that $L < 1$.
  Taking $w=x$ in \eqref{eq:cauchy-schwarz} turns the Cauchy--Schwarz inequality into an equality, and substituting it into the mirror relative smoothness condition \eqref{eq:mirror-relative-smoothness-characterization} gives:
  \begin{equation*}
    x^\top \nabla^2 \varphi(x) x \leq L x^\top \nabla^2 \varphi(x) x,
  \end{equation*}
  which is a contradiction, hence $L=1$.
\end{proof}

\subsection{Non-convexity of the reparameterized objective}
\label{appendix:subsec:non-convexity-reparameterized}

We provide a simple counter-example showing that the reparameterized objective $f \circ \nabla \varphi^*$ is not convex in general, along with a visual example in Figure \ref{fig:non-convex}.
Let us consider the Poisson \glsxtrshort{nll} with $A = (1, 1)$ and $y = 1$, and the reparameterization induced by the choice of potential $\varphi^* (\eta) = \sum_{i=1}^2 \exp(\eta_i)$. Note that this matches the assumptions described in the beginning of section \ref{subsec:convergence-analysis-poisson}.
The function has the following expression:
\begin{equation*}
  f \circ \nabla \varphi^* (\eta) = e^{\eta_1} + e^{\eta_2} - \log \left( e^{\eta_1} + e^{\eta_2} \right).
\end{equation*}
Then the Hessian of this function is given by:
\begin{equation}
  \nabla^2 (f \circ \nabla \varphi^*)(\eta) =  \begin{pmatrix}
    e^{\eta_1} - \frac{e^{\eta_1 + \eta_2}}{(e^{\eta_1} + e^{\eta_2})^2} & \frac{e^{\eta_1 + \eta_2}}{(e^{\eta_1} + e^{\eta_2})^2} \\
    \frac{e^{\eta_1 + \eta_2}}{(e^{\eta_1} + e^{\eta_2})^2} & e^{\eta_2} - \frac{e^{\eta_1 + \eta_2}}{(e^{\eta_1} + e^{\eta_2})^2}
  \end{pmatrix}
\end{equation}
now consider the point $\eta = (t, t)$ for $t \in \mathbb{R}$, then the Hessian at this point is given by:
\begin{equation*}
  \nabla^2 (f \circ \nabla \varphi^*)(t, t) =  \begin{pmatrix}
    e^t - \frac{1}{4} & \frac{1}{4} \\
    \frac{1}{4} & e^t - \frac{1}{4}
  \end{pmatrix}
\end{equation*}
and its eigenvalues are given by $e^t$ and $e^t - \frac{1}{2}$, which means that the Hessian is not positive semidefinite for $t < - \log (2)$, hence the function is not convex on this region of the diagonal $\eta = (t, t)$.

\begin{figure}[h]
  \centering
  \includegraphics[width=0.7\textwidth]{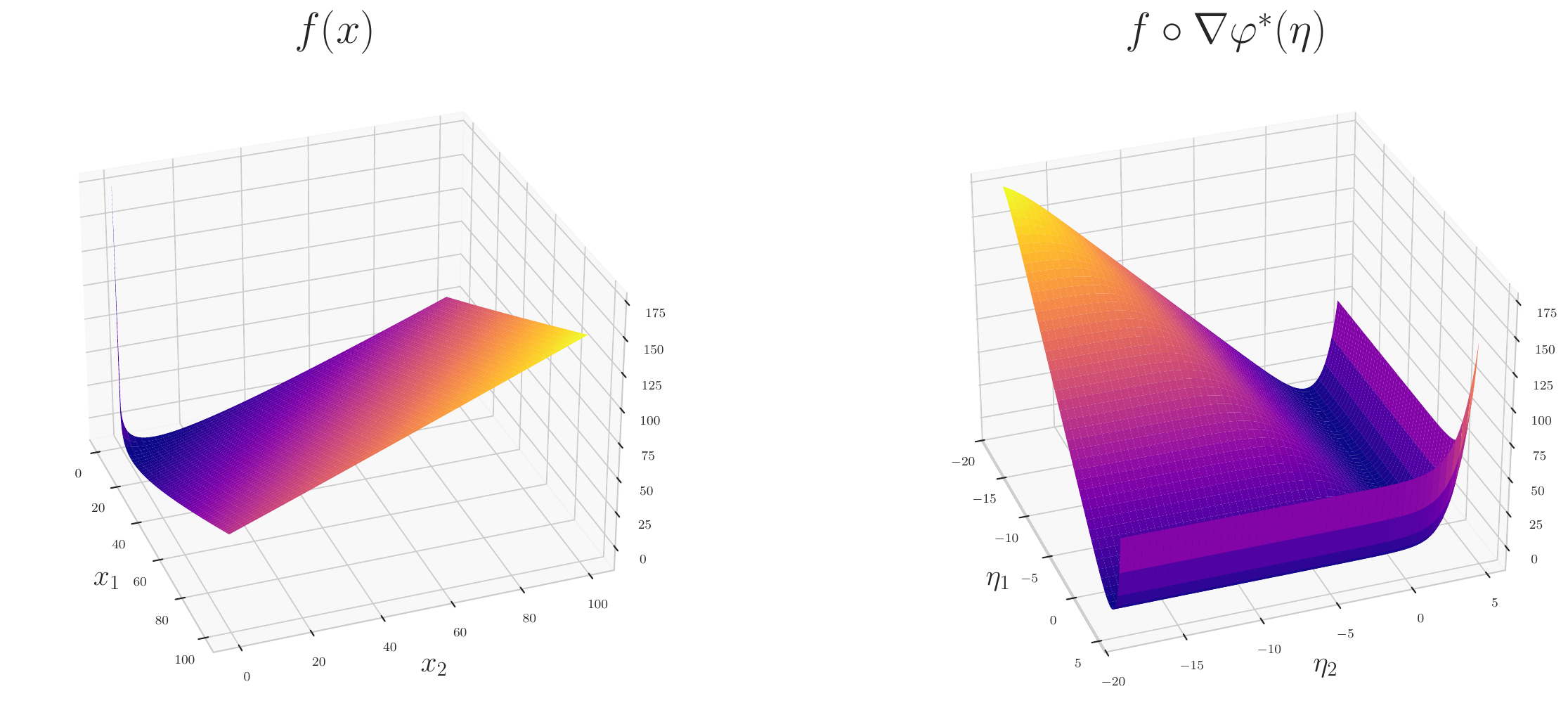}
  \caption{Plot of the function Poisson \glsxtrshort{nll} $f$ and its reparameterization $f \circ \nabla \varphi^*$, in the case where $A = (1, 1)$ and $y = 10$. For the plotted value $y=10$, the Hessian has a negative eigenvalue at $\eta=(t,t)$ whenever $t<\log(5)$.}
  \label{fig:non-convex}
\end{figure}

\section{Convergence analysis using non-convex \texorpdfstring{\glsxtrshort{bpg}}{BPG}}
\label{appendix:sec:non-convex}\label{appendix:non-convex-analysis}

This section presents a complementary convergence analysis of \glsxtrshort{rbpg} in mirror coordinates.
Although $\Phi$ is convex, the nonlinear change of coordinates induced by the mirror map does not generally preserve convexity.
The reparameterized objective $\Phi \circ \nabla \varphi^*$ is thus generally non-convex, and the convergence theory for convex \glsxtrshort{bpg} cannot be applied directly.

We therefore rely on the non-convex \glsxtrshort{bpg} analysis of \cite{bolteFirstOrderMethods2018}, which follows the general convergence framework of \cite{attouchConvergenceProximalAlgorithm2009}.
This approach provides convergence guarantees for the iterates after verifying a collection of regularity, coercivity, and geometric assumptions in mirror coordinates.
These assumptions are substantially more restrictive in the Poisson setting, particularly with respect to the regularization term $g$.
This section thus aims at highlighting these restrictions and comparing the resulting guarantees with those obtained from the primal analysis of section \ref{sec:convergence-analysis-convex}.

\subsection{Assumptions}
\label{appendix:sec:non-convex:subsec:assumptions}

We retain the convexity and regularity requirements on $f$ and $\varphi$ in
assumption \ref{assumption:blanket}, the lower bound on $\Phi$, and assumption
\ref{assumption:dual-relative-smoothness}.
For this alternative analysis, we allow a proper, \glsxtrshort{lsc}, convex
regularizer $g$ whose domain need only intersect $\intdom\varphi$; the inclusion
$\operatorname{dom}\varphi\subset\operatorname{dom}g$ is not required.
This admits the boundary-penalizing regularizers discussed below.
We assume $\intdom\varphi^*=\mathbb R^n$, so that
$g\circ\nabla\varphi^*$ is proper and \glsxtrshort{lsc} on the whole mirror space.
We initialize in $\operatorname{dom}g\cap\intdom\varphi$ and use the mirror
minimization formulation \eqref{eq:right-bregman-proximal-gradient-minimization-eta}.
When the natural-gradient point lies in $\intdom\varphi$, this is equivalently
\eqref{eq:bpg-right}.
To ensure that the left Bregman proximal problem defining $\leftprox^{\varphi^*}_{\tau g \circ \nabla \varphi^*}$ has a nonempty compact solution set within $\intdom \varphi^*$ we impose the following additional condition.
\begin{assumption}[Supercoercivity in mirror coordinates]
  \label{assumption:dual-supercoercivity}
  For all $\tau > 0$, the function $\varphi^* + \tau g \circ \nabla \varphi^*$ is supercoercive on $\intdom \varphi^*$, that is,
  \begin{equation*}
    \lim_{\| \eta \| \to +\infty} \frac{\varphi^*(\eta) + \tau g \circ \nabla \varphi^*(\eta)}{\| \eta \|} = +\infty.
  \end{equation*}
\end{assumption}
This condition corresponds to \cite[assumption B]{bolteFirstOrderMethods2018}.
The proximal map can be set-valued when the minimizer is not unique.
In that case, the algorithm selects any minimizer at each iteration.

The full-domain assumption ensures that the proximal iterates remain in $\intdom\varphi^*$, as required in \cite[assumption C]{bolteFirstOrderMethods2018}.

These assumptions provide descent and asymptotic regularity properties for the \glsxtrshort{bpg} iterates \cite[proposition 4.1]{bolteFirstOrderMethods2018}.
Convergence of subsequences requires the following additional local regularity assumptions.
\begin{assumption}[Local regularity assumptions]
  \label{assumption:subsequence-convergence}
  The following properties hold:
  \begin{enumerate}
    \item $\varphi^*$ is twice continuously differentiable and strongly convex on every bounded subset of $\intdom \varphi^*$.
    \item $\nabla \varphi^*$ and $\nabla (f \circ \nabla \varphi^*)$ are Lipschitz continuous on every bounded subset of $\intdom \varphi^*$.
  \end{enumerate}
\end{assumption}
Global strong convexity of $\varphi^*$ on $\mathbb{R}^n$ is assumed in \cite[assumption D]{bolteFirstOrderMethods2018}.
For the convergence results used here, strong convexity on bounded subsets is sufficient, as observed, for example, in \cite[theorem 12]{huraultConvergentPlugandplayMethods2023}.

Full-sequence convergence requires two further properties.
First, the reparameterized objective $\Phi \circ \nabla \varphi^*$ must satisfy the \glsxtrshort{klp} property.
This property holds for broad classes of functions, including proper \glsxtrshort{lsc} subanalytic functions and, more generally, functions definable in an $o$-minimal structure \cite{bolteLojasiewiczInequalityNonsmooth2007}.
In particular, this condition holds when $\Phi$ and the mirror map $\nabla\varphi^*$ are real analytic on the relevant domains, because a composition of real-analytic functions is real analytic and therefore satisfies the \glsxtrshort{klp} property.

Second, the sequence of mirror iterates $\left\{ \eta^{(k)} \right\}_{k \in \mathbb{N}}$ must remain bounded.
A sufficient condition is the coercivity of the reparameterized objective.
\begin{assumption}
  \label{assumption:full-sequence-convergence}
  The reparameterized objective $\Phi \circ \nabla \varphi^*$ is coercive on $\intdom \varphi^*$, that is,
  \begin{equation*}
    \lim_{\| \eta \| \to +\infty} \Phi \circ \nabla \varphi^* (\eta) = +\infty.
  \end{equation*}
\end{assumption}
This assumption is different from coercivity of $\Phi$ in the original variable.
Indeed, under the change of variables $\eta = \nabla \varphi(x)$, a bounded sequence in primal coordinates need not map to a bounded sequence in mirror coordinates.

\subsection{Convergence analysis}

Under these assumptions, the non-convex \glsxtrshort{bpg} analysis of \cite{bolteFirstOrderMethods2018} applies to the reparameterized problem.
We first state the resulting descent and asymptotic regularity properties.
\begin{corollary}[Function-value convergence {\cite[proposition 4.1]{bolteFirstOrderMethods2018}}]
  Let $\left\{ x^{(k)} \right\}_{k \in \mathbb{N}}$ be the sequence generated by \glsxtrshort{rbpg} \eqref{eq:bpg-right} with $0 < \tau < 1/L$.
  Then the sequence of function values $\left\{ \Phi(x^{(k)}) \right\}_{k \in \mathbb{N}}$ is non-increasing and the iterates $\left\{ x^{(k)} \right\}_{k \in \mathbb{N}}$ satisfy
  \begin{equation*}
    \sum_{k=1}^{+\infty} D_{\varphi}(x^{(k)} \parallel x^{(k+1)}) < +\infty.
  \end{equation*}
  Hence,
  \begin{equation*}
    \lim_{k\to+\infty}D_{\varphi}(x^{(k)} \parallel x^{(k+1)})=0.
  \end{equation*}
  Moreover, for every $N\geq 1$,
  \begin{equation}
    \min_{1 \leq k \leq N} D_\varphi (x^{(k)} \parallel x^{(k+1)}) \leq \frac{\tau}{N} \left( \frac{\Phi (x^{(0)}) - \Phi^{\star}}{1 - \tau L}\right)
  \end{equation}
  where $\Phi^\star = \inf_{x \in \intdom\varphi} \Phi(x)$.
\end{corollary}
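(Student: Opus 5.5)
The plan is to transport the iterates to mirror coordinates $\eta^{(k)}=\nabla\varphi(x^{(k)})$ and apply \cite[proposition 4.1]{bolteFirstOrderMethods2018} to the reparameterized composite problem $\min_\eta \{(f\circ\nabla\varphi^*)(\eta)+(g\circ\nabla\varphi^*)(\eta)\}$ with potential $\varphi^*$. Afterwards, every conclusion is translated back to primal variables through point 2 of proposition \ref{prop:legendre-properties}.

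First I will check the hypotheses of that proposition in mirror space. $\varphi^*$ is of Legendre type by lemma \ref{lem:legendre-conjugacy}, and $\intdom\varphi^*=\mathbb{R}^n$ by the standing full-domain assumption. The smooth part $f\circ\nabla\varphi^*$ is $L$-smooth relative to $\varphi^*$ by assumption \ref{assumption:dual-relative-smoothness}, and it is differentiable because $f$ is differentiable on $\intdom\varphi$ and $\nabla\varphi^*$ is $\mathcal C^1$. The nonsmooth part $g\circ\nabla\varphi^*$ is proper and \glsxtrshort{lsc}; this is exactly where the full domain of $\varphi^*$ is used. Supercoercivity of $\varphi^*+\tau g\circ\nabla\varphi^*$ is assumption \ref{assumption:dual-supercoercivity}. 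Finally, $\inf \Phi\circ\nabla\varphi^*=\inf_{\intdom\varphi}\Phi=\Phi^\star>-\infty$ by assumption \ref{assumption:blanket}. With these checks, assumptions A--C of \cite{bolteFirstOrderMethods2018} hold, and the mirror iterates generated by \eqref{eq:right-bregman-proximal-gradient-minimization-eta} are well defined and stay in $\mathbb{R}^n$. By lemma \ref{lem:left-right-prox} and the derivation of \eqref{eq:bpg-right}, these mirror iterates are exactly $\nabla\varphi$ of the \glsxtrshort{rbpg} iterates.

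Next I will read off the conclusions. \cite[proposition 4.1]{bolteFirstOrderMethods2018} gives, for $0<\tau<1/L$,
\begin{equation*}
\tau\,\Phi\circ\nabla\varphi^*(\eta^{(k+1)})\le \tau\,\Phi\circ\nabla\varphi^*(\eta^{(k)})-(1-\tau L)D_{\varphi^*}(\eta^{(k+1)}\parallel\eta^{(k)}).
\end{equation*}
Point 2 of proposition \ref{prop:legendre-properties} rewrites this as
\begin{equation*}
D_{\varphi^*}(\nabla\varphi(x^{(k+1)})\parallel\nabla\varphi(x^{(k)}))=D_\varphi(x^{(k)}\parallel x^{(k+1)}),
\end{equation*}
and $\Phi\circ\nabla\varphi^*(\eta^{(k)})=\Phi(x^{(k)})$. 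Monotonicity of $\Phi(x^{(k)})$ follows immediately.

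The remaining bounds come from telescoping from $k=0$ to $N$. Using $\Phi(x^{(N+1)})\ge\Phi^\star$, I obtain
\begin{equation*}
\sum_k D_\varphi(x^{(k)}\parallel x^{(k+1)})\le \frac{\tau(\Phi(x^{(0)})-\Phi^\star)}{1-\tau L}.
\end{equation*}
This gives summability, hence convergence of the terms to zero. It also gives the min bound, since the minimum over $N$ terms is at most their average.

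The main obstacle is the bookkeeping of the argument order under the mirror map, together with justifying that the selected minimizer of the possibly set-valued mirror prox corresponds to an admissible \glsxtrshort{rbpg} iterate. Without the inclusion $\operatorname{dom}\varphi\subset\operatorname{dom}g$, this correspondence should be stated via the mirror formulation itself. I would define the primal iterates as $x^{(k)}=\nabla\varphi^*(\eta^{(k)})$, so the correspondence holds by construction. Lemma \ref{lem:left-right-prox} is then needed only when the natural-gradient point lies in $\intdom\varphi$.
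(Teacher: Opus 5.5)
Your proposal is correct and matches the paper's argument: you apply \cite[proposition 4.1]{bolteFirstOrderMethods2018} to $\Phi\circ\nabla\varphi^*$ with potential $\varphi^*$ on the mirror iterates $\eta^{(k)}=\nabla\varphi(x^{(k)})$, identify $\Phi\circ\nabla\varphi^*(\eta^{(k)})=\Phi(x^{(k)})$, and convert $D_{\varphi^*}(\eta^{(k+1)}\parallel\eta^{(k)})$ into $D_\varphi(x^{(k)}\parallel x^{(k+1)})$ by point 2 of proposition \ref{prop:legendre-properties}. You additionally write out the hypothesis checks and the sufficient-decrease and telescoping steps that the paper leaves inside the citation; note that these steps use only the one-sided bound of assumption \ref{assumption:dual-relative-smoothness}, not the two-sided L-smad condition of that reference, which is not automatic here since $f\circ\nabla\varphi^*$ is nonconvex.
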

\begin{proof}
  Proposition 4.1 of \cite{bolteFirstOrderMethods2018} is stated for the reparameterized objective $\Phi \circ \nabla \varphi^*$ and the mirror iterates $\eta^{(k)}$.
  Since $x^{(k)} = \nabla \varphi^*(\eta^{(k)})$, its function-value conclusions apply directly to $\Phi(x^{(k)})$.
  Moreover, item 2 of proposition \ref{prop:legendre-properties} gives
  \begin{equation*}
    D_{\varphi^*}(\eta^{(k+1)} \parallel \eta^{(k)})
    = D_\varphi(x^{(k)} \parallel x^{(k+1)}),
  \end{equation*}
  which yields the stated bounds in primal coordinates.
\end{proof}

This result is weaker than the function-value rate stated for convex \glsxtrshort{bpg} in \cite[theorem 4.1]{teboulleSimplifiedViewFirst2018}.
This is a consequence of analyzing the algorithm in the mirror coordinates $\eta$, where the objective is generally non-convex, even though the original objective $\Phi$ is convex in $x$.

Before establishing convergence of the iterates, we relate critical points of the reparameterized objective to minimizers of the original convex problem.
\begin{lemma}
    \label{lem:critical-to-global}
    Let $\Phi : \mathbb{R}^n \to \mathbb{R}\cup\{+\infty\}$ be proper, \glsxtrshort{lsc} and convex, and let $\varphi$ be of Legendre type. Assume that $\varphi^* \in \mathcal{C}^2(\intdom \varphi^*)$ and that $\nabla^2\varphi^*(\eta)$ is positive definite for every $\eta\in\intdom\varphi^*$.
    Let $\eta^\star \in \intdom \varphi^*$ be a critical point of
    $\Phi \circ \nabla \varphi^*$, in the sense that $0 \in \partial \bigl(\Phi \circ \nabla \varphi^*\bigr)(\eta^\star)$, where $\partial$ denotes the limiting subdifferential \cite{rockafellarVariationalAnalysis1998}.
    Then $x^\star = \nabla \varphi^*(\eta^\star) \in \intdom \varphi$ is a minimizer of $\Phi$, and $\eta^\star$ is a global minimizer of $\Phi \circ \nabla \varphi^*$.
\end{lemma}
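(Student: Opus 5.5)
We use the fact that $\nabla\varphi^*$ is a $\mathcal{C}^1$ diffeomorphism from $\intdom\varphi^*$ onto $\intdom\varphi$. This holds by point 1 of proposition \ref{prop:legendre-properties} together with $\nabla^2\varphi^*(\eta)\succ0$ and the inverse function theorem. Its Jacobian at $\eta^\star$ is $J=\nabla^2\varphi^*(\eta^\star)$, which is symmetric and invertible. The first step is a chain rule for the limiting subdifferential under a local diffeomorphism. Since $\nabla\varphi^*$ is $\mathcal{C}^1$ with invertible Jacobian near $\eta^\star$, the rule holds with equality and no qualification condition is needed:
\begin{equation*}
  \partial(\Phi\circ\nabla\varphi^*)(\eta^\star)=J^\top\,\partial\Phi(x^\star)=\nabla^2\varphi^*(\eta^\star)\,\partial\Phi(x^\star),
\end{equation*}
where $\partial\Phi(x^\star)$ is the limiting subdifferential of $\Phi$. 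I would cite this from \cite{rockafellarVariationalAnalysis1998} (exercise 10.7, or theorem 10.6 together with the fact that a smooth map with surjective Jacobian satisfies the qualification condition). Alternatively, I would prove it directly by transporting regular subgradients through the diffeomorphism using first-order Taylor expansions, and then passing to limits.

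The second step identifies the limiting subdifferential of the convex function $\Phi$ at $x^\star$ with its convex subdifferential. This is standard for proper lsc convex functions (\cite{rockafellarVariationalAnalysis1998}, proposition 8.12). Note that $x^\star\in\intdom\varphi$ by point 1 of proposition \ref{prop:legendre-properties}. Also $\Phi(x^\star)<\infty$, because a critical point lies in the domain of $\Phi\circ\nabla\varphi^*$.

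The third step concludes. From $0\in J\,\partial\Phi(x^\star)$ there is a $v\in\partial\Phi(x^\star)$ with $Jv=0$. Invertibility of $J$ gives $v=0$, so $0\in\partial\Phi(x^\star)$. By convexity, Fermat's rule then gives that $x^\star$ is a global minimizer of $\Phi$ on $\mathbb{R}^n$. Finally, for any $\eta\in\intdom\varphi^*$,
\begin{equation*}
  (\Phi\circ\nabla\varphi^*)(\eta)=\Phi(\nabla\varphi^*(\eta))\ge\Phi(x^\star)=(\Phi\circ\nabla\varphi^*)(\eta^\star).
\end{equation*}
Hence $\eta^\star$ is a global minimizer of the reparameterized objective. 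Outside $\intdom\varphi^*$ the composite is $+\infty$ by convention, so it cannot go below this value there.

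The main obstacle is the first step: justifying the exact chain rule for limiting subdifferentials, with inclusion in the needed direction $\partial(\Phi\circ F)(\eta^\star)\subset \nabla F(\eta^\star)^\top\partial\Phi(F(\eta^\star))$ for $F=\nabla\varphi^*$. I would first try the clean route. Write $\Phi=(\Phi\circ F)\circ F^{-1}$ locally, and apply the standard "easy" inclusion for compositions with smooth maps in both directions. This yields equality, because $F^{-1}=\nabla\varphi$ is also $\mathcal{C}^1$ near $x^\star$, with Jacobian $\nabla^2\varphi(x^\star)=J^{-1}$ by point 3 of proposition \ref{prop:legendre-properties}.
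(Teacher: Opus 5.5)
Your proposal is correct and follows the same route as the paper. Both arguments use the chain-rule inclusion $\partial(\Phi\circ\nabla\varphi^*)(\eta^\star)\subset\nabla^2\varphi^*(\eta^\star)\,\partial\Phi(x^\star)$, use invertibility of $\nabla^2\varphi^*(\eta^\star)$ to get $0\in\partial\Phi(x^\star)$, apply Fermat's rule by convexity, and transfer minimality to $\eta^\star$ through the bijection $\nabla\varphi^*$. You additionally make explicit three points the paper leaves implicit: the qualification condition holds automatically because the Jacobian is invertible (your change-of-coordinates citation to \cite{rockafellarVariationalAnalysis1998}), the limiting and convex subdifferentials coincide for $\Phi$, and $\Phi(x^\star)$ is finite.
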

\begin{proof}
    Let $x^\star := \nabla \varphi^*(\eta^\star)$.
    Since $\varphi$ is Legendre, the map $\nabla \varphi^*$ is a bijection from $\intdom \varphi^*$ onto $\intdom \varphi$.
    Since $\eta^\star$ is a critical point of $\Phi \circ \nabla \varphi^*$, we have
    \begin{equation*}
        0 \in \partial (\Phi \circ \nabla \varphi^*)(\eta^\star).
    \end{equation*}
    By the chain rule for the limiting subdifferential,
    \begin{equation*}
        \partial (\Phi \circ \nabla \varphi^*)(\eta^\star)
        \subset
        \nabla^2 \varphi^*(\eta^\star)^\top \, \partial \Phi(\nabla \varphi^*(\eta^\star)).
    \end{equation*}
    Since $x^\star = \nabla \varphi^*(\eta^\star)$ and $\nabla^2 \varphi^*(\eta^\star)$ is symmetric, this gives
    \begin{equation*}
        0 \in \nabla^2 \varphi^*(\eta^\star) \, \partial \Phi(x^\star).
    \end{equation*}
    Hence there exists $u^\star \in \partial \Phi(x^\star)$ such that
    \begin{equation*}
        \nabla^2 \varphi^*(\eta^\star)\,u^\star = 0.
    \end{equation*}
    By assumption, $\nabla^2 \varphi^*(\eta^\star)$ is positive definite and hence invertible.
    Thus $u^\star = 0$, and
    \begin{equation*}
        0 \in \partial \Phi(x^\star).
    \end{equation*}
    Since $\Phi$ is convex, this implies that $x^\star$ is a global minimizer of $\Phi$.
    Finally, for every $\eta \in \intdom \varphi^*$, there exists $x \in \intdom \varphi$ such that $x = \nabla \varphi^*(\eta)$. Therefore,
    \begin{equation*}
        (\Phi \circ \nabla \varphi^*)(\eta)
        = \Phi(x)
        \geq \Phi(x^\star)
        = (\Phi \circ \nabla \varphi^*)(\eta^\star).
    \end{equation*}
    Hence $\eta^\star$ is a global minimizer of $\Phi \circ \nabla \varphi^*$.
\end{proof}
A consequence of this lemma is that the reparameterized objective has no spurious critical points: every critical point corresponds to a global minimizer of the original convex objective.
This property is typically associated with invex functions.

We can now transfer the full-sequence convergence result of \cite{bolteFirstOrderMethods2018} to the primal iterates of \glsxtrshort{rbpg}.
\begin{corollary}[Full convergence of the iterates {\cite[Theorem 4.1]{bolteFirstOrderMethods2018}}]
  \label{corollary:full-sequence}
  Suppose the assumptions of this appendix hold, including the
  \glsxtrshort{klp} property of $\Phi\circ\nabla\varphi^*$.
  Let $\left\{ x^{(k)} \right\}_{k \in \mathbb{N}}$ be the sequence generated by \glsxtrshort{rbpg} \eqref{eq:bpg-right} with $0<\tau < 1/L$.
  Then $\left\{ x^{(k)} \right\}_{k \in \mathbb{N}}$ converges to a minimizer of $\Phi$.
\end{corollary}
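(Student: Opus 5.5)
The plan is to apply \cite[Theorem 4.1]{bolteFirstOrderMethods2018} to the mirror iterates $\eta^{(k)}=\nabla\varphi(x^{(k)})$, which are exactly the non-convex \glsxtrshort{bpg} iterates for the reparameterized objective $\Phi\circ\nabla\varphi^* = f\circ\nabla\varphi^* + g\circ\nabla\varphi^*$ with potential $\varphi^*$. This identification was established in section~\ref{subsec:reparameterized-descent} via lemma~\ref{lem:left-right-prox}, in particular \eqref{eq:left-to-right}. The conclusion is then pulled back to primal coordinates and combined with lemma~\ref{lem:critical-to-global}.

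First I will check the hypotheses of that theorem one by one in mirror coordinates. The smooth part $f\circ\nabla\varphi^*$ is $L$-smooth relative to $\varphi^*$ by assumption~\ref{assumption:dual-relative-smoothness}. The nonsmooth part $g\circ\nabla\varphi^*$ is proper and \glsxtrshort{lsc} on $\mathbb{R}^n=\intdom\varphi^*$, since $\nabla\varphi^*$ is continuous and $g$ is \glsxtrshort{lsc} with domain meeting $\intdom\varphi$. The objective is bounded below by assumption~\ref{assumption:blanket}. Well-posedness of the proximal steps, which is \cite[assumptions B, C]{bolteFirstOrderMethods2018}, follows from assumption~\ref{assumption:dual-supercoercivity} and the full-domain hypothesis. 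The local strong convexity and Lipschitz requirements replacing \cite[assumption D]{bolteFirstOrderMethods2018} are assumption~\ref{assumption:subsequence-convergence}, using the bounded-set version as in \cite[theorem 12]{huraultConvergentPlugandplayMethods2023}. The \glsxtrshort{klp} property is assumed. Finally, the step size satisfies $0<\tau<1/L$.

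Next I need boundedness of $(\eta^{(k)})$. By the function-value corollary above, $\Phi\circ\nabla\varphi^*(\eta^{(k)})=\Phi(x^{(k)})\le\Phi(x^{(0)})$, so the mirror iterates lie in a sublevel set of $\Phi\circ\nabla\varphi^*$. That sublevel set is bounded by the coercivity in assumption~\ref{assumption:full-sequence-convergence}. Theorem 4.1 of \cite{bolteFirstOrderMethods2018} then gives finite length and convergence of $\eta^{(k)}$ to some $\eta^\star$ with $0\in\partial(\Phi\circ\nabla\varphi^*)(\eta^\star)$, in the limiting sense.

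Then lemma~\ref{lem:critical-to-global} applies, using $\varphi^*\in\mathcal C^2$ from assumption~\ref{assumption:subsequence-convergence} and $\nabla^2\varphi^*(\eta^\star)=[\nabla^2\varphi(x^\star)]^{-1}\succ0$ from proposition~\ref{prop:legendre-properties}. It shows that $x^\star=\nabla\varphi^*(\eta^\star)$ minimizes $\Phi$. Continuity of $\nabla\varphi^*$ on $\intdom\varphi^*$ then gives $x^{(k)}=\nabla\varphi^*(\eta^{(k)})\to x^\star$.

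The main obstacle is the faithful transfer of hypotheses. I must make sure that the version of Bolte et al.'s theorem with only local strong convexity of $\varphi^*$ still yields the relative-error subgradient bound needed in the \glsxtrshort{klp} argument. That bound requires Lipschitz continuity of $\nabla\varphi^*$ and $\nabla(f\circ\nabla\varphi^*)$ on a bounded set containing all iterates, and boundedness of the iterates is exactly what supplies it. I must also check that a set-valued selection of the prox does not break the sufficient-decrease step.
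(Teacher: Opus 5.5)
Your proposal follows the paper's proof: you apply \cite[Theorem 4.1]{bolteFirstOrderMethods2018} to the mirror iterates $\eta^{(k)}$, pull the limit back through the continuous map $\nabla\varphi^*$, and identify it as a minimizer of $\Phi$ via lemma \ref{lem:critical-to-global}. Your additional steps (checking the hypotheses one by one, and deriving boundedness of $(\eta^{(k)})$ from the monotone decrease and the coercivity in assumption \ref{assumption:full-sequence-convergence}) spell out requirements of that theorem that the paper's short proof leaves implicit.
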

\begin{proof}
  Theorem 4.1 of \cite{bolteFirstOrderMethods2018} states that the sequence $\left\{ \eta^{(k)} \right\}_{k \in \mathbb{N}}$ converges to a critical point $\eta^\star$ of $\Phi \circ \nabla \varphi^*$.
  Since $\nabla \varphi^*$ is continuous, the sequence $x^{(k)} = \nabla \varphi^*(\eta^{(k)})$
  converges to $x^\star = \nabla \varphi^*(\eta^\star)$, which is a minimizer of $\Phi$ by lemma \ref{lem:critical-to-global}.
\end{proof}

With respect to the iterates, this conclusion is stronger than the guarantees obtained in section \ref{sec:convergence-analysis-convex}, since it establishes convergence of the full sequence to a minimizer.
Rates for the iterates can also be derived from the \glsxtrshort{klp} exponent of the reparameterized objective.
The stronger conclusion comes at the cost of the additional assumptions introduced above.
We next examine their implications in the Poisson setting and show that they exclude several regularizers covered by the primal analysis of section \ref{sec:convergence-analysis-convex}.

\subsection{Application to Poisson inverse problems}
\label{appendix:sec:non-convex:subsec:poisson}

We now specialize the discussion to the case where $f$ is the Poisson \glsxtrshort{nll}, and where $\varphi$ is the weighted negative entropy defined in \eqref{eq:weighted-negative-entropy}.
In this case, the \glsxtrshort{rbpg} updates \eqref{eq:bpg-right} recover the regularized multiplicative updates \eqref{eq:regularized-multiplicative-updates}.

The first difficulty concerns the existence of the Bregman proximal step associated with the reparameterized regularizer $g \circ \nabla \varphi^*$.
Assumption \ref{assumption:dual-supercoercivity} ensures that its solution set is nonempty and compact through the supercoercivity of $\varphi^* + \tau g \circ \nabla \varphi^*$.
For the weighted negative entropy, however,
\begin{equation*}
  \varphi^*(\eta) = \sum_{i=1}^n s_i \exp\left(\frac{\eta_i}{s_i}\right), \quad \nabla \varphi^* (\eta) = \exp (\eta/s)
\end{equation*}
and $\varphi^*$ is not coercive along directions for which one or more components $\eta_i$ tend to $-\infty$.
In primal coordinates, these directions correspond to $x_i \to 0$.
The reparameterized regularizer $g \circ \nabla \varphi^*$ must therefore compensate for this lack of coercivity.
Equivalently, $g$ must penalize the approach to the boundary of the positive orthant strongly enough that $g \circ \nabla \varphi^*$ grows superlinearly as $\eta_i \to -\infty$.
This is a restrictive requirement.
It excludes many standard regularizers used in imaging, such as the $\ell_1$ norm, Tikhonov regularization, and \glsxtrlong{tv}, which do not blow up at the boundary of the positive orthant.
One example that satisfies this assumption is the indicator $g=\iota_C$ of a nonempty compact convex set $C\subset\mathbb{R}^n_{++}$.
Another example is the convex reciprocal barrier $g(x)=\lambda\sum_{i=1}^n x_i^{-1}$ with $\lambda>0$.

The local regularity assumptions of assumption \ref{assumption:subsequence-convergence} are less problematic.
For the weighted negative entropy, $\varphi^*$ is smooth and locally strongly convex on bounded subsets of $\mathbb{R}^n$.
Moreover, for the Poisson \glsxtrshort{nll}, the reparameterized function $f \circ \nabla \varphi^*$ is smooth on $\mathbb{R}^n$, and its gradient is locally Lipschitz under the standing positivity assumptions on $A$ and $y$.

The principal obstruction to full-sequence convergence is verifying the boundedness of the mirror iterates.
A sufficient condition is the coercivity of the reparameterized objective $\Phi \circ \nabla \varphi^*$.
This condition is again restrictive in the Poisson setting because the Poisson data-fidelity does not necessarily prevent components of $x = \nabla \varphi^*(\eta)$ from approaching zero.
Equivalently, it does not necessarily prevent the corresponding components of $\eta$ from tending to $-\infty$.
Thus, unless the regularizer penalizes the boundary of the positive orthant, the sequence $\left\{ \eta^{(k)} \right\}_{k \in \mathbb{N}}$ may fail to be bounded even when the corresponding sequence $\left\{ x^{(k)} \right\}_{k \in \mathbb{N}}$ remains in a bounded subset of $\mathbb{R}^n_+$.
The standard regularizers listed above do not ensure this property for a general Poisson forward operator.

Finally, full-sequence convergence requires the \glsxtrshort{klp} property of the reparameterized objective function $\Phi \circ \nabla \varphi^*$.
The Poisson data-fidelity is real analytic wherever $Ax>0$.
\begin{lemma}[\glsxtrshort{klp} property of the Poisson \glsxtrshort{nll}]
    The Poisson \glsxtrshort{nll} $f$ defined in \eqref{eq:poisson-nll} is real analytic and satisfies the \glsxtrshort{klp} property on $\mathbb{R}^n_{++}$.
\end{lemma}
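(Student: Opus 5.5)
The plan is to prove real analyticity first and then obtain the \glsxtrshort{klp} property from the general fact, recalled in the previous subsection, that real-analytic functions satisfy it \cite{bolteLojasiewiczInequalityNonsmooth2007}.

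\emph{Step 1: the linear part.} Write
\[
  f(x)=\langle s,x\rangle-\sum_{i=1}^m y_i\log (Ax)_i .
\]
The first term is linear in $x$, hence real analytic on $\mathbb{R}^n$.

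\emph{Step 2: positivity of $Ax$.} Each row of $A$ is nonnegative with at least one positive entry. Hence for $x\in\mathbb{R}^n_{++}$ we get $(Ax)_i>0$ for every $i$. So the affine map $x\mapsto (Ax)_i$ sends the open set $\mathbb{R}^n_{++}$ into $(0,+\infty)$.

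\emph{Step 3: the logarithmic part.} The logarithm is real analytic on $(0,+\infty)$. A composition of real-analytic maps is real analytic, so each $x\mapsto\log (Ax)_i$ is real analytic on $\mathbb{R}^n_{++}$. A finite linear combination of such functions remains real analytic, so $f$ is real analytic on $\mathbb{R}^n_{++}$.

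\emph{Step 4: the \glsxtrshort{klp} property.} Real-analytic functions on an open set satisfy the \L{}ojasiewicz gradient inequality near every point. This is \L{}ojasiewicz's classical theorem, and it is also covered by the subanalytic/o-minimal framework of \cite{bolteLojasiewiczInequalityNonsmooth2007}. Since $f$ is differentiable on $\mathbb{R}^n_{++}$, its limiting subdifferential there is simply $\{\nabla f(x)\}$. The \L{}ojasiewicz inequality, with desingularizing function $\psi(t)=c\,t^{1-\theta}$ and $\theta\in[0,1)$, then gives exactly the \glsxtrshort{klp} inequality at every point of $\mathbb{R}^n_{++}$.

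\emph{Main obstacle.} The only delicate point is Step 4. The \glsxtrshort{klp} property is local, and it is asserted only on the open orthant, not up to the boundary where $f$ can blow up. I would therefore state the property pointwise at each $\bar x\in\mathbb{R}^n_{++}$, on a neighborhood contained in $\mathbb{R}^n_{++}$. I would not attempt a uniform statement on the closed orthant.
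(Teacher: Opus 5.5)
Your proposal is correct and follows essentially the same route as the paper. Both establish real analyticity of $f$ on $\mathbb{R}^n_{++}$ by composing an analytic univariate function (for you, the logarithm plus a linear term; for the paper, $z\mapsto z-y_j\log z$) with the linear maps $x\mapsto (Ax)_j$, and then invoke the fact that real-analytic functions satisfy the \glsxtrshort{klp} property \cite{bolteLojasiewiczInequalityNonsmooth2007}. Your explicit check that $Ax\in\mathbb{R}^m_{++}$ for $x\in\mathbb{R}^n_{++}$, using that every row of $A$ has a positive entry, fills in a step the paper leaves implicit when it passes from ``wherever $(Ax)_j>0$'' to ``on $\mathbb{R}^n_{++}$''.
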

\begin{proof}
  Recall that
  \begin{equation*}
    f(x) = \sum_{j=1}^m \left[(Ax)_j - y_j \log (A x)_j\right].
  \end{equation*}
  For each $j$, the univariate function $h_j(z)=z-y_j\log z$ is real analytic on $(0,+\infty)$.
  Since $x\mapsto (Ax)_j$ is linear, each composition $h_j((Ax)_j)$ is real analytic wherever $(Ax)_j>0$.
  Their finite sum is therefore real analytic on $\mathbb{R}^n_{++}$.
  Real-analytic functions are locally subanalytic and satisfy the \glsxtrshort{klp} property, see \cite[Theorem 3.3]{bolteLojasiewiczInequalityNonsmooth2007}.
\end{proof}
The mirror map associated with the weighted negative entropy is also real analytic.
Whenever the complete objective $\Phi$ is real analytic on the positive orthant, the composition $\Phi \circ \nabla \varphi^*$ is real analytic and therefore satisfies the \glsxtrshort{klp} property.

In summary, the non-convex \glsxtrshort{bpg} framework provides a direct route to full-sequence convergence of \glsxtrshort{rbpg}.
For the Poisson model with the weighted negative entropy, however, the required coercivity assumptions force the regularizer to control the approach to the boundary of the positive orthant.
These assumptions exclude several standard imaging regularizers and are therefore more restrictive than those used in the primal convergence analysis of section \ref{sec:convergence-analysis-convex}.

\clearpage
\section{Deconvolution at a higher noise level}
\label{app:high-noise-deconvolution}

We repeat the comparison of section \ref{subsec:deconvolution} with gain $\gamma=10^{-1}$, corresponding to a measurement \glsxtrshort{snr} of approximately 7.54 dB.
We use the same Gaussian blur and 150 iterations, with $\lambda=0.1$ for both Tikhonov and \glsxtrshort{tv} regularization.
The reconstructions are shown in figure \ref{fig:deconvolution-reconstructions-2} and convergence metrics in figure \ref{fig:deconvolution-reconstructions-metrics-2}.
Mirror descent and \texttt{BPG+L2} use the same Armijo backtracking rule \eqref{eq:bpg-l2-armijo} as in the main comparison, with a unit initial trial step size.
The unregularized methods amplify noise, while \texttt{R-BPG+TV} mitigates the stronger noise artifacts and \texttt{OSL+TV} exhibits oscillatory behavior.

\begin{figure}[h]
  \centering
  \includegraphics[width=0.8\textwidth]{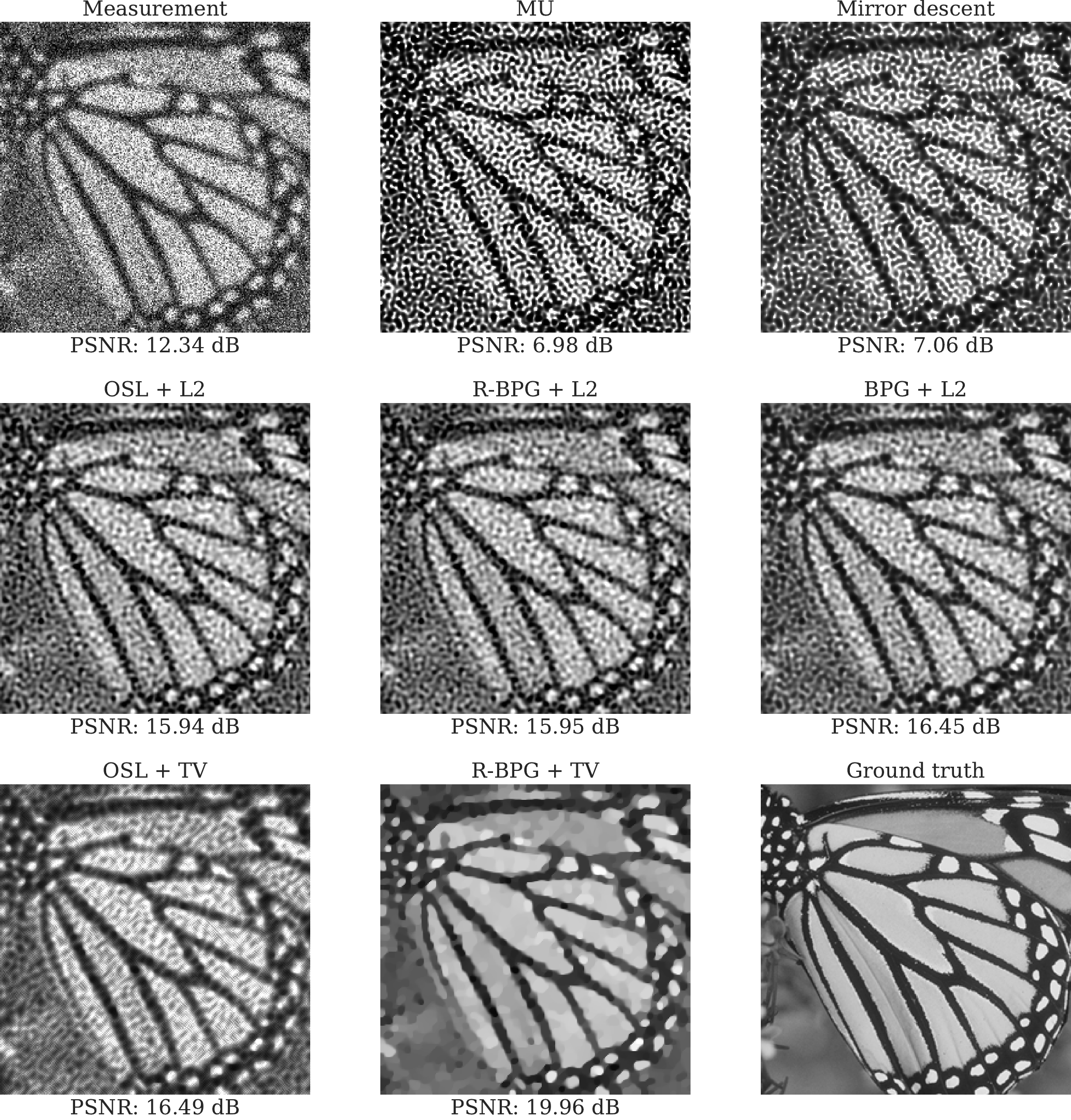}
  \caption{Non-blind deconvolution of the \texttt{butterfly} image from the \texttt{Set3C} dataset, with gain $\gamma = 10^{-1}$. Tikhonov and \glsxtrshort{tv} regularization both use $\lambda=0.1$.}
  \label{fig:deconvolution-reconstructions-2}
\end{figure}

\begin{figure}[h]
  \centering
  \includegraphics[width=\textwidth]{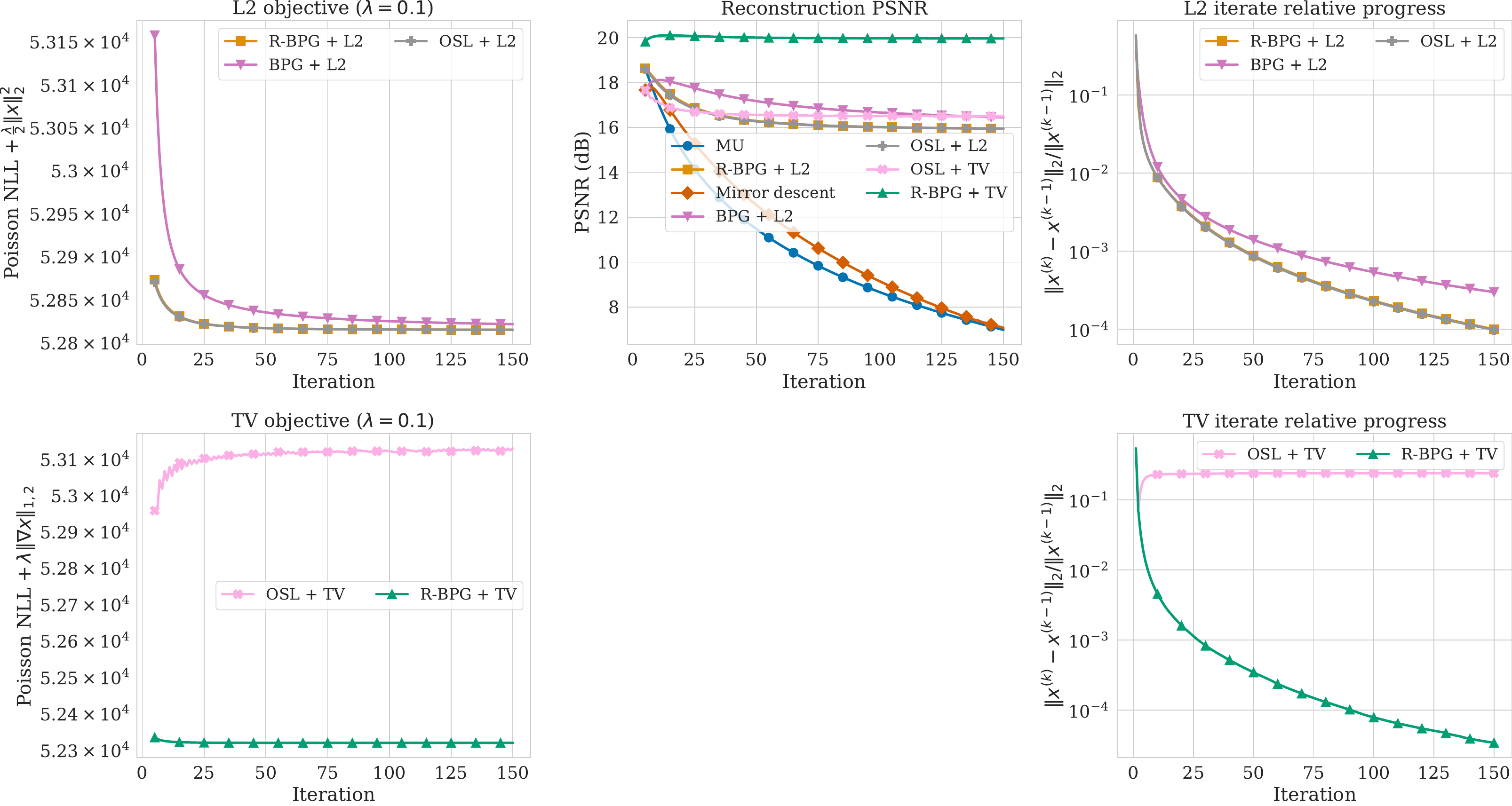}
  \caption{Composite objectives, \glsxtrshort{psnr} in dB, and relative distances between successive iterates for the non-blind deconvolution experiment with gain $\gamma = 10^{-1}$.}
  \label{fig:deconvolution-reconstructions-metrics-2}
\end{figure}

\clearpage
\bibliographystyle{siamplain}
\bibliography{references}

\end{document}